\newtheorem{proposition}{Proposition}
\newtheorem{corollary}{Corollary}
\newtheorem{theorem}{Theorem}
\begin{document}

\title{The semiclassical modified nonlinear Schr\"odinger equation II:  asymptotic analysis of
the Cauchy problem.  The elliptic region for transsonic initial data.}
\author{Jeffery C. DiFranco}
\address{Department of Mathematics, Seattle University, 901 12th Ave., P.O. Box 222000, Seattle, WA 98122}
\email{difranco@seattleu.edu}
\author{Peter D. Miller}
\address{Department of Mathematics, University of Michigan, East Hall, 530 Church St., Ann Arbor, MI 48109}  
\email{millerpd@umich.edu}
\urladdr{www.math.lsa.umich.edu/$\sim$millerpd}
\thanks{This work was partially supported by the National Science Foundation under grant numbers DMS-0807653 and DMS-1206131.}

\begin{abstract}
We begin a study of a multi-parameter family of Cauchy initial-value problems for the modified nonlinear Schr\"odinger equation, analyzing the solution in the semiclassical limit.  We use the inverse scattering transform for this equation, along with the steepest descent method of Deift and Zhou.  The initial conditions are selected both to allow all relevant scattering data to be calculated without approximation and also to place the governing equation in a transsonic state in which the quantum fluid dynamical system formally approximating it  is of hyperbolic type for some $x$ and of elliptic type for other $x$.  Our main result is a global approximation theorem valid in a maximal space-time region connected to the elliptic part of the initial data.
\end{abstract}
\maketitle

\section{Introduction}
Let $\epsilon$ and $\alpha$ be positive parameters.  The modified nonlinear Schr\"odinger (MNLS) equation
\begin{equation}
i\epsilon\frac{\partial\phi}{\partial t} +\frac{\epsilon^2}{2}\frac{\partial^2\phi}{\partial x^2} +
|\phi|^2\phi + i\alpha\epsilon\frac{\partial}{\partial x}\left(|\phi|^2\phi\right)=0,\quad x\in\mathbb{R},\quad t>0
\label{eq:MNLS}
\end{equation}
is a completely integrable generalization of the focusing nonlinear Schr\"odinger  equation, to which the MNLS equation reduces upon setting $\alpha=0$.  The perturbation term proportional to $\alpha$ models the effect of \emph{nonlinear dispersion} and is one of a suite of three terms arising in small-amplitude perturbation theory one order beyond the focusing nonlinear Schr\"odinger
equation in the analysis of short pulse propagation in weakly nonlinear optical fibers \cite{Doktorov02, Doktorov06, DoktorovK01}.  Although the MNLS equation appears to be a perturbation of the focusing nonlinear Schr\"odinger equation, there is a sense in which it can also be considered to be a perturbation of the defocusing nonlinear Schr\"odinger equation.  This is related to broken Galilean symmetry introduced for $\alpha\neq 0$, which also shows that the MNLS equation can be reduced to the so-called derivative nonlinear Schr\"odinger equation of Kaup and Newell \cite{KaupN78}.  The complete integrability of the MNLS equation is connected with its representation as the compatibility condition of a Lax pair based on the WKI spectral problem of Wadati, Konno, and Ichikawa \cite{WadatiKI79}.  The existence of a Lax pair representation makes it possible to analyze quite general solutions of the MNLS equation with remarkable accuracy; as an example, see the long-time asymptotic analysis of Kitaev and Vartanian \cite{KitaevV97, KitaevV99}.

The Cauchy initial-value problem for \eqref{eq:MNLS} is to find a solution $\phi$ subject to the initial condition:
\begin{equation}
\phi(x,0)=A_0(x)e^{iS_0(x)/\epsilon},\quad x\in\mathbb{R}.
\label{eq:MNLS-IC}
\end{equation}
Here, $A_0(\cdot)$ and $S_0(\cdot)$ are real-valued amplitude and phase functions respectively.
Under suitable conditions on these two functions, the Cauchy initial-value problem can be
studied with the help of an \emph{inverse-scattering transform} derived from Lax pair representation of the MNLS equation \eqref{eq:MNLS}.

Our main interest
is in the asymptotic behavior of $\phi=\phi_\epsilon(x,t)$ in the \emph{semiclassical limit} where $\epsilon\downarrow 0$ with
$\alpha>0$ and the functions $A_0(\cdot)$ and $S_0(\cdot)$ held fixed.  This limit is obviously very singular, but the problem can be recast in a form that appears somewhat more tractable by introducing Madelung's fluid dynamical variables \cite{Madelung26}:
\begin{equation}
\begin{split}
\rho_\epsilon(x,t)&:=|\phi_\epsilon(x,t)|^2,\quad\text{(a quantum fluid density)}\\
u_\epsilon(x,t)&:=\epsilon\Im\left\{\frac{\partial}{\partial x}\log(\phi_\epsilon(x,t))\right\},\quad\text{(a quantum fluid velocity)}.
\end{split}
\label{eq:Madelung}
\end{equation}
Here of course $\log(\phi_\epsilon(x,t))$ is defined (assuming $\rho_\epsilon(x,t)$ is non-vanishing) by continuation to be a differentiable function of $x$.
The initial values of these fields are independent of $\epsilon$, since according to \eqref{eq:MNLS-IC},
%We assume that
%$A:\mathbb{R}\to\mathbb{R}_+$ and $S:\mathbb{R}\to\mathbb{R}$ are
%real-analytic functions.   In terms of these, we define
%the auxiliary functions
\begin{equation}
\rho_\epsilon(x,0)=\rho_0(x):=A_0(x)^2\quad\text{and}\quad u_\epsilon(x,0)=u_0(x):=S_0'(x).
\label{eq:Madelung-IC}
\end{equation}
Most of what we will discuss in this paper has to do with quite special choices for the functions $A_0(\cdot)$ and $S_0(\cdot)$.  However, to ensure the validity of certain general asymptotic formulae for scattering data (see \eqref{eq:taudef} and \eqref{eq:PhidefWKB}), we assume that the functions 
$\rho_0(\cdot)$ and $u_0'(\cdot)$ are Schwartz-class real-analytic functions of $x\in\mathbb{R}$, and that $\rho_0(\cdot)$ is nowhere vanishing.  In particular, these assumptions imply that  $u_0(x)\to u_\pm$ as
$x\to\pm\infty$.  Then
\begin{equation}
S_0(x)=S_0(0)+\int_0^x u_0(y)\,dy \quad\implies\quad
S_0(x)=u_\pm x + S_\pm + o(1),\quad x\to\pm\infty,
\end{equation}
where
\begin{equation}
S_+:=S_0(0)+\int_0^{+\infty}\left[u_0(y)-u_+\right]\,dy\quad\text{and}\quad
S_-:=S_0(0)-\int_{-\infty}^0\left[u_0(y)-u_-\right]\,dy.
\end{equation}

In terms of the Madelung fields defined by \eqref{eq:Madelung}, the MNLS equation \eqref{eq:MNLS} can be rewritten without approximation in the form of a coupled system:
\begin{equation}
\frac{\partial\rho_\epsilon}{\partial t} +\frac{\partial}{\partial x}\left(\rho_\epsilon u_\epsilon +
\frac{3}{2}\alpha\rho_\epsilon^2\right)=0\quad\text{and}\quad
\frac{\partial u_\epsilon}{\partial t} +\frac{\partial}{\partial x}\left(\frac{1}{2}u_\epsilon^2-\rho_\epsilon +\alpha\rho_\epsilon u_\epsilon\right)=\frac{1}{2}\epsilon^2\frac{\partial F[\rho_\epsilon]}{\partial x},
\label{eq:MNLSrewrite}
\end{equation}
where $F[\rho_\epsilon]$ is given by the differential rational expression
\begin{equation}
F[\rho_\epsilon]:=
\frac{1}{2\rho_\epsilon}\frac{\partial^2\rho_\epsilon}{\partial x^2}-\left(\frac{1}{2\rho_\epsilon}\frac{\partial\rho_\epsilon}{\partial x}\right)^2.
\label{eq:DispersiveCorrection}
\end{equation}
The only explicit dependence on $\epsilon$ in either the equations of motion \eqref{eq:MNLSrewrite} or the initial conditions \eqref{eq:Madelung-IC} lies in the dispersive correction term on the right-hand side of the equation governing $u_\epsilon$ in \eqref{eq:MNLSrewrite}, and it seems reasonable to try to neglect that term
and solve the corresponding quasilinear system of local conservation laws, the \emph{dispersionless MNLS system}
\begin{equation}
\frac{\partial\rho}{\partial t} +\frac{\partial}{\partial x}\left(\rho u +
\frac{3}{2}\alpha\rho^2\right)=0\quad\text{and}\quad
\frac{\partial u}{\partial t} +\frac{\partial}{\partial x}\left(\frac{1}{2}u^2-\rho +\alpha\rho u\right)=0,
\label{eq:DispersionlessMNLS}
\end{equation}
with the $\epsilon$-independent initial data $\rho(x,0)=\rho_0(x)$ and $u(x,0)=u_0(x)$ given by\eqref{eq:Madelung-IC}.
Although this seems attractive, proving that such a procedure yields an accurate approximation when $\epsilon\ll 1$ is not at all straightforward for several reasons.  Perhaps the greatest obstruction lies in the fact that the dispersionless MNLS system \eqref{eq:DispersionlessMNLS} is not strictly hyperbolic, leading to a certain ill-posedness of the $\epsilon$-independent Cauchy initial value problem purported to approximate the true dynamics.  This means that without strong assumptions on the initial data, there may exist no corresponding solution of \eqref{eq:DispersionlessMNLS} at all.  Even if there is a solution, linearization about a constant state reveals unbounded exponential growth rates corresponding to modes that are likely to be seeded by the dispersive correction term \eqref{eq:DispersiveCorrection}; it is not obvious at all in such a situation whether the solution for small nonzero $\epsilon$ should resemble for $t$ strictly positive that obtained by simply setting $\epsilon=0$.

The characteristic velocities of the dispersionless MNLS system \eqref{eq:DispersionlessMNLS} satisfy a quadratic equation with real coefficients depending on the local values of the Madelung variables $\rho$ and $u$.  The discriminant of this quadratic is proportional via a positive factor to 
\begin{equation}
Q:=\alpha^2\rho+\alpha u -1.
\label{eq:Qdefine}
\end{equation}
At any given $t\ge 0$ the fields $\rho(\cdot,t)$ and $u(\cdot,t)$ can cause the system to be in any one of three different states (with terminology borrowed from the language of stationary flows in gas dynamics):
\begin{itemize}
\item
\emph{Globally supersonic}.  In this case the dispersionless MNLS system is strictly hyperbolic, corresponding to $Q>0$ and hence real distinct characteristic velocities, for all $x\in\mathbb{R}$.
\item
\emph{Globally subsonic}.  In this case, the dispersionless MNLS system is elliptic, corresponding to $Q<0$ and hence distinct complex-conjugate characteristic velocities, for all $x\in\mathbb{R}$.
\item 
\emph{Transsonic}.  In this case, the dispersionless MNLS system is hyperbolic for some values of $x$ and elliptic for others, and $Q$ changes sign as a function of $x$.
\end{itemize}
While the correspondence is not the obvious one, it is shown in \cite{DiFrancoMM11} that
for initial data \eqref{eq:Madelung-IC} that makes the dispersionless MNLS system \eqref{eq:DispersionlessMNLS} globally supersonic (respectively subsonic) at $t=0$, the semiclassical analysis of the MNLS Cauchy problem can essentially be reduced to that of the defocusing (respectively focusing) cubic nonlinear Schr\"odinger equation.
This result generalizes to arbitrary genus a fact that was found for genus zero in \cite{DiFrancoM08} and for genus one in \cite{KuvshinovL94}.

Evidently only for transsonic initial data might the MNLS equation \eqref{eq:MNLS} behave unlike either the focusing or defocusing cubic nonlinear Schr\"odinger equation in the semiclassical limit.
\emph{This observation is the key motivation for the study that we begin in this paper.}

To analyze the solution of the MNLS equation \eqref{eq:MNLS} subject to suitable initial data of the form
\eqref{eq:MNLS-IC} by means of the inverse-scattering transform, one must in general study 
the small-$\epsilon$ asymptotics of two types of problems:
\begin{itemize}
\item The direct scattering problem.  This amounts to the asymptotic analysis of a singularly perturbed
linear differential equation (the WKI spectral problem).  The natural tools applicable to this problem include the classical WKB method and its generalizations.
\item The inverse scattering problem.  This can be formulated as a matrix Riemann-Hilbert problem (see \S\ref{sec:RHP}) with a highly oscillatory or rapidly exponentially growing and decaying jump matrix.  The natural tools applicable to this problem include the noncommutative steepest descent method of Deift and Zhou \cite{DeiftZ93} and its generalizations (in particular the mechanism of the so-called \emph{$g$-function} first introduced in \cite{DeiftVZ97}).
\end{itemize}
Although the asymptotic analysis of the direct scattering problem can be carried out in part using classical methods, what really makes the problem difficult is (i) that one requires asymptotics that are uniform with respect to the spectral parameter, including near exceptional values where turning points collide, and (ii) that for some initial data it is required to approximate some exponentially small quantities that --- like the semiclassical above-barrier reflection coefficient for the Schr\"odinger operator --- cannot be suitably estimated without assumptions of analyticity and the use of turning point analysis in the complex $x$-plane.  For a flavor of the type of calculations required to analyze the WKI direct scattering problem for the MNLS equation in the semiclassical limit, see \cite{DiFrancoMM11}.  It should be stressed however, that with very few exceptions in the literature\footnote{A notable exception is the analysis of the Korteweg-de Vries equation in the small dispersion limit by Claeys and Grava \cite{ClaeysG09,ClaeysG10a,ClaeysG10b}.  Of course in this case the direct scattering problem corresponds to the well-studied and self-adjoint Schr\"odinger operator, and sufficiently accurate semiclassical approximations of the scattering data for this problem were obtained (only fairly recently) by Ramond \cite{Ramond96}.} the asymptotic information that is available from the direct problem is frequently of insufficient quality and/or accuracy to continue with the inverse scattering problem without unjustified formal approximations.  This is especially true when (as for the WKI problem or the Zakharov-Shabat problem for the focusing nonlinear Schr\"odinger equation) the direct spectral problem cannot be cast as an eigenvalue problem for a self-adjoint operator.

On the other hand, the completely rigorous asymptotic analysis of matrix Riemann-Hilbert problems is now a rather well-developed science.  In order to bypass the difficulties mentioned above in regard to the direct problem and hence enable rigorous analysis of the Cauchy problem for the MNLS equation \eqref{eq:MNLS}, we will choose the initial condition functions $A_0(x)$ and $S_0(x)$ to be of the specific form
\begin{equation}
A_0(x)=\nu\,\mathrm{sech}(x)\quad\text{and}\quad S_0(x)=S_0(0) + \delta x + \mu\log(\cosh(x))
\label{eq:data}
\end{equation}
for real parameters $\nu\neq 0$, $S_0(0)$, $\delta$, and $\mu$.  The corresponding initial values for the Madelung fields are
\begin{equation}
\rho_0(x)=\nu^2\mathrm{sech}^2(x)\quad\text{and}\quad u_0(x)=\delta + \mu\tanh(x).
\label{eq:Madelungdata}
\end{equation}
In fact, without loss of generality, we will assume that $\nu=1$ and $S_0(0)=0$, as this can be accomplished by making the substitutions
\begin{equation}
\phi = \nu e^{iS_0(0)/\epsilon}\tilde{\phi},\quad\epsilon =| \nu|\tilde{\epsilon},\quad\alpha = |\nu|^{-1}\tilde{\alpha},\quad t=|\nu|\tilde{t}
\end{equation}
($x$ is not scaled) in both \eqref{eq:MNLS} and \eqref{eq:data} and dropping the tildes.  Unlike the cubic nonlinear Schr\"odinger equation, the MNLS equation is not invariant under Galilean boosts \cite{DiFrancoM08} so the parameter $\delta$ cannot be removed even by going to a moving frame of reference.  Therefore, both $\delta$ and $\mu$ are  essential parameters in the family of initial data of the form \eqref{eq:data}.  The advantage of working with the  family \eqref{eq:data} is that this choice renders the direct scattering problem for the MNLS equation equivalent for all $\epsilon>0$ to a hypergeometric equation, and this in turn allows the corresponding scattering data to be obtained without any approximation whatsoever \cite{DiFrancoM08}. 

For the initial data \eqref{eq:Madelungdata} with $\nu=1$, the quantity $Q$ defined by  \eqref{eq:Qdefine} becomes (for $t=0$)
\begin{equation}
Q(x)=\alpha^2\mathrm{sech}^2(x) +\alpha\delta +\alpha\mu\tanh(x)-1=-\alpha^2T^2 + \alpha\mu T +\alpha^2+\alpha\delta-1,\quad T:=\tanh(x).
\label{eq:specialQ}
\end{equation}
To be in the transsonic case, we need $Q(x)$ to change sign.  As a quadratic function of $T=\tanh(x)$ it is clear that there are at most two roots.  For simplicity we want to arrange that there is exactly one simple root of this quadratic in the interval $T\in (-1,1)$, and this will occur if and only if it takes opposite signs for $T=\pm 1$; hence we assume the condition
\begin{equation}
\alpha |\mu|>|1-\alpha\delta|.
\label{eq:condition-transsonic}
\end{equation}
For technical reasons that we will explain in \S\ref{sec:analytic-properties} we also assume the following two conditions:
\begin{equation}
\mu^2>4(1-\alpha\delta),
\label{eq:condition-no-eigenvalues}
\end{equation}
and
\begin{equation}
\mu>0.
\label{eq:condition-no-zeros}
\end{equation}
If we introduce the combined parameters
\begin{equation}
A:=\frac{1-\alpha\delta}{4\alpha^2}\quad\text{and}\quad B:=\frac{\mu}{4\alpha},
\label{eq:ABdefs}
\end{equation}
then conditions \eqref{eq:condition-transsonic}--\eqref{eq:condition-no-zeros} imply that
$(A,B)$ should occupy the shaded region in the diagram pictured in Figure~\ref{fig:ABplane}.
\begin{figure}[h]
\begin{center}
\includegraphics{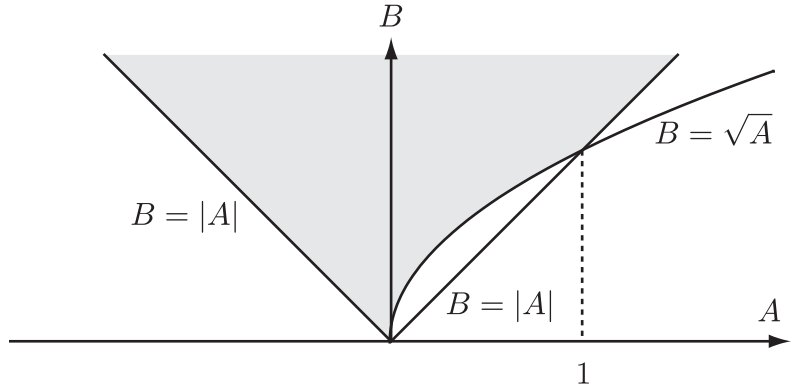}
\end{center}
\caption{The shaded region corresponds to the inequalities \eqref{eq:condition-transsonic}--\eqref{eq:condition-no-zeros}.}
\label{fig:ABplane}
\end{figure}
As an example, some specific parameter values that satisfy these conditions are $\alpha=\delta=1$ and $\mu=2$, corresponding to $A=0$ and $B=1/2$.  Under the inequalities
\eqref{eq:condition-transsonic}--\eqref{eq:condition-no-zeros}, the unique simple root in the interval $(-1,1)$ of the quadratic \eqref{eq:specialQ} is exactly
\begin{equation}
T=T_\mathrm{c}:=2B-\sqrt{4B^2-4A+1}
\label{eq:Tcrit}
\end{equation}
(the other root exceeds $T=1$ for $(A,B)$ in the admissible region), and it follows that when $t=0$, $Q<0$ (elliptic case) for $x<x_\mathrm{c}$ and $Q>0$ (hyperbolic case) for $x>x_\mathrm{c}$, where $x_\mathrm{c}:=\mathrm{arctanh}(T_\mathrm{c})$.  Our main result is then the following.
\begin{theorem}
\label{thm:main}
Let $\phi_\epsilon(x,t)$ denote the solution of the Cauchy initial-value problem for the MNLS equation \eqref{eq:MNLS} subject to initial data of the form \eqref{eq:MNLS-IC} with $A_0(\cdot)$
and $S_0(\cdot)$ given by \eqref{eq:data}, where (without loss of generality) $\nu=1$ and $S_0(0)=0$,
and where $\alpha>0$, $\delta$, and $\mu$ are subject to the inequalities \eqref{eq:condition-transsonic}--\eqref{eq:condition-no-zeros}.  Then there exists a smooth curve $x=x_\mathrm{c}(t)$, $t\ge 0$, with $x_\mathrm{c}(0)=x_\mathrm{c}$ such that for all $t\ge 0$ and all $x<x_\mathrm{c}(t)$,
\begin{equation}
\phi_\epsilon(x,t)=A(x,t)e^{iS(x,t)/\epsilon} + O(\epsilon),\quad\epsilon\to 0,\quad \epsilon>0,
\end{equation}
where the error term is uniform for $(x,t)$ in compact subsets
%\textcolor{red}{(Can this be improved to include some decay as $x\to -\infty$?  Probably not\dots)} 
and where $A(x,t)$ and $S(x,t)$ are smooth, real-valued functions independent of $\epsilon$ that satisfy 
$A(x,0)=A_0(x)$ and $S(x,0)=S_0(x)$.  Also, $Q<0$ holds strictly for all $x<x_\mathrm{c}(t)$ and $t\ge 0$, while $Q\to 0$ as $x\to x_\mathrm{c}(t)$, $x<x_\mathrm{c}(t)$.  Finally, whenever $x<x_\mathrm{c}(t)$ and $t>0$, the Madelung-type fields 
\begin{equation}
\rho(x,t):=A(x,t)^2\quad\text{and}\quad u(x,t):=\frac{\partial S}{\partial x}(x,t)
\end{equation}
exactly satisfy the dispersionless MNLS system \eqref{eq:DispersionlessMNLS}.
\end{theorem}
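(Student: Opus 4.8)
The plan is to study the Cauchy problem through the inverse-scattering transform for the MNLS equation \eqref{eq:MNLS}, which (see \S\ref{sec:RHP}) expresses $\phi_\epsilon(x,t)$ in terms of the solution of a $2\times2$ matrix Riemann--Hilbert problem (RHP) whose jump matrix is built from the reflection coefficient $r_\epsilon(\lambda)$ and---under \eqref{eq:condition-no-eigenvalues}---from no discrete data at all, so that the RHP has no poles. Because the initial data \eqref{eq:data} turns the WKI direct scattering problem into a hypergeometric equation, $r_\epsilon(\lambda)$ is known exactly; its $\epsilon$-dependence resides entirely in explicitly known oscillatory and exponentially large or small factors governed by a phase $\Phi$ and a rate $\tau$ (cf. \eqref{eq:taudef} and \eqref{eq:PhidefWKB}), together with the elementary $(x,t)$-dependence of the plane-wave part of the jump. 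The first step is therefore to conjugate this fully explicit RHP by a scalar $g$-function and then to open lenses in the manner of Deift--Zhou, reducing the whole analysis to controlling the signs of the real parts of certain combinations of $g$, $\Phi$, and $\tau$ on the deformed contours.

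The crux of the argument is the construction, for $t\ge0$ and $x$ to the left of a curve to be identified, of a \emph{genus-zero} $g$-function: one selects a single band, an analytic arc $\Sigma=\Sigma(x,t)$ in the $\lambda$-plane joining a complex-conjugate pair of endpoints $a(x,t)$ and $\overline{a(x,t)}$, the conjugate-pair structure being the spectral signature of the elliptic (subsonic) state, so that $g$ is analytic off $\Sigma$, satisfies on $\Sigma$ the jump relation tying it to $\Phi$ and $\tau$, has the correct normalization at $\lambda=\infty$ and the correct local behaviour at $\lambda=0$ (here \eqref{eq:condition-no-zeros} is used to keep the associated scalar problem solvable), and, above all, produces the required inequalities: strict exponential decay of the off-band jumps along the lens boundaries and unit-modulus oscillation on $\Sigma$ itself. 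The endpoint $a(x,t)$ is fixed by a pair of real moment (Boutroux-type) conditions together with a square-root vanishing condition for the relevant differential at $a$, and at $t=0$ these conditions must collapse to the trivial configuration in which no deformation is needed and the reconstruction returns $(\rho_0,u_0)$ exactly. One then \emph{defines} $x=x_\mathrm{c}(t)$ as the boundary of the maximal set on which this genus-zero data exists with all inequalities strict; matching at $t=0$ forces $x_\mathrm{c}(0)=x_\mathrm{c}=\mathrm{arctanh}(T_\mathrm{c})$ with $T_\mathrm{c}$ as in \eqref{eq:Tcrit}. I expect this to be the principal difficulty: proving that the genus-zero inequalities persist on a full region of the form $\{x<x_\mathrm{c}(t)\}$ bounded by one smooth curve, rather than degrading along some intricate set, and identifying the failure mechanism---a lens inequality going slack, signalling the incipient birth of a second band---with the vanishing of $Q$.

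With a valid genus-zero $g$-function in hand the remaining steps follow the established template. After opening lenses and discarding the exponentially small jumps one is left with a model problem posed on $\Sigma$; its global (outer) parametrix is solved by elementary functions---powers of a Cauchy integral over the band, with no Riemann theta functions needed at genus zero---and near the endpoints $a$ and $\overline{a}$ one installs Airy parametrices matching the outer one to leading order. A standard small-norm argument then shows that the ratio of the exact RHP solution to the global parametrix is $I+O(\epsilon)$ uniformly on compact subsets of $\{x<x_\mathrm{c}(t)\}$. Pushing the reconstruction formula for $\phi_\epsilon$ through these transformations produces $\phi_\epsilon(x,t)=A(x,t)e^{iS(x,t)/\epsilon}+O(\epsilon)$, with $A$ read off from the outer parametrix and the endpoint data and $S$ assembled from $g$ and $\Phi$; both are real-analytic in $(x,t)$ because $a(x,t)$ is, by the implicit function theorem applied to the endpoint conditions. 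Specializing to $t=0$, where the deformation is trivial, returns $A(x,0)=A_0(x)$ and $S(x,0)=S_0(x)$.

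It remains to verify that $\rho=A^2$ and $u=\partial_x S$ solve the dispersionless MNLS system \eqref{eq:DispersionlessMNLS} when $x<x_\mathrm{c}(t)$ and $t>0$, and to establish the asserted behaviour of $Q$. For the first point the plan is to recognize the moment conditions defining $a(x,t)$ as the genus-zero Whitham (hodograph) equations associated with \eqref{eq:MNLS}, and to show that at genus zero these are equivalent to \eqref{eq:DispersionlessMNLS}: the pair $a,\overline{a}$ plays the role of the complex-conjugate Riemann invariants of \eqref{eq:DispersionlessMNLS} in its elliptic regime, $\rho$ and $u$ are explicit algebraic functions of $a$ and $\overline{a}$, and the hodograph relation---an identity of the form $x=$ (an expression linear in $t$ with coefficients depending on $a$)---differentiates into the two conservation laws of \eqref{eq:DispersionlessMNLS}. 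Finally, the strict inequality $Q<0$ on $\{x<x_\mathrm{c}(t)\}$ and the limit $Q\to0$ as $x\to x_\mathrm{c}(t)$ from the left follow once $x_\mathrm{c}(t)$ is identified, through the genus-zero breakdown, with the locus where the two characteristic velocities of \eqref{eq:DispersionlessMNLS} coincide---which by \eqref{eq:Qdefine} is precisely $\{Q=0\}$, with $\{Q<0\}$ the elliptic side.
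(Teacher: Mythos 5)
Your overall architecture matches the paper's: convert to a Riemann--Hilbert problem with no poles, construct a genus-zero $g$-function whose band $\mathcal{B}$ joins a conjugate pair $q,q^*$ determined by two real moment conditions, define $x_\mathrm{c}(t)$ by the collapse of $\Im\{q\}$, install Airy local parametrices at $q,q^*$, run a small-norm argument, and then recognize the moment conditions as elliptic Whitham/hodograph equations, which gives \eqref{eq:DispersionlessMNLS} in Riemann-invariant form. Two of the paper's essential ideas are missing, and the first one would make your proof fail as written.

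\textbf{Phantom poles and the non-standard lens opening.} You write ``open lenses in the manner of Deift--Zhou'' and treat the reflection coefficient as if its $\epsilon$-dependence were captured entirely by $\Phi$ and $\tau$. It is not: the exact $s(z)$ from \eqref{eq:s-define} has an infinite array of simple \emph{phantom poles} with spacing $\alpha\epsilon/2$ accumulating along the vertical ray $\Re\{z\}=z_\mathrm{P}$, and the Stirling approximation of $s(z)$ fails uniformly in a strip about that ray. The standard two-factor factorization \eqref{eq:two-factor} (as in Tovbis--Venakides--Zhou) therefore cannot be pushed into the upper half-plane past $z_\mathrm{P}$ without either crossing those poles or introducing a nonexplicit, $\epsilon$-dependent local parametrix there, which is exactly the complication the paper is designed to avoid. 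The paper's lens opening instead extends not $s(z)$ but the pole-free combination $\tilde{E}(z)=(1+e^{2f(z)/\epsilon})s(z)e^{-i\Phi(z)/\epsilon}$ into the upper half-plane, which is analytic everywhere and satisfies $\tilde{E}(z)=1+O(\epsilon)$ uniformly. This yields the more elaborate jump factorizations \eqref{eq:cxPjumpzpositive2}--\eqref{eq:cxPjumpzpositive3} in the strip $|z-z_\mathrm{P}|<\eta$, all of which are then shown to be exponentially close to the identity, with \emph{no} local parametrix near $z_\mathrm{P}$. A proof that stays with the literal Deift--Zhou factorization has a genuine gap at this step.

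\textbf{Global-in-time validity.} You defer the principal difficulty (``proving that the genus-zero inequalities persist on a full region \dots bounded by one smooth curve''), but this is the place where the theorem earns its global-in-time scope, and a specific mechanism is needed. The paper's engine is the auxiliary function $Y(z;q,q^*)$ with $2h'(z)=-S(z)Y(z;q,q^*)$: Proposition~\ref{prop:Yzeroscx} applies the argument principle to the slit domain $\mathbb{C}\setminus((-\infty,z_\mathrm{L}]\cup[0,+\infty))$ to show that for $t>0$ the function $Y$ has \emph{exactly one} zero, necessarily real and in $(z_\mathrm{L},0)$. This single-zero count has two consequences you would need: the Jacobian of the moment conditions is proportional to $|Y(q;q,q^*)|^2\Im\{q\}$, so the endpoint continues smoothly as long as $\Im\{q\}>0$; and $\Im\{h\}$ has no interior saddles for $\Im\{z\}>0$, which is what lets the zero level set $\mathcal{Z}$ be assembled into the three required arcs and hence gives the sign inequalities \eqref{eq:Imh-Mountain}--\eqref{eq:Imh-Valley} needed for the steepest descent. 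Without this count, you have no way to rule out a ``tritronqu\'ee-like'' degeneration of $\mathcal{Z}$ at some positive time. A smaller but real loose end in the same direction: your claim ``matching at $t=0$ forces $x_\mathrm{c}(0)=x_\mathrm{c}$'' is not automatic; the paper's Proposition~\ref{prop:xc-continue} establishes $z_\mathrm{c}(0)=z_\mathrm{c}$ by appealing to the degenerate (two-real-branch-points) configuration analyzed in the companion paper \cite{DiFrancoM12b}, and only then combines this with Proposition~\ref{prop:momentscxtzero} to conclude $x_\mathrm{c}(0)=x_\mathrm{c}$.

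Finally, one technical detail worth flagging: your inner parametrix is standard Airy, but because the lens-opening matrices carry the factor $\tilde E(z)^{\pm\sigma_3/2}$, the paper's inner parametrix must be post-multiplied by a near-identity diagonal factor $\mathbf{D}(z)$ to reproduce the exact jumps in $D$; this is a small but nonoptional modification of the usual template.
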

The functions $A(x,t)$ and $S(x,t)$ will be specified precisely as part of the proof; see \eqref{eq:A-S-xt}.  Significantly, these functions are obtained without directly analyzing the dispersionless MNLS system, but rather arise from the solution of certain algebraic equations (the expressions given in \eqref{eq:A-S-xt} depend on the solution of \eqref{eq:momentscx} and the identities \eqref{eq:kappa-x} and \eqref{eq:KappaCrit}).  Our result is 
interesting in part because it is a \emph{global} description of the solution $\phi_\epsilon(x,t)$
valid for all time $t\ge 0$ in a semi-infinite spatial domain $x<x_\mathrm{c}(t)$.   Unlike the semiclassical limit for the focusing nonlinear Schr\"odinger equation (for which the dispersionless system analogous to \eqref{eq:DispersionlessMNLS} is elliptic for all initial data) \cite{KamvissisMM03,TovbisVZ04}, the solution of the MNLS equation with the initial data under consideration does not ``break'' at some caustic curve beyond which the Madelung ansatz fails and a higher-genus multiphase behavior ensues.  Indeed, the boundary curve $x=x_\mathrm{c}(t)$ is merely
a \emph{sonic line} across which the asymptotic solution changes from subsonic type (for $x<x_\mathrm{c}(t)$) to supersonic type (for $x>x_\mathrm{c}(t)$).  While we do not give the proof in this paper, the sonic transition turns out not to be catastrophic for the Madelung fields, which extend smoothly into the supersonic region (although in this region the Madelung theory turns out to no longer be globally valid in time, that is, the MNLS solution ``breaks'' in the hyperbolic region for the dispersionless MNLS system).  The analysis corresponding to the supersonic region $x>x_\mathrm{c}(t)$ is the subject of our forthcoming paper \cite{DiFrancoM12b} in this series.

From the point of view of technique, this paper contains several novelties that we wish to point out.
The basic approach is to formulate the Riemann-Hilbert problem of inverse-scattering for the MNLS Cauchy problem with available explicit formulae for the scattering data (the reflection coefficient is the key quantity).  Being as the formula for the reflection coefficient involves the Euler gamma function with arguments proportional to the large parameter $\epsilon^{-1}$, one can hardly resist applying Stirling-type approximations.  While much of the analysis corresponds to working with these approximations, we are very careful about not discarding any error terms until they can be rigorously controlled (which always occurs at the very end of the process in the steepest descent method).  In particular, note that the function $\phi_\epsilon(x,t)$ in Theorem~\ref{thm:main} \emph{is the exact solution of the Cauchy initial-value problem for the specified initial data}; we do not replace the initial condition with a nearby function corresponding to approximate scattering data as is a common approach in the subject (see, for example \cite{SG1},
\cite{KamvissisMM03}, and \cite{TovbisVZ04}).  There are points where the Stirling approximation fails (these correspond to the exceptional values of the spectral parameter where turning points collide in WKB theory), and we deal with these points with the help of a new kind of ``lens-opening'' deformation of the Riemann-Hilbert problem that completely sidesteps the need for any local analysis near these points.  This same technique also completely solves certain problems arising from the fact that in cases where one is lucky enough that the reflection coefficient admits an analytic continuation away from the real axis, there frequently exist ``phantom poles'' that do not correspond to eigenvalues of the spectral problem but nonetheless that obstruct the sort of contour deformations that are at the heart of the steepest descent method of Deift  and Zhou.

In \S\ref{sec:scattering-data} we summarize the key facts about the scattering data for the WKI
spectral problem in the case that the initial conditions for the MNLS equation are given in the special form \eqref{eq:data}, incidentally correcting a sign error in the original reference \cite{DiFrancoM08}.  Then in \S\ref{sec:RHP} we recall the formulation of the inverse scattering problem for the MNLS equation as a Riemann-Hilbert problem of analytic matrix factorization.
To prepare the Riemann-Hilbert problem for analysis in the semiclassical limit we need to introduce an appropriate ``$g$-function'' $g(z)$ and a closely related function $h(z)$; these functions are constructed and completely characterized in \S\ref{sec:gfunction}.  We should point out that it is the count of the zeros of the auxiliary function $Y(z)$ related to derivatives of $g$ and $h$ (see Proposition~\ref{prop:Yzeroscx} in \S\ref{sec:gfunction}) that ultimately gives rise to the global-in-time nature of our main result.  Finally, in \S\ref{sec:RHP-deform} we use the $g$-function along with the new ``lens-opening'' method to reduce the Riemann-Hilbert problem formulated in \S\ref{sec:RHP} to a form where it can be treated by means of Neumann series for an associated system of singular integral equations of small-norm type.  The estimates resulting from this analysis complete the proof of Theorem~\ref{thm:main}.

%\textcolor{red}{Describe the notation we use.  Stress that we are always very serious about power functions denoting the principal branch wherever they appear.}
Regarding notation, we use $\Re\{z\}$ and $\Im\{z\}$ to denote the real and imaginary parts, respectively, of a complex number $z$.  We will typically use the symbol $\eta$ to refer to an arbitrarily small positive quantity.
We write all matrices in boldface (e.g. $\mathbf{M}$), with the exception of the Pauli spin matrices:
\begin{equation}
\sigma_1:=\begin{bmatrix}0 & 1\\1 & 0\end{bmatrix},\quad
\sigma_2:=\begin{bmatrix}0 & -i\\i & 0\end{bmatrix},\quad
\sigma_3:=\begin{bmatrix}1 & 0 \\ 0 & -1\end{bmatrix}.
\end{equation}
In particular, the expression $a^{b\sigma_3}$ appears frequently; this is simply the diagonal matrix 
\begin{equation}
a^{b\sigma_3}:=\begin{bmatrix}a^b & 0\\0 & a^{-b}\end{bmatrix},\quad a,b\in\mathbb{C}.
\end{equation}
If $f(z)$ denotes a harmonic or analytic scalar or matrix-valued function defined on the complement of an oriented arc $\mathcal{A}$ in some neighborhood of the complex $z$-plane, then for $z\in\mathcal{A}$ we define $f_+(z)$ (respectively $f_-(z)$) as the boundary value taken by $f(w)$ as $w\to z\in\mathcal{A}$ from the left (respectively right) as $\mathcal{A}$ is traversed according to its orientation.  Throughout our paper, when we write $z^p$ for generally complex $z$ and $p$ we \emph{always} mean the principal branch defined by $z^p:=e^{p\log(z)}$ with $-\pi<\Im\{\log(z)\}<\pi$.  Finally, we denote complex conjugation with an asterisk:  $z^*$.
 
\section{The Scattering Data and Properties Thereof}
\label{sec:scattering-data}
\subsection{Exact Formulae}
According to the theory of the inverse-scattering transform as described in the Appendix of \cite{DiFrancoM08}, the key \emph{scattering coefficients} needed to formulate the inverse-scattering problem are denoted $S_{12}(k)$ (defined for $\Im\{k^2\}=0$) and $S_{22}(k)$ (defined for $\Im\{k^2\}\ge 0$ and analytic in the interior of this region).  For initial data of the special form \eqref{eq:MNLS-IC} with $A(\cdot)$ and $S(\cdot)$ given by \eqref{eq:data}, the
linear scattering problem was reduced to a Gauss hypergeometric equation in \cite{DiFrancoM08}, which allowed the relevant scattering coefficients to be calculated explicitly in terms of the Euler gamma function \cite{DLMF}.  In the case of $\nu=1$ and $S_0=0$ the scattering coefficients found in this way are exactly\footnote{Equation \eqref{eq:S12} corrects a sign error in \cite{DiFrancoM08}.  In that paper, the sign error originates in equation (146) and subsequently propagates into (151), (156), and (158)--(160).}
\begin{equation}
S_{12}
%{\color{blue}
(k)
%}
:=
%{\color{red}-}
\frac{2^{-i\mu/\epsilon}i\epsilon}{2
%{\color{red}e^{i\pi/4}(-iz)^{1/2}}{\color{blue}
k
%}
}\cdot
\frac{\Gamma(\tfrac{1}{2}-\tfrac{i\mu}{2\epsilon}-\tfrac{\Omega}{\epsilon})\Gamma(\tfrac{1}{2}-\tfrac{i\mu}{2\epsilon}+\tfrac{\Omega}{\epsilon})}{\Gamma(-\tfrac{i\mu}{2\epsilon}-\tfrac{R}{2\epsilon})\Gamma(-\tfrac{i\mu}{2\epsilon}+\tfrac{R}{2\epsilon})},\quad \Im\{k^2\}=0,
\label{eq:S12}
\end{equation}
and
\begin{equation}
S_{22}
%{\color{blue}
(k)
%}
:=\frac{\Gamma(\tfrac{1}{2}+\tfrac{\Omega}{\epsilon}+\tfrac{i\mu}{2\epsilon})\Gamma(\tfrac{1}{2}+\tfrac{\Omega}{\epsilon}-\tfrac{i\mu}{2\epsilon})}{\Gamma(\tfrac{1}{2}+\tfrac{\Omega}{\epsilon}-\tfrac{R}{2\epsilon})\Gamma(\tfrac{1}{2}+\tfrac{\Omega}{\epsilon}+\tfrac{R}{2\epsilon})},\quad \Im\{k^2\}\ge 0.
\label{eq:S22}
\end{equation}
Here
\begin{equation}
\Omega=\Omega(
%{\color{red}z}{\color{blue}
k
%}
):=\frac{2}{i\alpha}\left(
%{\color{red}z}{\color{blue}
k^2
%}
-\frac{1-\alpha\delta}{4}\right)
\label{eq:Omegadef}
\end{equation}
%\textcolor{red}{maps the upper half-plane onto the right half-plane, } 
and\footnote{The function $R(k)$ defined in \eqref{eq:Rdef} is actually the Schwarz reflection of the function with the same name used in \cite{DiFrancoM08}.}
\begin{equation}
R=R(
%{\color{red}z}{\color{blue}
k
%}
):=e^{i\pi/4}(-i(16
%{\color{red}z}{\color{blue}
k^2
%}
-\mu^2))^{1/2}.
\label{eq:Rdef}
\end{equation}
The corresponding \emph{reflection coefficient} is defined generally in terms of the scattering coefficients as follows:
\begin{equation}
r(k):=-\frac{S_{12}(k)}{S_{22}(k)},\quad k^2\in\mathbb{R}.
\label{eq:reflectioncoefficient1}
\end{equation}

%\textcolor{red}{maps the upper half-plane onto the first quadrant.}  
%\textcolor{blue}{
In \S\ref{sec:RHP} we will show that we only need to consider $k$ to lie on the boundary of the first quadrant in the complex plane (and for some purposes to admit analytic continuation into the open first quadrant), so we introduce a new coordinate $z$ as follows:
\begin{equation}
z=k^2\quad\text{and}\quad k=e^{i\pi/4}(-iz)^{1/2}.
\label{eq:zkrelation}
\end{equation}
Thus, the closed upper half $z$-plane\footnote{The coordinate $z$ defined in \eqref{eq:zkrelation} is the negative of the variable $z$ used in \cite{DiFrancoM08}.} corresponds in a one-to-one fashion to the closed first quadrant in the $k$-plane.
%}
%\textcolor{red}{(Note that what we are writing here as $R$ is the Schwarz reflection of the function by the same name defined in \cite{DiFrancoM08}.  Also, the variable $z$ used in \cite{DiFrancoM08} is the negative of the variable $z$ we use in this paper.)  }
In terms of $z$, the specific reflection coefficient under consideration is then
\begin{equation}
s(z):=r(e^{i\pi/4}(-iz)^{1/2})
%{\color{red}=-\frac{S_{12}}{S_{22}}}
=
%{\color{blue}
-
%}
\frac{2^{-i\mu/\epsilon}i\epsilon}{2e^{i\pi/4}(-iz)^{1/2}}
\frac{\Gamma(\tfrac{1}{2}-\tfrac{i\mu}{2\epsilon}-\tfrac{\Omega}{\epsilon})}
{\Gamma(\tfrac{1}{2}+\tfrac{i\mu}{2\epsilon}+\tfrac{\Omega}{\epsilon})}
\frac{\Gamma(\tfrac{1}{2}+\tfrac{\Omega}{\epsilon}-\tfrac{R}{2\epsilon})
\Gamma(\tfrac{1}{2}+\tfrac{\Omega}{\epsilon}+\tfrac{R}{2\epsilon})}
{\Gamma(-\tfrac{i\mu}{2\epsilon}-\tfrac{R}{2\epsilon})\Gamma(-\tfrac{i\mu}{2\epsilon}+
\tfrac{R}{2\epsilon})},\quad z\in\mathbb{R},
\label{eq:s-define}
\end{equation}
%\textcolor{blue}{
where $\Omega$ and $R$ are written explicitly in terms of $z=k^2$.
%}  

A simple exact formula is available for $|s(z)|^2$ for $z\in\mathbb{R}$ as a result of three observations.  Firstly,
since $\Omega$ is imaginary for $z\in\mathbb{R}$, and since $\Gamma(w^*)=\Gamma(w)^*$, it follows that
\begin{equation}
\left|\frac{\Gamma(\tfrac{1}{2}-\frac{i\mu}{2\epsilon}-\frac{\Omega}{\epsilon})}
{\Gamma(\tfrac{1}{2}+\frac{i\mu}{2\epsilon}+\frac{\Omega}{\epsilon})}\right|=1,\quad z\in\mathbb{R}.
\end{equation}
Secondly, the function
\begin{equation}
m(z):=\frac{\Gamma(\tfrac{1}{2}+\frac{\Omega}{\epsilon}-\frac{R}{2\epsilon})
\Gamma(\tfrac{1}{2}+\frac{\Omega}{\epsilon}+\frac{R}{2\epsilon})}
{\Gamma(-\frac{i\mu}{2\epsilon}-\frac{R}{2\epsilon})\Gamma(-\frac{i\mu}{2\epsilon}+\frac{R}{2\epsilon})}
\end{equation}
is meromorphic in a horizontal strip containing the real $z$ axis, and hence so is the function $M(z):=m(z)m(z^*)^*$.  Of course, for $z\in\mathbb{R}$ we have $M(z)=|m(z)|^2$, and we may calculate
$M(z)$ for all real $z$ by obtaining a formula valid for $z$ in some interval of the real axis and applying
analytic continuation to assert that the same formula holds for all $z\in\mathbb{R}$.  For $z<\mu^2/16$,
 $R$ and $\Omega$ are both imaginary, so using $\Gamma(w^*)=\Gamma(w)^*$ we obtain
 \begin{equation}
 M(z)=\frac{[\Gamma(\frac{1}{2}+\frac{\Omega}{\epsilon}-\frac{R}{2\epsilon})\Gamma(\frac{1}{2}-\frac{\Omega}{\epsilon}+\frac{R}{2\epsilon})][\Gamma(\frac{1}{2}+\frac{\Omega}{\epsilon}+\frac{R}{2\epsilon})\Gamma(\frac{1}{2}-\frac{\Omega}{\epsilon}-\frac{R}{2\epsilon})]}
 {[\Gamma(-\frac{i\mu}{2\epsilon}-\frac{R}{2\epsilon})\Gamma(\frac{i\mu}{2\epsilon}+\frac{R}{2\epsilon})]
 [\Gamma(-\frac{i\mu}{2\epsilon}+\frac{R}{2\epsilon})\Gamma(\frac{i\mu}{2\epsilon}-\frac{R}{2\epsilon})]},
 \quad z<\frac{\mu^2}{16}.
 \label{eq:M-for-some-z}
 \end{equation}
Thirdly, we recall the two reflection identities for the gamma function \cite{DLMF}:  
\begin{equation}
\Gamma(w)\Gamma(-w)=-\frac{\pi}{w\sin(\pi w)}\quad\text{and}\quad
\Gamma(\tfrac{1}{2}+w)\Gamma(\tfrac{1}{2}-w) = \frac{\pi}{\cos(\pi w)}.
\label{eq:GammaReflection}
\end{equation}
Using these in \eqref{eq:M-for-some-z}, we obtain
\begin{equation}
\begin{split}
M(z)&=\frac{(-\frac{i\mu}{2\epsilon}-\frac{R}{2\epsilon})(-\frac{i\mu}{2\epsilon}+\frac{R}{2\epsilon})
\sin(\pi(-\frac{i\mu}{2\epsilon}-\frac{R}{2\epsilon})))\sin(\pi(-\frac{i\mu}{2\epsilon}+\frac{R}{2\epsilon}))}
{\cos(\pi(\frac{\Omega}{\epsilon}-\frac{R}{2\epsilon}))\cos(\pi(\frac{\Omega}{\epsilon}+\frac{R}{2\epsilon}))}\\
&= \frac{\mu^2-(-iR)^2}{4\epsilon^2}\frac{\cosh^2(\frac{\pi\mu}{2\epsilon})-\cosh^2(\frac{\pi R}{2i\epsilon})}{\cosh^2(\frac{\pi\Omega}{i\epsilon})+\cosh^2(\frac{\pi R}{2i\epsilon})-1}.
\end{split}
\end{equation}
It is clear that the latter is an even function of $R$ and hence defines the analytic continuation 
of $M(z)$ to the whole real $z$-axis.
From these three facts it therefore follows that
\begin{equation}
|s(z)|^2 =\mathrm{sgn}(z)\frac{\cosh^2(\frac{\pi\mu}{2\epsilon})-\cosh^2(\frac{\pi R}{2i\epsilon})}{\cosh^2(\frac{\pi\Omega}{i\epsilon})+\cosh^2(\frac{\pi R}{2i\epsilon})-1} = 
\frac{|\cosh^2(\frac{\pi\mu}{2\epsilon})-\cosh^2(\frac{\pi R}{2i\epsilon})|}{\cosh^2(\frac{\pi\Omega}{i\epsilon})+\cosh^2(\frac{\pi R}{2i\epsilon})-1},\quad z\in\mathbb{R}.
\label{eq:modssquared}
\end{equation}
Since $\cosh^2(\frac{\pi R}{2i\epsilon})$ is even in $R$, it follows that the product $\mathrm{sgn}(z)|s(z)|^2$ has a meromorphic continuation from $\mathbb{R}$ to the whole complex $z$-plane.

\subsection{Analytic Properties of the Scattering Data}
\label{sec:analytic-properties}
When the scattering coefficients are given by the specific formulae \eqref{eq:S12}--\eqref{eq:S22},
it is obvious that $S_{12}$ and $S_{22}$ are even functions of $R$ and hence they have no branch point at $z=\mu^2/16$, the square-root branch point of $R$ considered as a function of $z=k^2$.  This makes $kS_{12}(k)$ and $S_{22}(k)$ both meromorphic functions of $z=k^2$.
From this it follows that $s(z)$ has a meromorphic continuation from the real $z$-axis into the full upper half $z$-plane.  The point $z=0$ is a branch point of $s(z)$, however, as is evidenced by the fact that the product $s(z)(-iz)^{-1/2}$ is analytic and non-vanishing near $z=0$ (the apparent
singularity at $z=0$ from the factor $(-iz)^{-1/2}$ is cancelled by the simple zero of $\Gamma(-\frac{i\mu}{2\epsilon}+\frac{R}{2\epsilon})^{-1}$).  This shows that $s(z)$ does not have any single-valued continuation into the lower half $z$-plane due to the mismatch of boundary values taken on the branch cut of $(-iz)^{1/2}$ along the negative imaginary axis.  In this paper, we will refer to both the meromorphic continuation of $s(z)$ into the upper half $z$-plane as well as its boundary value taken on the real $z$-axis by the same notation:  $s(z)$.  Now we consider the analytic nature of $s(z)$ for $\Im\{z\}\ge 0$.

In general, any zeros of the scattering coefficient $S_{22}(k)$ for $\Im\{k^2\}\ge 0$ (equivalently for $\Im\{z\}\ge 0$) represent discrete spectrum (eigenvalues) of the direct scattering problem and ultimately poles in the matrix unknown of the Riemann-Hilbert problem of inverse-scattering theory.  For technical reasons we would like to avoid having to include such poles.
This motivates the condition \eqref{eq:condition-no-eigenvalues} on the parameters $\alpha$, $\delta$, and $\mu$; indeed in the case that  $S_{22}(k)$ is given by the formula \eqref{eq:S22}, 
it was shown in \cite{DiFrancoM08} that under the condition \eqref{eq:condition-no-eigenvalues} there are no zeros in the closed upper half $z$-plane for any $\epsilon>0$. In this case, $S_{22}$ is an analytic function of $z$ for $\Im\{z\}>0$ that extends continuously to the real $z$-axis and that for each $\epsilon>0$ is bounded away from zero.
%It is a basic result of the spectral theory for the MNLS equation that the function $S_{22}$ always extends analytically
%to the upper \textcolor{red}{half-plane} \textcolor{blue}{half $z$-plane}, and for the specific formula \eqref{eq:S22} (Any zeros would correspond to eigenvalues and hence poles in the
%Riemann-Hilbert problem of inverse-scattering.)  

The scattering coefficient $S_{12}$ generally has no analytic continuation from the axes $\Im\{k^2\}=0$ (or the real axis $\Im\{z\}=0$).  However, in the case of the special formula \eqref{eq:S12}, it is clear that $S_{12}$ can be continued into the upper half $z$-plane as a meromorphic function.  The factors $\Gamma(\tfrac{1}{2}-\tfrac{i\mu}{2\epsilon}+\tfrac{\Omega}{\epsilon})$ and $\Gamma(-\tfrac{i\mu}{2\epsilon}+\tfrac{R}{2\epsilon})$ are analytic
and nonvanishing for $\Im\{z\}\ge 0$, but the factor $\Gamma(\tfrac{1}{2}-\tfrac{i\mu}{2\epsilon}-\tfrac{\Omega}{\epsilon})$ contributes to $S_{12}$ an infinite array of simple poles with small spacing 
$\alpha\epsilon/2$ along the vertical ray $\Re\{z\}=z_\mathrm{P}$  and $\Im\{z\}>0$ where
\begin{equation}
z_\mathrm{P}:=\frac{1}{4}(1-\alpha\delta+\alpha\mu).
\label{eq:zpdef}
\end{equation}
These poles are singularities of $s(z)$
that unlike zeros of $S_{22}$ do not correspond to eigenvalues of the direct scattering problem; in \cite{DiFrancoM08} they are called \emph{phantom poles}.   The presence of such poles in the reflection coefficient of an analogous inverse-scattering problem for the focusing nonlinear Schr\"odinger equation has been previously dealt with \cite{TovbisVZ04} by means of the installation of an implicit local parametrix;  by contrast the deformations of the Riemann-Hilbert problem that we will carry out in \S\ref{sec:RHP-deform} are specially designed to render this unfortunate feature of the reflection coefficient completely harmless without the use of any local parametrix at all.  The remaining factor 
$\Gamma(-\tfrac{i\mu}{2\epsilon}-\tfrac{R}{2\epsilon})$ can only contribute zeros to $S_{12}$.
For technical reasons we wish to prevent these zeros from lying in the upper half $z$-plane,
which is the purpose of the condition \eqref{eq:condition-no-zeros} on $\mu$.  Were this condition
not satisfied, the factor in question would contribute to $S_{12}$ 
infinitely many simple zeros lying along (and densely filling out as $\epsilon\downarrow 0$) a parabolic curve in the upper half-plane, and their presence would cause difficulties with the use of
Stirling's formula for  asymptotic analysis (see \S\ref{sec:asymptotic-properties}).

Under assumptions \eqref{eq:condition-no-eigenvalues} and \eqref{eq:condition-no-zeros},
$s(z)$ is analytic and non-vanishing in the open upper half-plane with the exception of the ``phantom poles'' along the line $\Re\{z\}=z_\mathrm{P}$.  Note also that the condition \eqref{eq:condition-no-eigenvalues} implies that the zero $z=\mu^2/16$ of $R$ is greater than
the zero $z=(1-\alpha\delta)/4$ of $\Omega$, and that conditions \eqref{eq:condition-transsonic}
and \eqref{eq:condition-no-zeros} taken together imply that $z_\mathrm{P}>0$.
%Note that $s(z)$ is an even function of $R$ and hence has no branch point at $z=\mu^2/16$, the square-root branch point of $R$ considered as a function of $z=k^2$.
%In fact, the only point of nonanalyticity of $s(z)$ for $z\in\mathbb{R}$ is $z=0$, near which the product $s(z)(-iz)^{-1/2}$ is analytic and nonvanishing (the apparent singularity at $z=0$ from the
%factor $(-iz)^{-1/2}$ is cancelled by the simple zero of $\Gamma(-\tfrac{i\mu}{2\epsilon}+\tfrac{R}{2\epsilon})^{-1}$).  Since $(-iz)^{1/2}$ refers to the principal branch here, it is clear that while $s(z)$ is not analytic at $z=0$, it does have a meromorphic continuation from $\mathbb{R}$ into the
%full upper half-plane.  
%%\textcolor{blue}{
%In this paper we will refer to this meromorphic continuation as well as its boundary value taken on the real axis from the upper half-plane by the same notation:  $s(z)$.
%%} 
%On the other hand, it does not have any such
%continuation to the lower half-plane due to the necessity of introducing a branch cut emanating from $z=0$ into the lower half-plane along which analytic continuations of $s(z)$ from $\mathbb{R}_+$ and from $\mathbb{R}_-$ meet with opposite signs.  

\subsection{Asymptotic Properties of the Scattering Data}
\label{sec:asymptotic-properties}

\subsubsection{Asymptotic behavior of $|s(z)|^2$}
The three conditions \eqref{eq:condition-transsonic}--\eqref{eq:condition-no-zeros} also determine the qualitative asymptotic behavior of $|s(z)|^2$ given by \eqref{eq:modssquared} for $z\in\mathbb{R}$ in the limit $\epsilon\downarrow 0$ as we will now show.  We have $\Omega\in i\mathbb{R}$ for all $z\in\mathbb{R}$, and $\mu$ is a real constant. 
The relative size of the positive real (quadratic and constant, respectively) quantities 
\begin{equation}
q(z):=(-i\Omega)^2 = \frac{1}{4\alpha^2}(4z-(1-\alpha\delta))^2\quad\text{and}\quad
c(z):=\left(\frac{\mu}{2}\right)^2=\frac{\mu^2}{4}
\end{equation}
will therefore play a role in determining the asymptotic behavior of $|s(z)|^2$ for real $z$ in
the limit $\epsilon\downarrow 0$.   If 
$z>\mu^2/16$, then $R\in\mathbb{R}$ and these two functions determine the asymptotic behavior completely; on the other hand, if $z<\mu^2/16$,
then $R\in i\mathbb{R}$, and we also need to take into account the positive real (linear) quantity
\begin{equation}
l(z):=\left(-i\frac{R}{2}\right)^2 = \frac{\mu^2}{4}-4z,\quad z<\frac{\mu^2}{16}.
\end{equation}
Extending the definition of $l(z)$ by setting $l(z):=0$ for $z\ge \mu^2/16$, we observe that
the asymptotic behavior of $|s(z)|^2$ is determined by the maximum of $q(z)$, $l(z)$, and $c(z)$:
\begin{itemize}
\item For those $z$ for which $c(z)=\max\{q(z),l(z),c(z)\}$, $|s(z)|^2$ is exponentially large in the limit $\epsilon\to 0$ (exponential growth of $s(z)$).
\item For those $z$ for which $l(z)=\max\{q(z),l(z),c(z)\}$, $|s(z)|^2-1$ is exponentially small in the limit $\epsilon\to 0$ (pure oscillation of $s(z)$).
\item For those $z$ for which $q(z)=\max\{q(z),l(z),c(z)\}$, $|s(z)|^2$ is exponentially small in the limit $\epsilon\to 0$ (exponential decay of $s(z)$).
\end{itemize}
It is easy to see that for $z<0$, $l(z)>c(z)$, and that for $z>0$, $l(z)<c(z)$, while $l(0)=c(0)$.  It follows that there are only three possible scenarios for how the real $z$-axis may be partitioned into intervals of growth, pure oscillation, and decay of $s(z)$, as illustrated in Figure~\ref{fig:zregions}.
\begin{figure}[h]
\begin{center}
\includegraphics{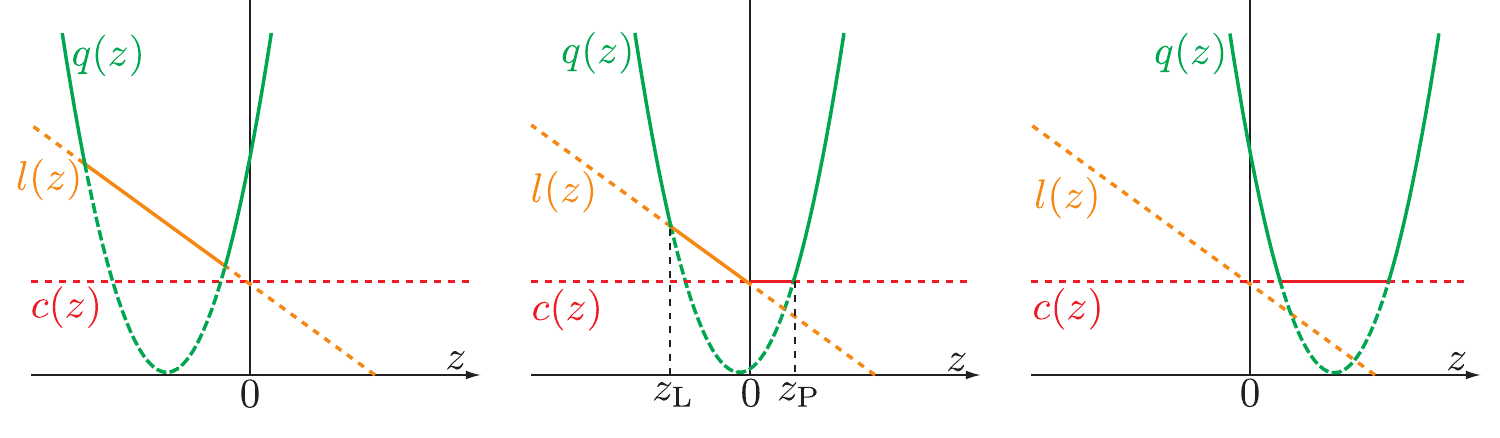}
\end{center}
\caption{The three configurations of the reflection coefficient.  Left:  $1-\alpha\delta+\alpha|\mu|<0$.
Center:  $\alpha|\mu|>|1-\alpha\delta|$.  Right:  $1-\alpha\delta-\alpha|\mu|>0$.  The subdominant arcs of the graphs of $q(z)$, $l(z)$, and $c(z)$ are shown with dashed curves.}
\label{fig:zregions}
\end{figure}
Noting that the points $z=(1-\alpha\delta\pm\alpha|\mu|)/4$ are the solutions of
the equation $q(z)=c(z)$, these scenarios are:
\begin{itemize}
\item
If $1-\alpha\delta+\alpha|\mu|<0$, then there are two real and negative roots of $q(z)=l(z)$, and
for $z$ between these two roots $s(z)$ is purely oscillatory, while for $z$ outside of this interval
$s(z)$ is exponentially small.  See Figure~\ref{fig:zregions}, left-hand panel.
In such cases one expects the asymptotic analysis of the
Riemann-Hilbert problem of inverse scattering to resemble that which has been carried out 
for the Korteweg-de Vries equation in the small dispersion limit (see \cite{DeiftVZ97} as well
as \cite{ClaeysG09}--\cite{ClaeysG10b}).  In such problems all of the necessary deformations of the Riemann-Hilbert problem that are required for the steepest descent technique are purely local to the real axis in the $z$-plane, and hence analyticity of the scattering data does not play any central role in the theory.  This condition really corresponds to the case of globally supersonic initial 
data for the MNLS equation, for which there is a clear analogy with the inverse-scattering theory of the defocusing nonlinear Schr\"odinger equation in the semiclassical limit as shown in \cite{DiFrancoMM11}.
\item
If  $1-\alpha\delta-\alpha|\mu|>0$, then $s(z)$ is exponentially large for $(1-\alpha\delta-\alpha|\mu|)/4<z<(1-\alpha\delta+\alpha|\mu|)/4$ (where $c(z)$ is dominant) and otherwise is exponentially small.  See Figure~\ref{fig:zregions}, right-hand panel.  In such cases one expects that to handle the exponentially large reflection coefficient it will be necessary to introduce
the meromorphic continuation of $s(z)$ from its interval of growth into the complex $z$-plane.  This type of analysis was
carried out for the semiclassical limit of the focusing nonlinear Schr\"odinger equation in 
\cite{TovbisVZ04}.  In fact, this condition corresponds to the case of globally subsonic initial data for the
MNLS equation, and the analogy with the focusing nonlinear Schr\"odinger equation in the semiclassical limit is also considered in \cite{DiFrancoMM11}.  
\item 
If $\alpha|\mu|>|1-\alpha\delta|$, (that is, if \eqref{eq:condition-transsonic} holds) then there is a unique negative root of $q(z)=l(z)$, given by
$z=z_\mathrm{L}:=z^-$, where
\begin{equation}
z^\pm:=\frac{1}{2}\left[\frac{1}{2}(1-\alpha\delta)-\alpha^2\pm
\sqrt{\alpha^4+\frac{\alpha^2\mu^2}{4}-\alpha^2(1-\alpha\delta)}\right],
\label{eq:z-plus-minus}
\end{equation}
and $s(z)$ is exponentially small for $z<z_\mathrm{L}$ and for $z>(1-\alpha\delta+\alpha|\mu|)/4$,
is purely oscillatory for $z_\mathrm{L}<z<0$, and is exponentially large for $0<z<(1-\alpha\delta+\alpha|\mu|)/4$.  See Figure~\ref{fig:zregions}, center panel.  Note that under the assumption \eqref{eq:condition-no-zeros}, the point of transition from exponential growth to exponential decay is exactly $z=z_\mathrm{P}$.
\end{itemize}
These results show that the assumption \eqref{eq:condition-transsonic}, which was originally imposed to ensure that
the initial conditions were of transsonic type, is also exactly what is required to ensure that in
the spectral transform domain there exist both a negative interval of pure oscillation and also an abutting positive interval of exponential growth of the reflection coefficient $s(z)$.  
%\textcolor{red}{Consider whether the following sentence is necessary given the more precise estimates coming soon.}
%Moreover, under this assumption
%$s(z)$ is small beyond all orders in $\epsilon$ in all four norms:  $L^\infty(-\infty,z_\mathrm{L}-\sigma)$,
%$L^2(-\infty,z_\mathrm{L}-\sigma)$, $L^\infty(z_\mathrm{P}+\sigma,+\infty)$, and $L^2(z_\mathrm{P}+\sigma,+\infty)$, where $\sigma>0$ is arbitrary.

To handle the Riemann-Hilbert problem of inverse scattering in the transsonic case under consideration will therefore require in the same problem a combination of techniques from both the category of ``modulationally stable'' semiclassical limits (e.g. Korteweg-de Vries, defocusing nonlinear Schr\"odinger) and also the category of ``modulationally unstable'' semiclassical limits (e.g. focusing nonlinear Schr\"odinger).

Assuming the three conditions \eqref{eq:condition-transsonic}--\eqref{eq:condition-no-zeros}, note that as shown in the diagram in the central panel of Figure~\ref{fig:zregions}, there are five distinguished points on the real $z$-axis, in order from left-to-right:
\begin{itemize}
\item $z=z_\mathrm{L}$, the point of transition from exponential decay to oscillatory behavior of $s(z)$.  This is the negative root of the quadratic equation $q(z)=l(z)$.
\item $z=z_\infty:=\frac{1}{4}(1-\alpha\delta-\alpha\mu)$, a point at which two subdominant exponentials in $|s(z)|^2$ exchange roles.  This point is the negative root of the quadratic equation $q(z)=c(z)$ and it also has significance with respect to the \emph{turning point curve} to be explained in \S\ref{sec:tpc}. 
\item $z=0$, the point of transition from oscillatory behavior to exponential growth of $s(z)$.
This is the unique root of the linear equation $l(z)=c(z)$.
\item $z=z^+$, another point at which two subdominant exponentials in $|s(z)|^2$ exchange roles.
This point is the positive root of the quadratic equation $q(z)=l(z)$ and is given explicitly by
\eqref{eq:z-plus-minus}.
\item $z=z_\mathrm{P}:=\frac{1}{4}(1-\alpha\delta+\alpha\mu)$, the point of transition from exponential growth to exponential decay of $s(z)$.  This is the positive root of the quadratic equation $q(z)=c(z)$.
\end{itemize}
That $z^\pm$ are real and distinct follows
from condition \eqref{eq:condition-no-eigenvalues} along with $\alpha>0$, and that they have opposite signs is exactly equivalent to the condition \eqref{eq:condition-transsonic}.  Note also that conditions \eqref{eq:condition-transsonic} and \eqref{eq:condition-no-zeros} together imply that $z^+<z_\mathrm{P}$.  For later convenience, let us define the linear exponent $f(z)$ by
\begin{equation}
f(z):=\frac{2\pi}{\alpha}(z-z_\mathrm{P}).
\label{eq:fexpdefine}
\end{equation}

The assumptions \eqref{eq:condition-transsonic}--\eqref{eq:condition-no-zeros} imply that simple bounds for $|s(z)|^2$ then follow from the exact formula \eqref{eq:modssquared} and the central graph in Figure~\ref{fig:zregions}.  Indeed, since $\cosh^2(\pi R/2i\epsilon)\ge\cosh^2(\pi\mu/2\epsilon)$ and $\cosh^2(\pi\Omega/i\epsilon)-1\ge 0$ for $z\le 0$, we easily obtain the inequality:
\begin{equation}
|s(z)|^2\le 1,\quad z\le 0.
\label{eq:sboundznegative}
\end{equation}
On the other hand, for $z\ge 0$ we have instead that
\begin{equation}
\begin{split}
|s(z)|^2&\le\frac{\cosh^2(\frac{\pi\mu}{2\epsilon})}{\cosh^2(\frac{\pi\Omega}{i\epsilon})} \\ &= e^{4\pi(\alpha\mu/4-|z-(1-\alpha\delta)/4|)/(\alpha\epsilon)}\left(\frac{1+e^{-\pi\mu/\epsilon}}{1+e^{-4\pi|z-(1-\alpha\delta)/4|/(\alpha\epsilon)}}\right)^2\\ & \le 4e^{4\pi(\alpha\mu/4-|z-(1-\alpha\delta)/4|)/(\alpha\epsilon)}\\ &
\le 4e^{4\pi(z_\mathrm{P}-z)/(\alpha\epsilon)}\\
&=4e^{-2f(z)/\epsilon},\quad z\ge 0.
\end{split}
\label{eq:sboundzpositive}
\end{equation}
Note that by the Mean Value Theorem applied to \eqref{eq:modssquared}, the estimates \eqref{eq:sboundznegative} and \eqref{eq:sboundzpositive} can be replaced by an improved estimate valid in a neighborhood of the origin as follows:  for each sufficiently small $\eta>0$ there exist constants $C>0$ and $K>0$ independent of $\epsilon$ such that
\begin{equation}
|s(z)|^2\le \frac{C|z|}{\epsilon}e^{-K/\epsilon},\quad |z|<\eta.
\label{eq:sboundorigin}
\end{equation}
Here we have used the fact that $z_\mathrm{P}>0$ to ensure that $K>0$.

Similar reasoning produces the following asymptotic formulae involving $|s(z)|^2$.  Firstly, we have
%Also, directly from the exact formula \eqref{eq:modssquared} and the central graph in Figure~\ref{fig:zregions} we see that
\begin{equation}
1-|s(z)|^2 = e^{-\tau(z)/\epsilon}(1+\text{exponentially small in $\epsilon$}),\quad \text{uniformly for 
$z\in [z_\mathrm{L}+\eta,z_\infty-\eta]\cup[z_\infty+\eta,0]$}
\label{eq:oneminusmodssquared}
\end{equation}
for all $\eta>0$,
where $\tau(z)>0$ is defined by
\begin{equation}
\tau(z)=\begin{cases}
\displaystyle\tau_{z_\mathrm{L}}(z):=-i\pi R - 2\pi|\Omega|=\pi\sqrt{\mu^2-16z}+\frac{4\pi}{\alpha}\left(z-\frac{1-\alpha\delta}{4}\right),&\displaystyle \; z_\mathrm{L}<z\le z_\infty\\\\
\tau_0(z):=-i\pi R -\pi\mu=\pi\sqrt{\mu^2-16z}-\pi\mu,&\displaystyle\; z_\infty\le z<0.
\end{cases}
\label{eq:taudef}
\end{equation}
%Here the break point $z=z_\infty:=(1-\alpha\delta-\alpha\mu)/4$ is the unique solution in the interval $z_\mathrm{L}<z<0$ of the equation $q(z)=c(z)$, where the two subdominant exponentials exchange
%roles.  
The function $\tau(z)$ extends (by zero) to a continuous function for $z\in\mathbb{R}$.
Secondly, we have
\begin{equation}
|s(z)|^2=e^{-2f(z)/\epsilon}(1+\text{exponentially small in $\epsilon$}),\quad
\text{uniformly for $z>z^++\eta$, $\forall \eta>0$.}
\label{eq:modssquaredlargez}
\end{equation}
%Here $z^+$ denotes the unique positive root of the quadratic equation $q(z)=l(z)$, given by
%and we note that the significance of this value is that like $z_\infty$ it is a point at which the two subdominant exponentials in $|s(z)|^2$ exchange roles, this time in the interval $z\in (0,z_\mathrm{P})$.

\subsubsection{Stirling asymptotics for $s(z)$}
Under conditions \eqref{eq:condition-transsonic}--\eqref{eq:condition-no-zeros}, Stirling's formula \cite{DLMF} yields accurate asymptotics for  $s(z)$ that are uniformly accurate to a relative error of order $O(\epsilon)$ in the closed set $S$ consisting of the closed upper half-plane with the following sets omitted:
\begin{itemize}
\item The vertical strip $|\Re\{z\}-z_\mathrm{P}|<\eta$ with $\Im\{z\}\ge 0$.
This strip contains the accumulation locus of the phantom poles.  The Stirling approximation of
the ratio $\Gamma(\tfrac{1}{2}-\tfrac{i\mu}{2\epsilon}-\tfrac{\Omega}{\epsilon})/\Gamma(\tfrac{1}{2}+\tfrac{i\mu}{2\epsilon}+\tfrac{\Omega}{\epsilon})$ fails (only) in this strip as $\epsilon\downarrow 0$.
\item  The open upper half-disk of radius $\eta$ centered at $z=0$.  The point $z=0$ is the unique root
of $R-i\mu$ for $\Im\{z\}\ge 0$ when $\mu>0$ (and $R+i\mu$ is nonvanishing for $\Im\{z\}\ge 0$).
The Stirling approximation of the product $\Gamma(-\tfrac{i\mu}{2\epsilon}-\tfrac{R}{2\epsilon})\Gamma(-\tfrac{i\mu}{2\epsilon}+\tfrac{R}{2\epsilon})$ fails (only) in
this half-disk as $\epsilon\downarrow 0$.
\item The open upper half-disks of radius $\eta$ centered at the real points $z^-<0<z^+$ defined by \eqref{eq:z-plus-minus}.
These are the roots of the quadratic equation $R^2=(2\Omega)^2$.  The Stirling approximation of the
product $\Gamma(\tfrac{1}{2}+\tfrac{\Omega}{\epsilon}-\tfrac{R}{2\epsilon})\Gamma(\tfrac{1}{2}+\tfrac{\Omega}{\epsilon}+\tfrac{R}{2\epsilon})$ fails (only) in the union of these half-disks as $\epsilon\downarrow 0$.
\end{itemize}
Here $\eta$
is any arbitrarily small fixed value.  The omitted vertical strip containing the phantom poles divides $S$ into two disjoint
subregions:  $S_\mathrm{L}$ to the left of the strip and $S_\mathrm{R}$ to the right.  In each of these
two subregions Stirling's formula will supply a different analytic approximation to the ratio $s(z)$.
The resulting formulae are as follows.  Firstly, define (here $\log(\cdot)$ denotes the principal branch with $|\Im\{\log(\cdot)\}|<\pi$)
\begin{equation}
\begin{split}
\Phi(z)&:=-\mu\log(2)+\frac{4}{\alpha}(z-z_\mathrm{P})\log\left(\frac{2}{\alpha}(z_\mathrm{P}-z)\right)\\
&{}\quad\quad
+\frac{\mu-iR}{2}\log\left(\frac{-i\mu-R}{2}\right)+\frac{\mu+iR}{2}\log\left(\frac{-i\mu+R}{2}\right)\\
&{}\quad\quad
-i\left(\Omega-\frac{R}{2}\right)\log\left(\Omega-\frac{R}{2}\right)-i
\left(\Omega+\frac{R}{2}\right)\log\left(\Omega+\frac{R}{2}\right),\quad\Im\{z\}> 0.
\end{split}
\label{eq:PhidefStirling}
\end{equation}
Under the conditions \eqref{eq:condition-transsonic}--\eqref{eq:condition-no-zeros}, $\Phi(z)$ is an analytic function of $z$ for $\Im\{z\}>0$. 
Now, define:
\begin{equation}
E(z):=
%{\color{blue}
-
%}
s(z)e^{-i\Phi(z)/\epsilon},\quad
\text{for $\Im\{z\}> 0$.}
\end{equation}
%\textcolor{red}{The function $E(z)$ extends from $\mathbb{R}$ as an analytic function to $S=S_\mathrm{L}\cup S_\mathrm{R}$, but it has poles along the line $\Re\{z\}=z_\mathrm{P}$ in $\mathbb{C}_+$.}
%\textcolor{blue}{
The function $E(z)$ is meromorphic for $\Im\{z\}>0$ and its poles are confined to the line $\Re\{z\}=z_\mathrm{P}$.  In particular it is analytic in the interior of $S=S_\mathrm{L}\cup S_\mathrm{R}$.
%}
Stirling's formula implies that
\begin{equation}
E(z)=\begin{cases}
1+O(\epsilon),&\quad z\in S_\mathrm{L},\\
e^{
%{\color{blue}
-2f(z)/\epsilon
%}{\color{red}4\pi(z_\mathrm{P}-z)/(\alpha\epsilon)}
}\left(1+O(\epsilon)\right),&\quad z\in S_\mathrm{R},
\end{cases}
\end{equation}
where $f(z)$ is defined by \eqref{eq:fexpdefine},
with both estimates holding uniformly in the indicated region 
%\textcolor{blue}{
(including for the boundary value $E_+(z)$ taken on the real axis).
%}
%\begin{equation}
%s(z)=s_\mathrm{L}(z)(1+O(\epsilon)),\quad s_\mathrm{L}(z):=e^{i\Phi(z)/\epsilon},\quad z\in S_\mathrm{L},
%\end{equation}
%and
%\begin{equation}
%s(z)=s_\mathrm{R}(z)(1+O(\epsilon)),\quad s_\mathrm{R}(z):=s_\mathrm{L}(z)e^{4\pi(z_\mathrm{P}-z)/(\alpha\epsilon)},\quad z\in S_\mathrm{R},
%\end{equation}
%where
Next, define
\begin{equation}
\tilde{E}(z):=\left(1+e^{
%{\color{red}4\pi(z-z_\mathrm{P})/(\alpha\epsilon)}{\color{blue}
2f(z)/\epsilon
%}
}\right)E(z),\quad
\Im\{z\}> 0.
\label{eq:tildeEdef}
\end{equation}
The explicit prefactor exactly cancels the poles in
$E(z)$ for $\Im\{z\}>0$, and hence $\tilde{E}(z)$ 
%\textcolor{red}{extends from $\mathbb{R}$ to} \textcolor{blue}{
is
%} 
an analytic function in the whole upper half complex plane.  Letting $\tilde{S}$ denote the closed upper half-plane with
the open upper half-disks of radius $\eta$ centered at $z=z_\mathrm{L}$, $z=0$, $z=z^+$, and
$z=z_\mathrm{P}$ omitted, we can obtain from Stirling's formula (by using first the reflection identities \eqref{eq:GammaReflection} for $\Gamma(\cdot)$) that 
%\textcolor{red}{(confirm this)}
\begin{equation}
\tilde{E}(z)=1+O(\epsilon),\quad z\in\tilde{S}.
\label{eq:tildeEasymp}
\end{equation}
Again, the estimate holds uniformly 
%\textcolor{blue}{
(including for the boundary value $\tilde{E}_+(z)$).
%}.
 It will be
important later to record the imaginary part of the boundary values taken by $\Phi$ on the real axis:
\begin{equation}
\Im\{\Phi_+(z)\}=\begin{cases}
\displaystyle \frac{i\pi}{2}\left(R(z)-2\Omega(z)\right),&\quad
z<z_\mathrm{L}=z^-\\\\
0,&\quad
z_\mathrm{L}=z^-<z<0\\\\
\displaystyle -\frac{i\pi}{2}\left(R(z)-i\mu\right),&\quad 0<z<z^+\\\\
\displaystyle -\frac{2\pi}{\alpha}|z-z_\mathrm{P}|
%{\color{blue}
=-|f(z)|,
%},
&\quad z>z^+.
\end{cases}
\label{eq:ImPhiPlus}
\end{equation}
It therefore follows that $\Phi(z)$ extends through the interval $z_\mathrm{L}<z<0$ to an analytic function defined also in the open lower half-plane, and henceforth we consider $\Phi(z)$ to be
analytic in the slit domain $z\in\mathbb{C}\setminus((-\infty,z_\mathrm{L}]\cup[0,+\infty))$ and to satisfy
the Schwarz symmetry condition $\Phi(z^*)=\Phi(z)^*$.
By direct calculation, one sees from \eqref{eq:PhidefStirling} that
\begin{equation}
\Phi'(z)= -\frac{2\pi i}{\alpha} -\pi z^{-1/2} -\frac{\mu}{2}z^{-1} +O(z^{-3/2}),\quad z\to\infty,\quad \Im\{z\}>0.
\label{eq:PhiPrimeExpansion}
\end{equation}

An important observation is that while here the formulae \eqref{eq:taudef} and \eqref{eq:PhidefStirling} for $\tau(z)$ and $\Phi(z)$ respectively come directly from analysis of the exact formula \eqref{eq:s-define} for the
reflection coefficient valid in the case of the particular initial conditions under consideration, these formulae agree \emph{exactly} with predictions based on WKB theory formally valid for more general initial data.
This formal semiclassical spectral analysis of the direct scattering problem can be found in
all details in \S2 of \cite{DiFrancoMM11}.  The more general formulae corresponding to \eqref{eq:taudef}
and \eqref{eq:PhidefStirling} arise from WKB theory as follows.
The \emph{turning point curve} of WKB theory is, by definition, the locus of real $x$ and complex $z$ satisfying the characteristic equation
\begin{equation}
\chi(x;z):=(4z-1+\alpha u_0(x))^2 + 16\alpha^2z\rho_0(x)=0.
\label{eq:TPCgeneral}
\end{equation}
Let $z\in\mathbb{R}$ be a real value for which the function $\chi(x;z)$ has real roots $x$, and let
$x_-(z)$ denote the most negative of these roots.  In particular, $\chi(x;z)>0$ for all $x<x_-(z)$.
Define signs $\sigma_\pm$ by
\begin{equation}
\sigma_\pm:=\mathrm{sgn}(4z-1+\alpha u_\pm),
\label{eq:sigmapmdef}
\end{equation}
and then set
\begin{equation}
\omega(x;z):=\sigma_-\sqrt{\chi(x;z)},\quad x<x_-(z).
\label{eq:omegadef}
\end{equation}
For each appropriate value of $z\in\mathbb{R}$, this quantity has the fixed sign $\sigma_-$ throughout its domain of definition, and $\omega(x;z)\to \omega_-(z)$ as $x\to -\infty$,
where
\begin{equation}
\omega_-(z):=4z-1+\alpha u_-.
\label{eq:omegaminusdef}
\end{equation}
Then, for all real $z$ for which $\chi(x;z)$ has real roots, we 
let
\begin{equation}
\gamma(x;z):=\sqrt{-\chi(x;z)}>0,\quad \chi(x;z)\le 0,
\label{eq:gammadef}
\end{equation}
and then 
\begin{equation}
\tau(z) %=-\lim_{\epsilon\downarrow 0}\left(\epsilon\log(1-|r(z,\epsilon)|^2)\right) 
= \frac{1}{\alpha}
\int_{y:\chi(y;z)<0}\gamma(y;z)\,dy
\label{eq:tauWKB}
\end{equation}
is a generalized formula that agrees exactly with \eqref{eq:taudef} when $A_0(x)$ and $S_0(x)$
are the particular functions \eqref{eq:data} considered in this paper.
Next, we have
\begin{equation}
\Phi(z)= %\lim_{\epsilon\downarrow 0}\left(-i\epsilon\log(r(z,\epsilon))\right) =
\frac{1}{\alpha}\int_{-\infty}^{x_-(z)}\left[\omega(y;z)-\omega_-(z)\right]\,dy +\frac{1}{\alpha}\omega_-(z)x_-(z) +S_-.
\label{eq:PhidefWKB}
\end{equation}
This formula is a generalization of \eqref{eq:PhidefStirling} in the sense that the latter is the analytic continuation to the upper half $z$-plane of \eqref{eq:PhidefWKB} in the special case
of the functions $A_0(x)$ and $S_0(x)$ given by \eqref{eq:data}.    For these functions it turns out that there
are generically either two distinct roots (turning points) $x_-(z)<x_+(z)$ of $\chi(x;z)$ or none for $z\in\mathbb{R}$.  In such cases it can be shown that if $z_0$ is a real value at which $x_\pm(z)$
coalesce in a square-root sense, then the functions $\Phi(z)$ and $\tau(z)$ can be analytically continued about such a point, and the following monodromy relations hold:
\begin{equation}
\Phi(z_0+(z-z_0)e^{\pm 2\pi i})=\Phi(z)\pm i\sigma_-\tau(z),
\label{eq:Phimonodromy}
\end{equation}
\begin{equation}
\tau(z_0+(z-z_0)e^{\pm 2\pi i})=\tau(z)
\label{eq:taumonodromy}
\end{equation}
We will make use of  the concrete formulae \eqref{eq:taudef} and \eqref{eq:PhidefStirling} and also the equivalent WKB formulae  \eqref{eq:tauWKB} and \eqref{eq:PhidefWKB} at various points in our analysis.

\section{Riemann-Hilbert Problem of Inverse Scattering}
\label{sec:RHP}
\subsection{Basic problem}
We take as a starting point the formulation of the inverse-scattering problem as a matrix Riemann-Hilbert problem for an unknown $\mathbf{M}(k;x,t)$ as described in the Appendix of \cite{DiFrancoM08}.  
This matrix has jump discontinuities across the real and imaginary axes in the complex $k$-plane
and satisfies the holomorphic involution $\mathbf{M}(-k;x,t)=i^{\sigma_3}\mathbf{M}(k;x,t)i^{-\sigma_3}$ 
%\textcolor{blue}{
and the antiholomorphic involution $\mathbf{M}(-k^*;x,t)^*=\sigma_1\mathbf{M}(k;x,t)\sigma_1$.
%}.
If for $\Im\{k^2\}=0$, $\mathbf{M}_\pm(k;x,t)$ denotes the boundary value taken by $\mathbf{M}$
from a quadrant where $\pm\Im\{k^2\}<0$, then 
\begin{equation}
\mathbf{M}_+(k;x,t)=\mathbf{M}_-(k;x,t)e^{i\theta(k^2;x,t)\sigma_3/\epsilon}\begin{bmatrix}
1\pm |r(k)|^2 & r(k)\\\pm r(k)^* & 1\end{bmatrix}e^{-i\theta(k^2;x,t)\sigma_3/\epsilon},\quad \pm k^2>0,
\end{equation}
where
\begin{equation}
\theta(z)=\theta(z;x,t):=-\frac{2}{\alpha}\left(z-\frac{1}{4}\right)x -\frac{4}{\alpha^2}\left(z-\frac{1}{4}\right)^2t
\label{eq:thetadef}
\end{equation}
and where $r(k)=-r(-k)$ is the reflection coefficient defined 
%\textcolor{blue}{
generally in terms of the scattering coefficients $S_{12}(k)$ and $S_{22}(k)$ by \eqref{eq:reflectioncoefficient1}.
%} \textcolor{red}{in the Appendix to \cite{DiFrancoM08}.}
The matrix $\mathbf{M}(k;x,t)$ is normalized to the identity at $k=0$:  
\begin{equation}
\lim_{k\to 0}\mathbf{M}(k;x,t)=\mathbb{I}
\end{equation}
regardless of the quadrant from which the limit is taken, and is required to have a well-defined
limiting value (a matrix-valued function of $(x,t)$) as $k\to\infty$.  In the case under study in this paper that there
are no eigenvalues of the scattering problem, the matrix $\mathbf{M}(k;x,t)$ is analytic for  $\Im\{k^2\}\neq 0$ and the above properties uniquely determine $\mathbf{M}(k;x,t)$ given $r(\cdot)$.  The corresponding solution of the MNLS equation is then obtained from $\mathbf{M}(k;x,t)$ by the
formula
\begin{equation}
\phi_\epsilon(x,t):=\lim_{k\to\infty}\frac{2k}{\alpha}\frac{M_{12}(k;x,t)}{M_{22}(k;x,t)}.
\label{eq:phirecover}
\end{equation}
(Note that $\mathbf{M}(k;x,t)$ is asymptotically diagonal as $k\to\infty$.  See \cite[page 990]{DiFrancoM08}.)

\subsection{Symmetry reduction}
%\textcolor{red}{Here we have reworked the definition of $\mathbf{N}$ from $\mathbf{M}$ to keep the jump conditions of the standard type for $z<0$, as this is the only region where parametrices (Airy and Painlev\'e) occur on the real axis, and this will make it easier to install them.}
It will be convenient to take advantage of the $k\mapsto -k$ symmetry to define a new unknown by setting
%\textcolor{red}{$\mathbf{N}(z;x,t):=\mathbf{M}(z^{1/2};x,t)$} \textcolor{blue}{
$\mathbf{N}(z;x,t):=\mathbf{M}(i(-z)^{1/2};x,t)$
%} 
for $\Im\{z\}\neq 0$.  
%Here, \textcolor{red}{$z^{1/2}$} 
Since %\textcolor{blue}{
$(-z)^{1/2}$
%} 
denotes the principal
branch of the square root, the matrix $\mathbf{M}(k;x,t)$ as defined in the first 
%\textcolor{red}{(fourth)} \textcolor{blue}{
(second)
%} 
quadrant of the $k$-plane is equivalent to the matrix $\mathbf{N}(z;x,t)$ in the upper (lower) half $z$-plane.  For $z\in\mathbb{R}$, let
$\mathbf{N}_\pm(z;x,t)$ denote the boundary value of $\mathbf{N}$ taken from $\mathbb{C}_\pm$.
Then, if $z>0$, since 
%\textcolor{red}{$\mathbf{N}_\pm(z;x,t)=\mathbf{M}_\mp(z^{1/2};x,t)$}
%\textcolor{blue}{
$\mathbf{N}_\pm(z;x,t)=\mathbf{M}_\mp(\pm z^{1/2};x,t)$ and according to the holomorphic symmetry of $\mathbf{M}(k;x,t)$ we have $\mathbf{M}_+(-z^{1/2};x,t)=i^{\sigma_3}\mathbf{M}_+(z^{1/2};x,t)i^{-\sigma_3}$,
%}, 
\begin{equation}
\mathbf{N}_+(z;x,t)=%{\color{blue}
i^{\sigma_3}
%}
\mathbf{N}_-(z;x,t)%{\color{blue}
i^{-\sigma_3}
%}
e^{i\theta(z;x,t)\sigma_3/\epsilon}
\begin{bmatrix}1 & -r(z^{1/2})\\-r(z^{1/2})^* &1+|r(z^{1/2})|^2\end{bmatrix}e^{-i\theta(z;x,t)\sigma_3/\epsilon},\quad z>0.
\label{eq:Njumpzgt0}
\end{equation}
On the other hand, if $z<0$, then 
%\textcolor{red}{$\mathbf{N}_\pm(z;x,t)=\mathbf{M}_\mp(\pm i(-z)^{1/2};x,t)$ and according to the holomorphic symmetry of $\mathbf{M}(k;x,t)$ we have
%$\mathbf{M}_+(i(-z)^{1/2};x,t)=i^{\sigma_3}\mathbf{M}_+(-i(-z)^{1/2};x,t)i^{-\sigma_3}$}
%\textcolor{blue}{
$\mathbf{N}_\pm(z;x,t)=\mathbf{M}_\mp(i(-z)^{1/2};x,t)$,
%}, 
so
\begin{equation}
\mathbf{N}_+(z;x,t)=%{\color{red}i^{\sigma_3}}
\mathbf{N}_-(z;x,t)%{\color{red}i^{-\sigma_3}}
e^{i\theta(z;x,t)\sigma_3/\epsilon}
\begin{bmatrix}1 & -r(i(-z)^{1/2})\\r(i(-z)^{1/2})^* & 1-|r(i(-z)^{1/2})|^2\end{bmatrix}
e^{-i\theta(z;x,t)\sigma_3/\epsilon},\quad z<0.
\label{eq:Njumpzlt0}
\end{equation}
The function $e^{i\pi/4}(-iz)^{1/2}$ (principal branch) is analytic in the upper half $z$-plane and agrees with $z^{1/2}$ for $z>0$ and with $i(-z)^{1/2}$ for $z<0$.  
%\textcolor{blue}{
Note that both $r(z^{1/2})$ for $z>0$ and $r(i(-z)^{1/2})$ for $z<0$ correspond to the same function $s(z)$ defined for all $z\in\mathbb{R}$ by \eqref{eq:s-define}.
%}  \textcolor{red}{For convenience we will write $r(z^{1/2})$ for
%$z>0$ and $r(i(-z)^{1/2})$ for $z<0$ as a single function of $z\in\mathbb{R}$ as follows:
%\begin{equation}
%s(z):=r(e^{i\pi/4}(-iz)^{1/2}),\quad z\in\mathbb{R}.
%\end{equation}
%}
%\textcolor{blue}{
We frequently omit the explicit dependence on the parameters $(x,t)$ (along with $\epsilon$, $\alpha$, $\delta$, and $\mu$) and simply write $\mathbf{N}(z)=\mathbf{N}(z;x,t)$.  The matrix $\mathbf{N}(z)$ necessarily satisfies the Schwarz symmetry property:
\begin{equation}
\mathbf{N}(z^*)=\sigma_1\mathbf{N}(z)^*\sigma_1
\label{eq:NSchwarz}
\end{equation}
as well as the normalization condition
\begin{equation}
\mathop{\lim_{z\to 0}}_{\Im\{z\}\neq 0}\mathbf{N}(z)=\mathbb{I},
\label{eq:Nnorm}
\end{equation}
and we require $\mathbf{N}(z)$ to have a well-defined (necessarily diagonal) limiting value as $z\to\infty$.
%}

\section{The Functions $g(z)$ and $h(z)$}
\label{sec:gfunction}
In this section, we construct two related analytic functions of $z$ that will depend parametrically
on $(x,t)\in\mathbb{R}^2$.  For notational convenience, we will frequently not write the parameters explicitly as arguments of these functions, but the reader should be aware that the dependence is there.  We wish to stress from the outset that \emph{all objects considered in this section are independent of the basic asymptotic parameter $\epsilon>0$}.
\subsection{The WKB turning point curve.}
\label{sec:tpc}
%In the arguments to follow, a key role will be played by the WKB turning point curve, defined as the
%locus of real $x$ and complex $z$ for which the right-hand side of \eqref{eq:characteristic} vanishes.
In the case of the special initial data under consideration the turning point curve defined in general by \eqref{eq:TPCgeneral} is given by the
equation
\begin{equation}
(Z-A+BT)^2+Z(1-T^2)=0,\quad z=\alpha^2 Z,\quad T:=\tanh(x),
\label{eq:tpc}
\end{equation}
where the parameters $A$ and $B$ are defined by \eqref{eq:ABdefs}.  This is quadratic in $Z$ for fixed $T$ and also is quadratic in $T$ for fixed $Z$.  In solving for $Z$, one easily finds that the discriminant
is $-(1-T^2)(T^2-4BT+4A-1)$, whose second factor is proportional by a negative constant to the quantity $Q$ at $t=0$ as defined by \eqref{eq:specialQ}.  Therefore under
the conditions \eqref{eq:condition-transsonic}--\eqref{eq:condition-no-zeros}, the only (simple) root in the range $|T|<1$ is $T=T_\mathrm{c}$ defined by \eqref{eq:Tcrit}.
%\begin{equation}
%T=T_\mathrm{c}=2B-\sqrt{4B^2-4A+1}.
%\end{equation}
(The other root of the quadratic factor exceeds $T=1$ for $(A,B)$ in the admissible region shown in Figure~\ref{fig:ABplane}.)  Moreover, the discriminant is positive (only) for $T\in(T_\mathrm{c},1)$.  The point $x_\mathrm{c}=\mathrm{arctanh}\,(T_\mathrm{c})$, corresponding to $z=z_\mathrm{c}:=(\alpha (\delta+\mu T_\mathrm{c})-1)/4 = 
(\alpha u_0(x_\mathrm{c})-1)/4$, is the only point on the 
%real part of the 
turning point curve in the $(x,z)$-plane where there is a vertical tangent.  For $x\ge x_\mathrm{c}$
the turning point curve has two real branches that we denote as $z=\mathfrak{a}(x)$ and
$z=\mathfrak{b}(x)$ with $\mathfrak{a}(x)\le\mathfrak{b}(x)$, with equality holding only at $x=x_\mathrm{c}$ and in the limit $x\to +\infty$ as illustrated in Figure~\ref{fig:tpc}.

Horizontal tangents to the turning point curve correspond to simultaneous solutions of \eqref{eq:tpc}
and the identity
\begin{equation}
B(Z-A+BT)-ZT=0
\end{equation}
obtained from \eqref{eq:tpc} by differentiating implicitly with respect to $T$ and setting $dZ/dT=0$.  Using this latter relation to eliminate $Z$ from \eqref{eq:tpc} we obtain an equation for $T$-values of horizontal tangents in the form
\begin{equation}
\frac{BT-A}{(T-B)^2}\left[(B^2-A)T^2+BT-B^2\right]=0
\end{equation}
Therefore, either $T=A/B$ or $T=T_\pm(A,B)$ where
\begin{equation}
T_\pm(A,B):=\pm\frac{2B}{\sqrt{4B^2-4A+1}\pm 1}.
\end{equation}
Since $B\ge 0$ and $B^2\ge A$ both hold throughout the domain of values of $(A,B)$ consistent with
conditions \eqref{eq:condition-transsonic}--\eqref{eq:condition-no-zeros} (see Figure~\ref{fig:ABplane}), it is clear that $T_+(A,B)\ge 0$ while $T_-(A,B)\le 0$.  Moreover, $T_+(A,B)=1$ implies that $B=A$ while $T_-(A,B)=-1$ implies that $B=-A$, both conditions that only occur on the boundary of the admissible region for $(A,B)$.   Also, from the asymptotic expansions of $T_\pm(A,B)$ in the limit  $B\uparrow\infty$ with $A\in\mathbb{R}$ fixed one concludes
that $0\le T_+(A,B)<1$ and $T_-(A,B)<-1$ both hold in the interior of the admissible region, and therefore of $T_\pm(A,B)$ 
only $T_+(A,B)$ corresponds to a value of $x\in\mathbb{R}$.  Since $B>|A|$ holds in the interior of the
admissible region, $T=A/B$ also corresponds to a real value of $x$.  Similar analysis proves that both
$T=A/B$ and $T=T_+(A,B)$ exceed $T_\mathrm{c}$ for $(A,B)$ in the admissible region, so only the real
part of the turning point curve for $T>T_\mathrm{c}$ has any horizontal tangents (defined by $dZ/dT=0$),
and there are exactly two of them.  Also, it is easy to check that throughout the admissible region of the $(A,B)$-plane,
the horizontal tangent at $T=A/B$ occurs on the more positive of the two branches of $Z(T)$ (with value $z=0$) while the horizontal tangent at $T=T_+(A,B)$ occurs on the more negative of the two branches (with value $z_\mathrm{L}$).  
\begin{figure}[h]
\begin{center}
\includegraphics{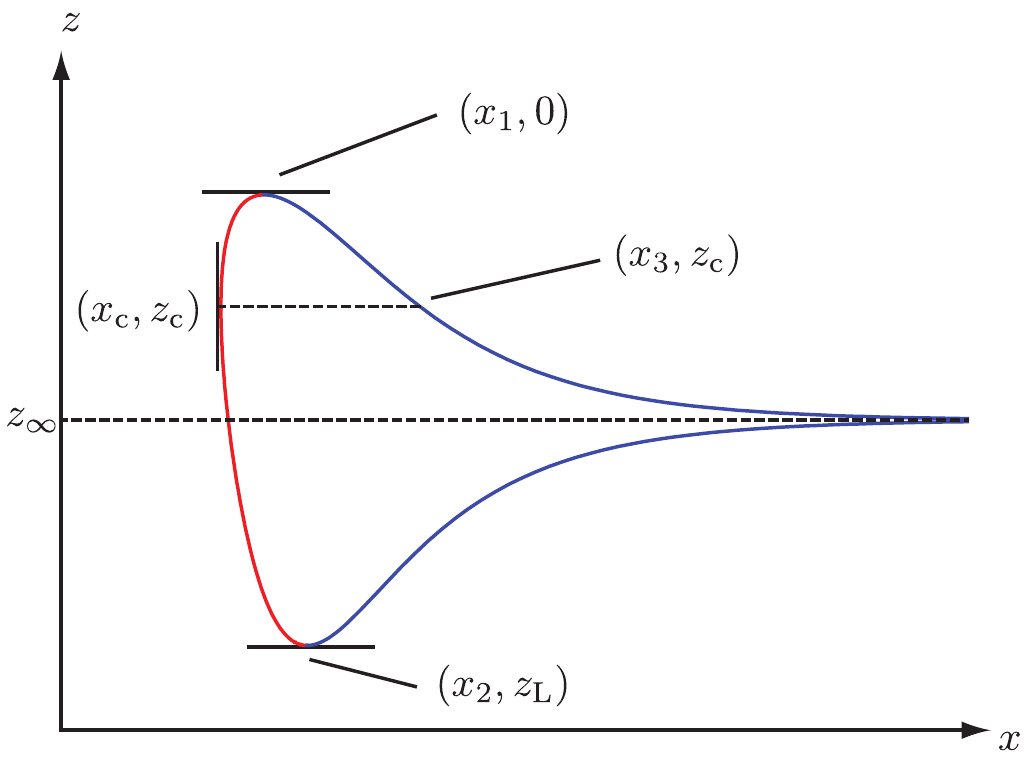}
\end{center}
\caption{The real part of the turning point curve for $(A,B)$ in the admissible region described by conditions \eqref{eq:condition-transsonic}--\eqref{eq:condition-no-zeros}.  The red part of the curve is the graph of a function $x=x_-(z)$ and the blue part of the curve is the graph of a function $x=x_+(z)$ both over the interval $z_\mathrm{L}<z<0$.  For such $z$, $x_\pm(z)$
are real turning points of WKB theory.  The upper curve is the graph of a function $z=\mathfrak{b}(x)$ and the lower curve is the graph of a function $z=\mathfrak{a}(x)$, both defined for $x\ge x_\mathrm{c}$.  The singularity of $x_+(z)$ occurs when $Z=A-B$ or $z=z_\infty:=(1-\alpha\delta-\alpha\mu)/4$, the break point of the equation \eqref{eq:taudef} for $\tau(z)$ in the interval $z_\mathrm{L}<z<0$.}
\label{fig:tpc}
\end{figure}
We define $x_1$ and $x_2$ by
\begin{equation}
x_1:=\mathrm{arctanh}\left(\frac{A}{B}\right)\quad\text{and}\quad x_2:=\mathrm{arctanh}(T_+(A,B)).
\end{equation}
Note that under the conditions \eqref{eq:condition-transsonic}--\eqref{eq:condition-no-zeros}, we have the inequalities $x_2>x_1>x_\mathrm{c}$.
It is also possible to show that these conditions imply that $z_\mathrm{c}\ge z_\infty:=(1-\alpha\delta-\alpha\mu)/4$ ($z_\infty$ as originally defined in \S\ref{sec:asymptotic-properties} is also the point at which the branches of the turning point curve coalesce in the limit $x\to +\infty$, or equivalently, the singularity of $x_+(z)$), with equality occurring only along the part of the boundary of the admissible region of the
parameter space illustrated in Figure~\ref{fig:ABplane} with $A\ge 0$.  This in turn implies that there
exists exactly one value of $x>x_1$ that we denote by $x_3$, at which point $\mathfrak{b}(x_3)=z_\mathrm{c}$.  For $x_\mathrm{c}<x<x_3$ we have $\mathfrak{a}(x)<z_\mathrm{c}<\mathfrak{b}(x)$ and for $x>x_3$ we have $\mathfrak{a}(x)<\mathfrak{b}(x)<z_\mathrm{c}$.
It also follows from the conditions \eqref{eq:condition-transsonic}--\eqref{eq:condition-no-zeros}  that  in fact $x_3>x_2$.
These relationships are illustrated in Figure~\ref{fig:tpc}.

For $x<x_\mathrm{c}$, the turning point curve is complex, with two distinct complex-conjugate branches:
$z=\mathfrak{z}(x)$ with $\Im\{\mathfrak{z}(x)\}>0$ and $z=\mathfrak{z}(x)^*$.  The two branches coincide (with a real value)
only in the limits $x\uparrow x_\mathrm{c}$ (with value $z=z_\mathrm{c}$) and $x\downarrow -\infty$
(with value $Z=A+B>0$ or $z=z_\mathrm{P}$).  Moreover, it is easy to prove that for $(A,B)$ in the admissible region, $\Re\{\mathfrak{z}(x)\}$ is monotone decreasing in $x$ for $-\infty<x<x_\mathrm{c}$,
and 
\begin{equation}
\sup_{x<x_\mathrm{c}}\Re\{\mathfrak{z}(x)\}=\lim_{x\to -\infty}\Re\{\mathfrak{z}(x)\} = z_\mathrm{P}.
\end{equation}
In particular, $\Re\{\mathfrak{z}(x)\}<z_\mathrm{P}$ for all finite $x<x_\mathrm{c}$.  Note also that
for $x<x_\mathrm{c}$, 
\begin{equation}
\Re\{\mathfrak{z}(x)\} = -\frac{1}{4}\left(\alpha u_0(x)-1+2\alpha^2\rho_0(x)\right).
\label{eq:Rezeta}
\end{equation}

%For $x\ge x_\mathrm{c}$, the turning point curve is real, with two branches $z=\mathfrak{a}(x)$ and
%$z=\mathfrak{b}(x)$ with $\mathfrak{a}(x)\le\mathfrak{b}(x)$ and with equality holding only at $x=x_\mathrm{c}$ and
%in the limit $x\to +\infty$.  See Figure~\ref{fig:tpc}.
\subsection{Basic construction of $g(z)$ and $h(z)$.  The auxiliary function $Y(z)$.}
\label{sec:BasicFormulae}
Let a complex number $q$ with $\Im\{q\}>0$ be given, along with a simple oriented arc $\mathcal{B}$ from $z=q$ to $z=z_\mathrm{P}$ in the open upper half-plane.  Let $S(z)=S(z;q,q^*,\mathcal{B})$ be the function uniquely defined by the following properties:
\begin{itemize}
\item $S(z)$ is defined and analytic for $z\in\mathbb{C}\setminus (\mathcal{B}\cup \mathcal{B}^*)$.
\item $S(z)^2 = (z-q)(z-q^*)$.
\item $S(z)=z + O(1)$ as $z\to\infty$.
\end{itemize}
At times we will want to think of $q$ and $q^*$ as being independent complex variables, but when they are linked by complex conjugation, $S(z)$ is obviously a Schwarz-symmetric function:  $S(z^*)=S(z)^*$.  Note that $S(z)$ changes sign across its branch cut $\mathcal{B}\cup \mathcal{B}^*$.  

Let us assume (this will be completely clarified in \S\ref{sec:q}) that given $(x,t)\in\mathbb{R}^2$,
the complex number $q$ satisfies the equations
\begin{equation}
\begin{split}
M_0(q,q^*;x,t)&:=\frac{2\pi}{\alpha^2}(2(q+q^*)t+\alpha x-t)+I_0(q,q^*)=0\\
M_1(q,q^*;x,t)&:=\frac{\pi}{\alpha^2}((3q^2+2|q|^2+3q^{*2})t+(q+q^*)(\alpha x-t))+I_1(q,q^*)=0,
\end{split}
\label{eq:momentscx}
\end{equation}
where
\begin{equation}
I_p(q,q^*):=\Im\left\{\int_\mathcal{B}\frac{\Phi'(s)s^p\,ds}{S_+(s)}\right\},\quad p=0,1.
\label{eq:Ip}
\end{equation}
As usual, the notation $S_+(z)$ denotes the boundary value taken from the left side of $\mathcal{B}$ as the arc is traversed according to its orientation from $q$ to $z_\mathrm{P}$.  Now, let $p(z)$ be the function analytic for $z\in\mathbb{C}\setminus (\mathcal{B}\cup \mathcal{B}^*)$ given by the Cauchy-type integral formula
\begin{equation}
p(z):=\frac{S(z)}{2\pi i}\left[\int_\mathcal{B}\frac{\Phi'(s)\,ds}{S_+(s)(s-z)}-(*)^*\right]
+\theta'(z) +\frac{8tS(z)}{\alpha^2}, \quad z\in\mathbb{C}\setminus (\mathcal{B}\cup \mathcal{B}^*),
\label{eq:gprime1}
\end{equation}
where the notation $(*)^*$ indicates the Schwarz reflection $w(z^*)^*$ of the function $w(z)$ immediately preceding the minus sign, and where $\theta(z)=\theta(z;x,t)$ is defined by \eqref{eq:thetadef}.  The conditions \eqref{eq:momentscx} obviously imply that
\begin{equation}
p(z)=O\left(\frac{1}{z^2}\right),\quad z\to\infty,
\end{equation}
and hence the contour integral
\begin{equation}
g(z):=\int_0^z p(s)\,ds,\quad z\in\mathbb{C}\setminus (\mathcal{B}\cup \mathcal{B}^*)
\end{equation}
is independent of path (as long as the path avoids the branch cut $\mathcal{B}\cup \mathcal{B}^*$ of $p$) and defines a function analytic in the same domain as $p$, with a well-defined limiting value $g(\infty)$.
In terms of $g(z)$ we define a related function $h$ as follows:
\begin{equation}
h(z):=\theta(z)+\frac{1}{2}\Phi(z)-g(z),\quad z\in\mathbb{C}\setminus((-\infty,z_\mathrm{L}]\cup[0,+\infty)\cup \mathcal{B}\cup \mathcal{B}^*).
\label{eq:hdef}
\end{equation}
Because $\theta(z)$ is entire, $h(z)$ is analytic exactly in the intersection of the domains of analyticity of $\Phi(z)$
and $g(z)$ as explicitly indicated.  Both $g$ and $h$ are Schwarz-symmetric functions:  $g(z^*)=g(z)^*$ and $h(z^*)=h(z)^*$.

Let the function $Y(z)=Y(z;q,q^*)$ be defined for $\Im\{z\}>0$ and $z$ near $\mathcal{B}$ by the following Cauchy-type integral:
\begin{equation}
Y(z;q,q^*):=\left[\frac{1}{2\pi i}\oint_L\frac{\Phi'(s)\,ds}{S(s)(s-z)} + (*)^*\right] + \frac{16t}{\alpha^2},
\label{eq:Yfdefine}
\end{equation}
where $L$ is a clockwise (negatively) oriented loop in the upper half $s$-plane beginning and ending at $z_\mathrm{P}$ and encircling both $s=z$ and the arc $\mathcal{B}$ exactly once.  Obviously $Y$ is an analytic function of $z$ in the interior of the loop $L$.  Exploiting more detailed information about $\Phi$ (as is available from the specific formula \eqref{eq:PhidefStirling} but that may or may not be available for more general initial data when one uses instead the WKB formula \eqref{eq:PhidefWKB}) allows us define a global analytic continuation of $Y(z;q,q^*)$ as we will now show.  Indeed, using the fact that $\Phi'(z)$ is analytic in the open upper half-plane and satisfies $\Phi'(z^*)^*=\Phi'(z)$, along with the asymptotic estimate $\Phi'(z)=O(\log|z|)$ as $z\to\infty$ allows us to deform the contour $L$
to a contour along the real axis; therefore
\begin{equation}
Y(z;q,q*)=\frac{16t}{\alpha^2}-\frac{1}{2\pi i}\int_\mathbb{R}\frac{\Phi'_+(s)-\Phi'_-(s)}{S(s)(s-z)}\,ds
=\frac{16t}{\alpha^2}-\frac{1}{\pi}\int_\mathbb{R}\frac{\Im\{\Phi'_+(s)\}}{S(s)(s-z)}\,ds
\end{equation}
defines the analytic continuation of $Y$ to the domain $z\in\mathbb{C}\setminus\mathbb{R}$.
Now, we recall \eqref{eq:ImPhiPlus} along with the definitions \eqref{eq:Omegadef} of $\Omega(z)$ and \eqref{eq:Rdef} of $R(z)$ and the fact that for $s\in\mathbb{R}$,  $S(s)=-\sqrt{(s-\Re\{q\})^2+\Im\{q\}^2}$ for $s<z_\mathrm{P}$ and $S(s)=\sqrt{(s-\Re\{q\})^2+\Im\{q\}^2}$ for $s>z_\mathrm{P}$, and we therefore find that when $\Phi$ is given by \eqref{eq:PhidefStirling},
\begin{multline}
Y(z;q,q^*):=\frac{16t}{\alpha^2}\\{}-\frac{2}{\alpha}\int_{-\infty}^{z_\mathrm{L}}\frac{ds}{\sqrt{(s-\Re\{q\})^2+\Im\{q\}^2}(s-z)} +4\int_{-\infty}^{z_\mathrm{L}}\frac{ds}{\sqrt{\mu^2-16s}\sqrt{(s-\Re\{q\})^2+\Im\{q\}^2}(s-z)}\\
{}-4\int_0^{z^+}\frac{ds}{\sqrt{\mu^2-16s}\sqrt{(s-\Re\{q\})^2+\Im\{q\}^2}(s-z)}+\frac{2}{\alpha}\int_{z^+}^{+\infty}\frac{ds}{\sqrt{(s-\Re\{q\})^2+\Im\{q\}^2}(s-z)}.
\label{eq:Ydef}
\end{multline}
This completes the extension of $Y(z;q,q^*)$ as an analytic function from $z$ near $\mathcal{B}$ to the maximal slit domain $\mathbb{C}_+\cup\mathbb{C}_-\cup (z_\mathrm{L},0)$.

The function $Y(z;q,q^*)$ has many purposes in our analysis.  For example, it is easy to check with the use of elementary contour deformations that the following identities hold:
\begin{equation}
g_+'(z)-g_-'(z)=S_+(z)Y(z;q,q^*),\quad z\in \mathcal{B},
\label{eq:gplusminusprime}
\end{equation}
and
\begin{equation}
2h'(z)=-S(z)Y(z;q,q^*),\quad z\in\mathbb{C}\setminus ((-\infty,z_\mathrm{L}]\cup[0,+\infty)\cup \mathcal{B}\cup \mathcal{B}^*).
\label{eq:2hprime}
\end{equation}
The key property of the analytic function $Y(z;q,q^*)$ that we will require frequently is the following.
\begin{proposition}
For $t> 0$ there exists exactly one simple zero of $Y(z;q,q^*)$ in its domain of analyticity (and hence by Schwarz symmetry necessarily located in the real interval $z_\mathrm{L}<z<0$).  For $t<0$ there exist exactly three zeros  (counting with multiplicities) of $Y(z;q,q^*)$ in its domain of analyticity, and in this case given any $M>0$, for $-t$ sufficiently small exactly one simple zero lies in the real interval $z_\mathrm{L}<z<0$ while there is an additional simple zero in each of the upper and lower half-planes with $|z|>M$. (In the boundary
case of $t=0$ the non-real zeros are at $z=\infty$.)
\label{prop:Yzeroscx}
\end{proposition}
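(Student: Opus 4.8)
The plan is to count the zeros of $Y(z;q,q^*)$ in its domain of analyticity $D:=\mathbb{C}\setminus\big((-\infty,z_\mathrm{L}]\cup[0,+\infty)\big)$ by the argument principle, supplemented by an elementary intermediate-value/continuity argument on the interval $(z_\mathrm{L},0)$. First I would record the features of $Y$ that drive the count, all read off from \eqref{eq:Ydef}. (i) $Y$ is Schwarz-symmetric, $Y(z^*;q,q^*)=Y(z;q,q^*)^*$, so its zeros in $D$ either lie on $(z_\mathrm{L},0)$ or occur in complex-conjugate pairs in $\mathbb{C}_+\cup\mathbb{C}_-$. (ii) $Y\to 16t/\alpha^2$ as $z\to\infty$, which is nonzero precisely when $t\neq 0$; more precisely (consistent with \eqref{eq:PhiPrimeExpansion}) $Y-16t/\alpha^2=O(z^{-1}\log|z|)$. (iii) At each of the three real points $z_\mathrm{L}$, $0$, and $z^+$ — the points at which the formula \eqref{eq:ImPhiPlus} changes, equivalently at which the density in \eqref{eq:Ydef} is discontinuous — the function $Y$ has only a logarithmic singularity $Y(z)=c_\bullet\log(z-\bullet)+O(1)$, the coefficients being \emph{real} with $c_{z_\mathrm{L}}<0$, $c_0>0$, $c_{z^+}<0$; these signs follow from the endpoint contributions of the integrals in \eqref{eq:Ydef} together with the parameter inequalities \eqref{eq:condition-transsonic}--\eqref{eq:condition-no-zeros} (for $c_0>0$ one uses $\mu>0$; for $c_{z_\mathrm{L}}<0$ one uses, via \eqref{eq:z-plus-minus}, that $-\tfrac2\alpha+\tfrac4{\sqrt{\mu^2-16z_\mathrm{L}}}=-\tfrac{2\sqrt{\Delta}}{\alpha(\alpha^2+\sqrt{\Delta})}<0$ with $\Delta:=\alpha^4+\tfrac14\alpha^2\mu^2-\alpha^2(1-\alpha\delta)>0$).

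Next I would apply the argument principle on $\partial D$, taken with small circular indentations of radius $\eta$ about $z_\mathrm{L}$ and $0$ and truncated by a large circle $|z|=\mathcal{R}$. Because $Y\to16t/\alpha^2\neq0$ when $t\neq0$, the large-circle contribution to $\Delta\arg Y$ tends to $0$ as $\mathcal{R}\to\infty$; because $Y$ is $\asymp\log$ with essentially constant argument on a small circle about a logarithmic singularity, the indentation contributions tend to $0$ as $\eta\to0$. Folding the lower sides of the slits onto the upper sides via the Schwarz symmetry, the number $N$ of zeros of $Y$ in $D$ equals $\tfrac1\pi$ times the total continuous variation of $\arg Y_+(s)$ as $s$ ranges over $(-\infty,z_\mathrm{L})\cup(0,+\infty)$. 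This variation I would compute from \eqref{eq:Ydef} via the Plemelj formula: on each of the subintervals $(-\infty,z_\mathrm{L})$, $(0,z^+)$, $(z^+,+\infty)$ exactly one (two, on $(-\infty,z_\mathrm{L})$) of the Cauchy integrals acquires a residue term while the rest stay real, so that $\Im\{Y_+(s)\}$ has a \emph{definite sign} on each subinterval — negative on $(-\infty,z_\mathrm{L})$ and on $(0,z^+)$, positive on $(z^+,+\infty)$ — and in particular changes sign only once, at $z^+$. Combining this with the endpoint data of the previous paragraph ($Y_+\to16t/\alpha^2$ at $\pm\infty$; $\Re\{Y_+\}\to+\infty$ at $z_\mathrm{L}$ and $z^+$ and $\Re\{Y_+\}\to-\infty$ at $0$ with $\Im\{Y_+\}$ remaining finite there), $\arg Y_+$ stays confined to an open half-plane on each subinterval and hence cannot wind; tracking the endpoint values through the single sign change at $z^+$ yields total variation $\pi$ when $16t/\alpha^2>0$ and $3\pi$ when $16t/\alpha^2<0$, i.e. $N=1$ for $t>0$ and $N=3$ for $t<0$.

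It remains to locate the zeros. Since $Y$ is real and continuous on $(z_\mathrm{L},0)$ with $Y\to+\infty$ as $z\to z_\mathrm{L}^+$ and $Y\to-\infty$ as $z\to0^-$ (from $c_{z_\mathrm{L}}<0$, $c_0>0$), the intermediate value theorem produces at least one zero there; for $t>0$ the count $N=1$ forces it to be unique and simple and the only zero in $D$. For $t<0$ the remaining two zeros must be a conjugate pair off $(z_\mathrm{L},0)$. To show they escape to $\infty$ as $t\uparrow0$, write $Y=16t/\alpha^2+Y_0$ with $Y_0=Y_0(z;q,q^*)$ the $t$-independent Cauchy-transform part of \eqref{eq:Ydef}; then $Y_0(z)\to0$ as $z\to\infty$, and (by the $t=0$ instance of the count, or directly) $Y_0$ vanishes at exactly one point $z^\circ\in(z_\mathrm{L},0)$. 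For any $M>0$, $Y_0$ is bounded away from $0$ on the compact set $\{z\in D:\ |z-z^\circ|\ge\eta,\ |z|\le M\}$ (its only zero in $D$ being $z^\circ$, and $Y_0$ being unbounded near $z_\mathrm{L},0$), so for $-t$ sufficiently small the zeros of $Y$ in $D$ lie in $\{|z-z^\circ|<\eta\}$ (one zero, by Rouché or by local monotonicity at $z^\circ$) together with $\{|z|>M\}$ (the remaining two, one per half-plane by Schwarz symmetry); the degenerate assertion at $t=0$ is then simply $Y_0(\infty)=0$.

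The step I expect to be the main obstacle is the winding computation of the second paragraph, i.e. pinning down $N=1$ and $N=3$ exactly. This requires (a) verifying the signs of the logarithmic coefficients $c_{z_\mathrm{L}},c_0,c_{z^+}$ and of $\Im\{Y_+\}$ on each subinterval, which leans on the inequalities \eqref{eq:condition-transsonic}--\eqref{eq:condition-no-zeros} together with the ordering $z_\mathrm{L}<z_\infty<0<z^+<z_\mathrm{P}$, and (b) checking that the real (principal-value) part of $Y_+$ introduces no additional winding on the subintervals where $\Im\{Y_+\}$ keeps a fixed sign. A secondary subtlety is the uniqueness of the real zero in $(z_\mathrm{L},0)$ when $t<0$: this does \emph{not} follow from monotonicity of $Y$ there (the derivative $Y'$ is not of one sign on $(z_\mathrm{L},0)$), but only from the global count $N=3$ together with the continuity argument of the third paragraph.
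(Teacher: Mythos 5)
Your proof follows the same strategy as the paper's: compute $\Im\{Y_+\}$ on the cut via the Plemelj formula applied to \eqref{eq:Ydef}, record the logarithmic endpoint behavior at $z_\mathrm{L}$, $0$, $z^+$ together with the sign pattern of $\Im\{Y_+\}$ on the three subintervals, and then deduce the zero count from the winding number of the boundary image under the argument principle, finishing the localization for $-t$ small by continuity. You are somewhat more explicit than the paper (which leans on Figure~\ref{fig:Ycurve} and declares the winding ``obvious,'' and disposes of the $t<0$ localization in one sentence invoking continuity and Schwarz symmetry), and your decomposition $Y=16t/\alpha^2+Y_0$ with a Rouch\'e argument near the $t$-independent zero $z^\circ$ is a cleaner way to justify the escape to infinity, but the route and key computations coincide.
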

\begin{proof}
The domain of analyticity of $Y(z;q,q^*)$ is the slit domain $z\in\mathbb{C}\setminus ((-\infty,z_\mathrm{L}]\cup[0,+\infty))$, and we note 
that $Y$ as given by \eqref{eq:Ydef} satisfies the Schwarz symmetry condition $Y(z^*;q,q^*)^*=Y(z;q,q^*)$.   From these facts and the Plemelj formula we find that upon taking a boundary value as $z$ approaches the real axis from the upper half-plane,
\begin{equation}
\Im\{Y_+(z;q,q^*)\} = \begin{cases}
\displaystyle \frac{4\pi\alpha-2\pi\sqrt{\mu^2-16z}}{\alpha\sqrt{\mu^2-16z}\sqrt{(z-\Re\{q\})^2+\Im\{q\}^2}},&\quad z<z_\mathrm{L}\\\\
0,&\quad z_\mathrm{L}<z<0\\\\
\displaystyle \frac{-4\pi}{\sqrt{\mu^2-16z}\sqrt{(z-\Re\{q\})^2+\Im\{q\}^2}},&\quad 0<z<z^+\\\\
\displaystyle\frac{2\pi}{\alpha\sqrt{(z-\Re\{q\})^2+\Im\{q\}^2}},&\quad z>z^+.
\end{cases}
\label{eq:ImYPlus}
\end{equation}
Also, from \eqref{eq:Ydef} it is clear that the complex boundary value $Y_+(z;q,q^*)$ is actually
analytic except at the points $z\in\{z_\mathrm{L},0,z^+\}$, where it becomes infinite in magnitude
with logarithmic singularities.  We may combine this information to show that, when the parameters correspond to the admissible region shown in Figure~\ref{fig:ABplane}, the following
statements are valid.
\begin{enumerate}
\item There is a positive constant $c>0$ such that as $z$ increases from $z=0$ to $z=z^+$, $\Im\{Y_+(z;q,q^*)\}\le -c<0$, while $\Re\{Y_+(z;q,q^*)\}\to -\infty$ as $z\downarrow 0$ and $\Re\{Y_+(z;q,q^*)\}\to +\infty$ as $z\uparrow z^+$.
\label{page:cxYstatements}
\item As $z$ increases from $z=z^+$ to $z=+\infty$, $\Im\{Y_+(z;q,q^*)\}\ge 0$ with equality \emph{only} in the limit $z\uparrow +\infty$, while $\Re\{Y_+(z;q,q^*)\}\to +\infty$ as $z\downarrow z^+$
and $\Re\{Y_+(z;q,q^*)\}\to 16t/\alpha^2$ as $z\uparrow +\infty$.
\item As $z$ increases from $z=-\infty$ to $z=z_\mathrm{L}$, $\Im\{Y_+(z;q,q^*)\}\le 0$ with
equality \emph{only} in the limit $z\downarrow -\infty$, while $\Re\{Y_+(z;q,q^*)\}\to 16t/\alpha^2$
as $z\downarrow -\infty$ and $\Re\{Y_+(z;q,q^*)\}\to +\infty$ as $z\uparrow z_\mathrm{L}$.
\end{enumerate}
In particular, as $z$ increases from $z=0$ to $z=+\infty$ and then from $z=-\infty$ to $z=z_\mathrm{L}$, the complex boundary value $Y_+(z;q,q^*)$ traces out a curve in the $Y$-plane
that crosses the real axis only at one finite point, $Y=16t/\alpha^2$, corresponding to $z=\pm\infty$.  The three intervals in the $z$-plane and their images under $Y_+$ in the $Y$-plane are shown in red, with
numbers corresponding to the above enumeration, in Figure~\ref{fig:Ycurve}.
\begin{figure}[h]
\begin{center}
\includegraphics{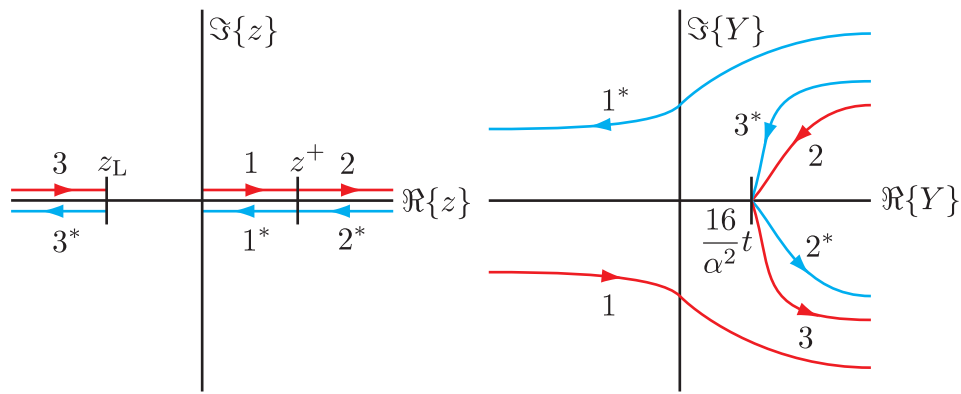}
\end{center}
\caption{The domain of analyticity of $Y(z;q,q^*)$ is $\mathbb{C}\setminus ((-\infty,z_\mathrm{L}]\cup [0,+\infty))$, and a qualitative sketch of the image of the boundary of this slit domain in the $Y$-plane. 
}
\label{fig:Ycurve}
\end{figure}
The desired result then follows from an elementary application of the Argument Principle using the information enumerated
above regarding the boundary value $Y_+(z;q,q^*)$ on the cut (and corresponding information
regarding the boundary value $Y_-(z;q,q^*)$ follows by Schwarz symmetry).  It is obvious
that as the boundary of the slit domain is traversed once in the positive sense (in the order $1$, $2$, $3$, $3^*$, $2^*$, and $1^*$ as indicated in Figure~\ref{fig:Ycurve}), the image curve
in the $Y$-plane encircles the origin exactly once in the positive sense for $t>0$ and exactly three times in the positive sense for $t<0$.  See Figure~\ref{fig:Ycurve}.  This gives the count of the zeros in the domain of analyticity.  The proof is complete upon using continuity of $Y$ with respect to $t$ and Schwarz symmetry.
\end{proof}

It is not difficult to obtain the asymptotic expansion of $Y(z;q,q^*)$ as $z\to\infty$ with $\Im\{z\}>0$.  The 
easiest way to do this is to analytically continue the representation \eqref{eq:Yfdefine} to $z$ outside the loop $L$, by extracting a residue:
\begin{equation}
Y(z;q,q^*)=\left[\frac{1}{2\pi i}\oint_L\frac{\Phi'(s)\,ds}{S(s)(s-z)}+(*)^*\right]-\frac{\Phi'(z)}{S(z)} +\frac{16t}{\alpha^2},\quad \text{$z$ outside of $L$ with $\Im\{z\}>0$.}
\label{eq:Yexpandlast}
\end{equation}
%where $f(z)$ is given by \eqref{eq:fdefineII}.  In this formulation, $Y$ extends analytically from $B$ to
%the interior of the loop $L$.  To obtain a formula admitting analytic continuation to large $z$ with $\Im\{z\}>0$, we pass $z$ through the loop $L$ at the cost of a residue to obtain the representation
%\begin{equation}
%Y(z;q,q^*)=-\frac{\Phi'(z)}{S(z)} + f(z) + f(z^*)^* +\frac{16t}{\alpha^2}
%\label{eq:Yexpandlast}
%\end{equation}
%where now $z$ is taken to be \emph{outside} of the loop $L$ appearing in the integral formula
%\eqref{eq:fdefineII} for $f$.  With this interpretation of $f$, we can let $z$ tend to infinity by expanding the
Now we can let $z$ tend to infinity by expanding the
Cauchy kernel in a geometric series for $L$ fixed.  Thus:
\begin{equation}
\begin{split}
\frac{1}{2\pi i}\oint_L\frac{\Phi'(s)\,ds}{S(s)(s-z)} +(*)^* &=- \Im\left\{\frac{1}{\pi}\oint_L\frac{\Phi'(s)\,ds}{S(s)}\right\}\frac{1}{z} + O\left(\frac{1}{z^2}\right)\\
&=-\frac{2}{\pi}\Im\left\{\int_B\frac{\Phi'(s)\,ds}{S_+(s)}\right\}\frac{1}{z}+O\left(\frac{1}{z^2}\right),\quad z\to\infty.
\end{split}
\end{equation}
Assuming that $q$ is chosen so that at least the first of the equations \eqref{eq:momentscx} holds,
we therefore find that
\begin{equation}
 \frac{1}{2\pi i}\oint_L\frac{\Phi'(s)\,ds}{S(s)(s-z)} +(*)^*= \frac{4}{\alpha^2}(2(q+q^*)t+\alpha x-t)\frac{1}{z}+O\left(\frac{1}{z^2}\right),\quad z\to\infty.
\label{eq:fplusfstarstar}
\end{equation}
The final step is to recall the expansions \eqref{eq:PhiPrimeExpansion} and \eqref{eq:OneOverSExpansion} to find
\begin{equation}
-\frac{\Phi'(z)}{S(z)} = \frac{2\pi i}{z}+O(z^{-3/2}),\quad z\to\infty,\quad\Im\{z\}>0.
\label{eq:minusphiprimeovers}
\end{equation}
Using  \eqref{eq:fplusfstarstar} with \eqref{eq:minusphiprimeovers} in \eqref{eq:Yexpandlast} then gives
\begin{equation}
Y(z;q,q^*)=\frac{16t}{\alpha^2} +\left[\frac{4}{\alpha^2}(2(q+q^*)t+\alpha x-t)+2\pi i\right]\frac{1}{z}+O\left(\frac{1}{z^{3/2}}\right),\quad z\to\infty,\quad\Im\{z\}>0.
\label{eq:Yexpansion}
\end{equation}

\subsection{The endpoint $q$ as a function of $x$ and $t$}
\label{sec:q}
Now we return to the conditions \eqref{eq:momentscx} supposed to be satisfied by the endpoint $q$ of the arc $\mathcal{B}$, given $(x,t)\in\mathbb{R}^2$.  First we dispense with the special case of $t=0$.
\begin{proposition}
The equations \eqref{eq:momentscx} hold for $t=0$ and $x\le x_\mathrm{c}$ (for any arc $\mathcal{B}$ connecting $q$ to $z_\mathrm{P}$ lying in the quadrant given by the inequalities $\Im\{z\}>0$
and $\Re\{z\}<z_\mathrm{P}$) with $q=\mathfrak{z}(x)$ and $q^*=\mathfrak{z}(x)^*$ being the two branches of the complex part of the turning point curve.
%Suppose that $t=0$.  Then the equations \eqref{eq:momentscx} are satisfied (for any arc $B$ connecting $q$ to $z_\mathrm{P}$ lying in the quadrant given by the inequalities $\Im\{z\}>0$ and $\Re\{z\}<z_\mathrm{P}$) by the complex part of the turning point curve:  $q=q(x)=\mathfrak{z}(x)$ for $x<x_\mathrm{c}$.
\label{prop:momentscxtzero}
\end{proposition}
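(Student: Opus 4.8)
The plan is to reduce the proposition to the statement that the function $p(z)$ defined by \eqref{eq:gprime1}, evaluated at $t=0$ with $q=\mathfrak z(x)$, decays like $O(z^{-2})$ as $z\to\infty$. Indeed, expanding the Cauchy kernel $1/(s-z)$ in \eqref{eq:gprime1} together with $S(z)=z-\tfrac12(q+q^*)+O(z^{-1})$ shows that the $O(1)$ and $O(z^{-1})$ coefficients of $p(z)$ are affine in $I_0$ and in $(I_0,I_1)$ respectively, with the remaining terms explicit in $(x,t)$; requiring both coefficients to vanish is exactly the system \eqref{eq:momentscx}. (The $O(z)$ coefficient cancels identically because of the way $16t/\alpha^2$ enters, and the $O(z^{-1/2})$ coefficient cancels between $\tfrac12\Phi'$ and $\tfrac12 SY$ by \eqref{eq:PhiPrimeExpansion} and \eqref{eq:OneOverSExpansion}.) So everything comes down to evaluating $I_0$ and $I_1$ when $q=\mathfrak z(x)$ and $t=0$ and checking that they equal $-2\pi x/\alpha$ and $-\pi x(q+q^*)/\alpha$.

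I would first record that $I_0$ and $I_1$ are independent of the choice of $\mathcal B$ within the quadrant $\{\Im z>0,\ \Re z<z_\mathrm{P}\}$: using $S_+=-S_-$ on $\mathcal B$ and the analyticity of $\Phi'$ across $\mathcal B$, the integral $\int_{\mathcal B}\Phi'(s)s^p\,ds/S_+(s)$ is half a loop integral of $\Phi'(s)s^p/S(s)$ encircling only $\mathcal B$, and this loop may be deformed freely in $\mathbb C\setminus((-\infty,z_\mathrm{L}]\cup[0,+\infty))$ while $S$ is unchanged off its cut; so it suffices to verify \eqref{eq:momentscx} for one convenient arc. The key algebraic input is that, since $\mathfrak z(x)$ and $\mathfrak z(x)^*$ are precisely the two roots in $z$ of the quadratic $\chi(x;z)$ from \eqref{eq:TPCgeneral}--\eqref{eq:tpc} (whose leading coefficient in $z$ is $16$), the choice $q=\mathfrak z(x)$ yields $S(z)^2=(z-\mathfrak z(x))(z-\mathfrak z(x)^*)=\tfrac1{16}\chi(x;z)$, with the branch fixed by $S(z)\sim z$; up to the constant $\tfrac14$ this is the WKB root $\sqrt\chi$ of \S\ref{sec:tpc}.

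To evaluate $I_0$ I would deform the loop representation off $\mathcal B\cup\mathcal B^*$ onto the slits $(-\infty,z_\mathrm{L}]$ and $[0,+\infty)$ of $\Phi'$, replacing the integrand across the slits by the jump $2i\,\Im\{\Phi'_+\}$, which is explicit from \eqref{eq:ImPhiPlus}, and with $S$ on the real axis given by $\mp\sqrt{(s-\Re\{q\})^2+\Im\{q\}^2}$ as recorded just before \eqref{eq:Ydef}. The resulting real integrals are elementary, and since the $\Im\{\Phi'_+\}$-data and $S^2=\tfrac1{16}\chi$ are both built from $\Omega$, $R$, $\mu$, $\alpha$, the defining relation $\chi(x;\mathfrak z(x))=0$ of the turning point is exactly what makes the answer collapse to $-2\pi x/\alpha$; the $I_1$ computation is parallel, with the one extra feature being a nonvanishing contribution from the arcs near $z=\infty$ (where $\Phi'(s)s/S(s)$ does not decay) that must be extracted using \eqref{eq:PhiPrimeExpansion} and \eqref{eq:OneOverSExpansion}. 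Equivalently, the same evaluation can be run from the WKB representation \eqref{eq:PhidefWKB} of $\Phi$ by interchanging the $s$- and $x$-integrations. A useful simplification for $M_1$ is to subtract $\tfrac12(q+q^*)M_0$ first, so that, granting $M_0=0$, the condition $M_1=0$ becomes $\Im\int_{\mathcal B}\Phi'(s)S'_+(s)\,ds=0$ (using $S'_+(s)=(s-\tfrac12(q+q^*))/S_+(s)$).

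The main obstacle is this last bit of bookkeeping. The point $z_\mathrm{P}$ is simultaneously the endpoint of $\mathcal B$, an interior point of the cut $\mathcal B\cup\mathcal B^*$ where $S$ jumps in sign, and a point of the slit $[0,+\infty)$ where $\Phi'$ has a logarithmic singularity (so $\Phi''$ a simple pole of residue $4/\alpha$), and one must carefully track the associated contributions along with the integrable $(s-q)^{-1/2}$-type singularities of the integrands at $q$ and $q^*$; keeping these and the infinity arcs consistent is where the work lies. Finally, $t=0$ is the boundary case of the general construction of \S\ref{sec:q}, the one in which the non-real zeros of $Y$ escape to infinity (Proposition~\ref{prop:Yzeroscx}), so one should either carry out the contour computation directly at $t=0$ as above, or deduce the $t=0$ identities by continuity from the (still to be established) case $t\ne0$; the direct route seems the more self-contained.
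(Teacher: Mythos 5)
Your setup is correct and matches the paper's framing: the equations \eqref{eq:momentscx} are precisely the conditions forcing $p(z)=O(z^{-2})$, the values $I_p(q,q^*)$ depend only on the homotopy class of $\mathcal{B}$ in the quadrant $\{\Im z>0,\ \Re z<z_\mathrm{P}\}$, and the algebraic identity $S(z)^2=(z-\mathfrak z(x))(z-\mathfrak z(x)^*)=\tfrac{1}{16}\chi(x;z)$ with $q=\mathfrak z(x)$ is the right key input. The reduction $M_1-\tfrac12(q+q^*)M_0$ and the identity $S'(z)=(z-\tfrac12(q+q^*))/S(z)$ are also correct and useful (the paper exploits a similar device in the proof of Proposition \ref{prop:xc-continue}).

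However, the route you present as primary --- deform the loop onto the slits $(-\infty,z_\mathrm{L}]\cup[0,+\infty)$ and integrate the explicit jump $\Im\{\Phi'_+\}$ --- has a genuine gap. After that deformation $I_0$ becomes a sum of real integrals of the types
\begin{equation*}
\int\frac{ds}{\sqrt{\mu^2-16s}\,\sqrt{(s-\Re\{q\})^2+\Im\{q\}^2}} \qquad\text{and}\qquad \int\frac{ds}{\sqrt{(s-\Re\{q\})^2+\Im\{q\}^2}}
\end{equation*}
over the several subintervals of the slits. The first type is an \emph{elliptic} integral (the radicand is a cubic with three distinct roots when $\Im\{q\}>0$), so the claim that ``the resulting real integrals are elementary'' is not correct. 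Worse, after this deformation $x$ no longer appears anywhere in the formula --- only $q,q^*$ do --- and the correspondence $q=\mathfrak z(x)$, i.e.\ $\chi(x;q)=0$, determines $x$ from $q$ only implicitly through $\mathrm{arctanh}$ of an algebraic function. Showing that these elliptic-type expressions in $q$ reduce to $-2\pi x/\alpha$ under that implicit correspondence requires a real idea; invoking ``$\chi(x;\mathfrak z(x))=0$'' names the constraint but does not explain how the elliptic parts disappear.

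The paper's proof avoids this entirely by keeping the $x$-dependence explicit rather than implicit: it leaves $\mathcal{B}$ on the complex branch of the turning point curve, uses the WKB representation \eqref{eq:PhidefWKB} of $\Phi$ (which has $x_-(s)$ built in), exploits the inverse relation $x_-(\mathfrak z(y))=y$ to turn the single $s$-integral into a double $(y,v)$-integral, and after a Fubini exchange reparametrizes the inner integral back to $s$ so that the integrand is a rational function over a closed loop, evaluated by residues at infinity --- genuinely elementary. You mention this WKB route in passing as an ``equivalent'' alternative, but it is not merely equivalent: it is the step that makes the evaluation tractable and sidesteps the elliptic integrals your deformation produces. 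As written, your proposal would stall exactly at the point where the integrals must actually be computed.
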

\begin{proof}
Set $t=0$ and assume that $q=\mathfrak{z}(x)$.  Whatever the arc $\mathcal{B}$ that is the path of integration in both integrals in \eqref{eq:momentscx} actually is,
it is homotopic (with orientation preserved) to the image of the map $s=\mathfrak{z}(y)$ as $y$ decreases from
$x$ to $-\infty$, and the homotopy avoids the line of phantom poles $\Re\{s\}=z_\mathrm{P}$.  Since $\Phi'(s)$ is analytic for $\Im\{s\}>0$ and since the boundary value $S_+(s)$ is analytic for $s\in \mathcal{B}$,
we may deform the path of integration from $\mathcal{B}$ to the image of $s=\mathfrak{z}(y)$.  In the remainder of this proof, we suppose that this deformation is done, and that $S(z)$ has been redefined with its branch cut as this new contour and its Schwarz reflection in the lower half-plane.  

Therefore, for $p=0$ or $p=1$, we may parametrize the contour integrals appearing in
\eqref{eq:momentscx} by $-\infty<y<x$:
\begin{equation}
I_p(\mathfrak{z}(x),\mathfrak{z}(x)^*)=\Im\left\{\int_{q=\mathfrak{z}(x)}^{z_\mathrm{P}}\frac{\Phi'(s)s^p\,ds}{S_+(s)}\right\} = 
\Im\left\{\int_{-\infty}^x\frac{\Phi'(\mathfrak{z}(y))\mathfrak{z}(y)^p\mathfrak{z}'(y)\,dy}{S_+(\mathfrak{z}(y))}\right\}.
\label{eq:Ipdef}
\end{equation}
(To get the sign correct it is important to take into account that $S_+(s)$ denotes the boundary value taken on the integration contour from the left according to the indicated direction of orientation; since $S$ changes sign across its branch cuts the limits of integration in the $s$-integral can therefore be exchanged without changing the sign of the integral.)
We will prove the proposition by showing that
\begin{equation}
I_0(\mathfrak{z}(x),\mathfrak{z}(x)^*)= -\frac{2\pi x}{\alpha}\quad\text{and}\quad I_1(\mathfrak{z}(x),\mathfrak{z}(x)^*)=-\frac{2\pi x}{\alpha}\Re\{q\}.
\label{eq:ImIpanswers}
\end{equation}

At this point (having used analyticity of $\Phi(s)$ apparent from the formula \eqref{eq:PhidefStirling} 
obtained from Stirling asymptotics to
deform contours) it becomes more convenient to use the alternate formula \eqref{eq:PhidefWKB} for $\Phi(s)$ that is obtained from WKB theory.  The WKB formula \eqref{eq:PhidefWKB} is initially defined for real $z$ for which there
exist two turning points (in the present case this is the interval $z_\mathrm{L}<z<0$) but for the particular
initial data under consideration it has an analytic continuation to $\Im\{z\}>0$ that agrees exactly with
the Stirling formula \eqref{eq:PhidefStirling}.  By differentiation of $\Phi(s)$ as given by \eqref{eq:PhidefWKB}, we have
\begin{equation}
\Phi'(s)= \frac{1}{\alpha}\frac{d\omega_-}{ds}(s)x_-(s) + \frac{1}{\alpha}\int_{-\infty}^{x_-(s)}\left[\frac{\partial\omega}{\partial s}(v;s)-\frac{d\omega_-}{ds}(s)\right]\,dv,\quad z_\mathrm{L}<s<0,
\label{eq:Phiprime}
\end{equation}
where $\omega$ and $\omega_-$ are defined in \eqref{eq:sigmapmdef}--\eqref{eq:omegaminusdef}, 
and where $x_-(s)$ is the real function defined for $z_\mathrm{L}<s<0$ by the portion of the turning
point curve illustrated in red in Figure~\ref{fig:tpc}.  Now, $x_-(s)$ admits analytic continuation into the complex $s$-plane from its interval of definition, and since it has a simple critical point at $s=z_\mathrm{c}$ (a local minimum), there is a curve passing vertically through the point $s=z_\mathrm{c}$
in the complex $s$-plane along which the analytic continuation of $x_-(s)$ is real and decreasing
away from $s=z_\mathrm{c}$.  Clearly, the portion of this curve in the upper half $s$-plane coincides
with the contour $s=\mathfrak{z}(y)$ for $-\infty<y<x_\mathrm{c}$.  Therefore, when $s=\mathfrak{z}(y)$, we have
$x_-(s)=y$, where $x_-(s)$ denotes the analytic continuation of the similarly-named function from the interval $z_\mathrm{L}<s<0$.  Using this information and substituting the analytic continuation of \eqref{eq:Phiprime} along the path $s=\mathfrak{z}(y)$ for $-\infty<y<x<x_\mathrm{c}$ into
\eqref{eq:Ipdef} we obtain
\begin{equation}
I_p(\mathfrak{z}(x),\mathfrak{z}(x)^*)= \Im\left\{
\frac{4}{\alpha}\int_{-\infty}^x\frac{\mathfrak{z}(y)^p\mathfrak{z}'(y)}{S_+(\mathfrak{z}(y))}y\,dy
+\frac{1}{\alpha}\int_{-\infty}^x\frac{\mathfrak{z}(y)^p\mathfrak{z}'(y)}{S_+(\mathfrak{z}(y))}\int_{-\infty}^{y}
\left[\frac{\partial\omega}{\partial s}(v;\mathfrak{z}(y))-4\right]\,dv\,dy
\right\},
\label{eq:Ipxrewrite}
\end{equation}
where we have also used the fact that $d\omega_-(s)/ds = 4$ (independently of the functions $\rho(\cdot)$ and $u(\cdot)$).

We simplify the first term on the right-hand side of \eqref{eq:Ipxrewrite}, writing it in terms of a double integral as follows:
\begin{equation}
\frac{4}{\alpha}\int_{-\infty}^x\frac{\mathfrak{z}(y)^p\mathfrak{z}'(y)}{S_+(\mathfrak{z}(y))}y\,dy = 
\frac{4x}{\alpha}\int_{-\infty}^x\frac{\mathfrak{z}(y)^p\mathfrak{z}'(y)}{S_+(\mathfrak{z}(y))}\,dy -
\frac{4}{\alpha}\int_{-\infty}^x\frac{\mathfrak{z}(y)^p\mathfrak{z}'(y)}{S_+(\mathfrak{z}(y))}\int_y^x\,dv\,dy.
\end{equation}
Exchanging the order of integration in the double integral gives
\begin{equation}
\frac{4}{\alpha}\int_{-\infty}^x\frac{\mathfrak{z}(y)^p\mathfrak{z}'(y)}{S_+(\mathfrak{z}(y))}y\,dy = \frac{4x}{\alpha}\int_{-\infty}^x
\frac{\mathfrak{z}(y)^p\mathfrak{z}'(y)}{S_+(\mathfrak{z}(y))}\,dy -\frac{4}{\alpha}\int_{-\infty}^x\int_{-\infty}^v\frac{\mathfrak{z}(y)^p\mathfrak{z}'(y)}{S_+(\mathfrak{z}(y))}\,dy\,dv.
\end{equation}
Reparametrizing the integrals over $y$ by the map $s=\mathfrak{z}(y)$ this becomes
\begin{equation}
\frac{4}{\alpha}\int_{-\infty}^x\frac{\mathfrak{z}(y)^p\mathfrak{z}'(y)}{S_+(\mathfrak{z}(y))}y\,dy = \frac{4x}{\alpha}\int_{z_\mathrm{P}}^{q=\mathfrak{z}(x)}
\frac{s^p\,ds}{S_+(s)} -\frac{4}{\alpha}\int_{-\infty}^x\int_{z_\mathrm{P}}^{\mathfrak{z}(v)}\frac{s^p\,ds}{S_+(s)}\,dv.
\end{equation}
Now we simplify the second term on the right-hand side of \eqref{eq:Ipxrewrite}, by first exchanging the order of integration and then
reparametrizing the inner integral by $s=\mathfrak{z}(y)$:
\begin{equation}
\begin{split}
\frac{1}{\alpha}\int_{-\infty}^x\frac{\mathfrak{z}(y)^p\mathfrak{z}'(y)}{S_+(\mathfrak{z}(y))}\int_{-\infty}^y
\left[\frac{\partial\omega}{\partial s}(v;\mathfrak{z}(y))-4\right]\,dv\,dy &=\frac{1}{\alpha}\int_{-\infty}^x\int_{v}^x\frac{\mathfrak{z}(y)^p\mathfrak{z}'(y)}{S_+(\mathfrak{z}(y))}\left[
\frac{\partial\omega}{\partial s}(v;\mathfrak{z}(y))-4\right]\,dy\,dv\\
&=\frac{1}{\alpha}\int_{-\infty}^x\int_{\mathfrak{z}(v)}^{q=\mathfrak{z}(x)}\frac{s^p}{S_+(s)}\left[\frac{\partial\omega}{\partial s}(v;s)-4\right]\,ds\,dv.
\end{split}
\end{equation}
Combining these results shows that
\begin{equation}
I_p(\mathfrak{z}(x),\mathfrak{z}(x)^*) = \Im\left\{\frac{4x}{\alpha}\int_{z_\mathrm{P}}^{q=\mathfrak{z}(x)}\frac{s^p\,ds}{S_+(s)}+\frac{1}{\alpha}\int_{-\infty}^x\left[\int_{\mathfrak{z}(v)}^{q=\mathfrak{z}(x)}\frac{\partial\omega}{\partial s}(v;s)\frac{s^p\,ds}{S_+(s)}-4\int_{z_\mathrm{P}}^{q=\mathfrak{z}(x)}\frac{s^p\,ds}{S_+(s)}\right]\,dv
\right\}.
\end{equation}
(The integral over $v$ cannot be broken up without sacrificing convergence.)  Here the paths of integration in the $s$-integrals all lie along the image of the mapping $s=\mathfrak{z}(y)$, $-\infty<y<x_\mathrm{c}$.  Since $S(s^*)=S(s)^*$ and since for $v\in\mathbb{R}$ we have $\omega(v,s^*)=\omega(v,s)^*$ and hence $\partial\omega/\partial s(v,s^*)=\partial\omega/\partial s(v,s)^*$ where
$\omega(v,s)$ is defined for complex $s$ by analytic continuation from the interval $z_\mathrm{L}<s<0$,
we easily obtain
\begin{equation}
I_p(\mathfrak{z}(x),\mathfrak{z}(x)^*)= \frac{2x}{i\alpha}J_p(x)
+\frac{1}{2i\alpha}\int_{-\infty}^x\left[\int_{\mathfrak{z}(v)}^{\mathfrak{z}(x)}\frac{\partial\omega}{\partial s}(v;s)
\frac{s^p\,ds}{S_+(s)} + \int_{\mathfrak{z}(x)^*}^{\mathfrak{z}(v)^*}\frac{\partial\omega}{\partial s}(v;s)\frac{s^p\,ds}{S_+(s)}-4J_p(x)\right]\,dv.
\label{eq:ImIp}
\end{equation}
where
\begin{equation}
J_p(x):=\int_{\mathfrak{z}(x)^*}^{\mathfrak{z}(x)}\frac{s^p\,ds}{S_+(s)}.
\end{equation}
Here the paths of integration all lie along the image of the mapping $s=\mathfrak{z}(y)$, $-\infty<y<x_\mathrm{c}$ and its Schwarz reflection (which in particular contains the branch cut of the re-defined function $S(s)$).

Now we evaluate all of the integrals over $s$ by contour integration arguments (in particular, this will prove that the integrand of the $v$-integral in  \eqref{eq:ImIp} vanishes identically).  Since $S$ changes sign across its branch cut, and since the latter connects $\mathfrak{z}(x)^*$ to $\mathfrak{z}(x)$, we have
\begin{equation}
J_p(x)=-\frac{1}{2}\oint\frac{s^p\,ds}{S(s)}
\end{equation}
where the path of integration is a closed, positively-oriented loop that encloses the branch cut of $S$.  We may now calculate $J_p(x)$ in terms of residues at $s=\infty$.  
Since 
\begin{equation}
\frac{1}{S(s)}=\frac{1}{s}+\Re\{q\}\frac{1}{s^2} + O\left(\frac{1}{s^{3}}\right),\quad s\to\infty,
\label{eq:OneOverSExpansion}
\end{equation}
we see easily that
\begin{equation}
J_0(x)=-i\pi\quad\text{and}\quad J_1(x)=-i\pi\Re\{q\}.
\label{eq:Jvalue}
\end{equation}
Also, by implicit differentiation of the identity $\omega(v;s)^2 = 16\alpha^2s\rho_0(v)+(4s-1+\alpha u_0(v))^2$ we easily obtain
\begin{equation}
\frac{\partial\omega}{\partial s}(v;s)=\frac{8\alpha^2\rho_0(v)+4(4s-1+\alpha u_0(v))}{\omega(v;s)}.
\end{equation}
By definition of the complex part of the turning point curve, the roots of the quadratic $\omega(v;s)^2$ are exactly $s=\mathfrak{z}(v)$ and $s=\mathfrak{z}(v)^*$ for $v\in\mathbb{R}$ with $v<x_\mathrm{c}$; thus $\omega(v;s)^2= 16(s-\mathfrak{z}(v))(s-\mathfrak{z}(v)^*)$.  To determine the function $\omega(v;s)$ as a function of $s$ for fixed $v<x_\mathrm{c}$ (that is, to determine the proper way to take the square root of $\omega(v;s)^2$), we proceed as follows.  By definition, $\omega(v;s)$ is real and has the sign $\sigma_-$ for $z_\mathrm{L}<s<0$, because $v$ lies to the left of both real turning points $x_\pm(s)$ in this case.  The value of $\sigma_-$ is determined from \eqref{eq:sigmapmdef} under the assumption that $z$ lies in the interval in which there exist two real turning points:  $z_\mathrm{L}<z<0$.  But \eqref{eq:sigmapmdef} can be written in the form
\begin{equation}
\sigma_-=\mathrm{sgn}(4(z-z_\mathrm{P}))
\end{equation}
so since $z_\mathrm{P}>0$ we have $\sigma_-=-1$.  Now since $v<x_\mathrm{c}$, the roots of $\omega(v;s)^2$ form a complex-conjugate pair in the complex $s$-plane, and as $\omega(v;s)$ is a negative
real function of $s$ for $z_\mathrm{L}<s<0$ while $z_\mathrm{P}>0$, we may construct the analytic
continuation from the interval $s\in (z_\mathrm{L},0)$ to the complement of a branch cut connecting
the two roots $\mathfrak{z}(v)$ and $\mathfrak{z}(v)^*$ coinciding with a sub-arc of the branch cut for $S(s)$, and normalized so that $\omega(v;s)=4s+O(1)$ as $s\to\infty$.  With the branch cut for $\omega(v;s)$ defined in this way, we can write $\omega(v;s)S_+(s) = 4\tilde{S}_+(v;s)$ where $\tilde{S}(v;s):=\omega(v;s)S(s)/4=s^2+O(s)$
as $s\to\infty$, and $\tilde{S}(s;v)$ is analytic except for \emph{two} cuts:  one connecting $\mathfrak{z}(v)$ to $q=\mathfrak{z}(x)$ and lying along the contour $s=\mathfrak{z}(y)$ for $-\infty<y<x_\mathrm{c}$, and the other being the Schwarz reflection of the first one.  At last we are in a position to evaluate the integral involving $\partial\omega/\partial s(v;s)$ by residues; we first use the fact that $\tilde{S}(v;s)$ changes sign across the contours of integration to write
\begin{equation}
\int_{\mathfrak{z}(v)}^{\mathfrak{z}(x)}\frac{\partial\omega}{\partial s}(v;s)\frac{s^p\,ds}{S_+(s)} +
\int_{\mathfrak{z}(x)^*}^{\mathfrak{z}(v)^*}\frac{\partial\omega}{\partial s}(v;s)\frac{s^p\,ds}{S_+(s)} = 
-\frac{1}{2}\oint\frac{8\alpha^2\rho_0(v)+4(4s-1+\alpha u_0(v))}{4\tilde{S}(v;s)}s^p\,ds
\end{equation}
where the path of integration is a closed, positively-oriented loop that encircles both of the branch cuts of $\tilde{S}(v;s)$.  Now since 
\begin{equation}
\frac{1}{\tilde{S}(v;s)}=\frac{1}{s^2}+\frac{\Re\{q\}+\Re\{\mathfrak{z}(v)\}}{s^3} +O\left(\frac{1}{s^4}\right),\quad s\to\infty,
\end{equation}
a residue calculation shows that, for $p=0$,
\begin{equation}
\int_{\mathfrak{z}(v)}^{\mathfrak{z}(x)}\frac{\partial\omega}{\partial s}(v;s)\frac{ds}{S_+(s)} +
\int_{\mathfrak{z}(x)^*}^{\mathfrak{z}(v)^*}\frac{\partial\omega}{\partial s}(v;s)\frac{ds}{S_+(s)} = -4\pi i
\label{eq:intIIp0}
\end{equation}
and, for $p=1$,
\begin{equation}
\begin{split}
\int_{\mathfrak{z}(v)}^{\mathfrak{z}(x)}\frac{\partial\omega}{\partial s}(v;s)\frac{s\,ds}{S_+(s)} +
\int_{\mathfrak{z}(x)^*}^{\mathfrak{z}(v)^*}\frac{\partial\omega}{\partial s}(v;s)\frac{s\,ds}{S_+(s)} &= -4\pi i\left[\Re\{q\}+\Re\{\mathfrak{z}(v)\}+\frac{1}{2}\alpha^2\rho_0(v)+\frac{\alpha u_0(v)-1}{4}\right] \\ &= -4\pi i\Re\{q\},
\end{split}
\label{eq:intIIp1}
\end{equation}
where on the second line we have used \eqref{eq:Rezeta}.  Using \eqref{eq:Jvalue}, \eqref{eq:intIIp0},
and \eqref{eq:intIIp1} in \eqref{eq:ImIp} we arrive at \eqref{eq:ImIpanswers},
so the proof is complete.
\end{proof}

Now we show that the solution of the equations \eqref{eq:momentscx} can be continued to nearby $(x,t)$ under certain conditions involving the function $Y(z;q,q^*)$.
\begin{proposition}
Suppose that $q$ is a solution of the equations \eqref{eq:momentscx} at $x=x_0$ and $t=t_0$ with $\Im\{q\}>0$
and for which $Y(q;q,q^*)\neq 0$.  Then the equations \eqref{eq:momentscx} have a unique smooth
solution $q(x,t)$ valid for $(x,t)$ near $(x_0,t_0)$ for which $q(x_0,t_0)=q$.
\label{prop:cxcontinue}
\end{proposition}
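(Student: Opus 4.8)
The plan is to apply the implicit function theorem to the system \eqref{eq:momentscx}, viewed as two \emph{real} equations in the two real unknowns $(\Re\{q\},\Im\{q\})$ with real parameters $(x,t)$. First I would note that $M_0$ and $M_1$ are genuinely real-valued, since $q+q^*$, $3q^2+2|q|^2+3q^{*2}$ and $|q|^2$ are all real and $I_0,I_1$ are real by their very definition \eqref{eq:Ip}; thus \eqref{eq:momentscx} is a square system. Next I would verify that $(\Re\{q\},\Im\{q\},x,t)\mapsto(M_0,M_1)$ is real-analytic near $(q,x_0,t_0)$. The only nonobvious ingredient is the regularity of $I_0$ and $I_1$, and here I would argue as in the proof of Proposition~\ref{prop:momentscxtzero}: since $\Phi'$ is analytic in the open upper half-plane and the integrand in \eqref{eq:Ip} is jointly analytic away from a fixed neighborhood of $z_\mathrm{P}$ and of the endpoint, the arc $\mathcal{B}$ can be homotoped to a fixed reference arc for $(x,t)$ in a small neighborhood without changing $I_p$, so that the only remaining $q$-dependence enters through $S$, which depends analytically on $(q,q^*)$ via $S(z)^2=(z-q)(z-q^*)$ (the mild, integrable endpoint singularity of $1/S_+$ is absorbed by writing $\int_\mathcal{B}$ as one-half of a loop integral around the branch cut). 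Differentiation under the integral sign then supplies the required smoothness; note also that $I_p$, and hence the solution $q$, does not depend on the particular admissible choice of $\mathcal{B}$.

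The heart of the matter is the nonvanishing of the Jacobian. Let $J$ denote the real $2\times2$ matrix $\partial(M_0,M_1)/\partial(\Re\{q\},\Im\{q\})$ at $(q,x_0,t_0)$, and write $\alpha_j:=\partial M_j/\partial q$ for the Wirtinger derivative with $q^*$ held fixed; since $M_j$ is real we have $\partial M_j/\partial q^*=\overline{\alpha_j}$, and a short computation gives $\det J=-4\,\Im\{\overline{\alpha_0}\,\alpha_1\}$. To evaluate $\alpha_0,\alpha_1$ I would use the identity $g'(z)=\theta'(z)+\tfrac12\Phi'(z)+\tfrac12 S(z)Y(z;q,q^*)$, equivalent to \eqref{eq:hdef}--\eqref{eq:2hprime}, together with the observation that $M_0=0$ and $M_1=0$ are exactly the vanishing of the $z^0$- and $z^{-1}$-coefficients in the expansion of $g'(z)$ at $z=\infty$ (this follows from \eqref{eq:gprime1}, \eqref{eq:thetadef} and \eqref{eq:OneOverSExpansion}; up to explicit nonzero constants, $M_0$ is $-\pi$ times the $z^0$-coefficient of $g'$, and, given $M_0=0$, $M_1$ is a fixed multiple of its $z^{-1}$-coefficient). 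Since $\theta'$ and $\Phi'$ are independent of $q$, differentiating in $q$ hits only the term $\tfrac12 S(z)Y(z;q,q^*)$; using $\partial S/\partial q=-S(z)/(2(z-q))$ and the loop-integral representation \eqref{eq:Yexpandlast} of $Y$ (so that $\partial_q$ acts on the integrand only through $1/S$), I would expand at $z=\infty$ with the help of \eqref{eq:PhiPrimeExpansion}, \eqref{eq:OneOverSExpansion} and \eqref{eq:Yexpansion} and read off $\alpha_0$ and $\alpha_1$ in closed form. The expected outcome of this bookkeeping is that the contributions of $\partial Y/\partial q$ and of $\partial S/\partial q$ combine so that $\Im\{\overline{\alpha_0}\,\alpha_1\}$ equals a nowhere-vanishing explicit factor (involving $\Im\{q\}>0$ and $\alpha$) times $|Y(q;q,q^*)|^2$; in particular $\det J\neq0$ if and only if $Y(q;q,q^*)\neq0$.

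Since $Y(q;q,q^*)\neq0$ by hypothesis, $\det J\neq0$, and the implicit function theorem furnishes a unique real-analytic solution $q=q(x,t)$ of \eqref{eq:momentscx} for $(x,t)$ in a neighborhood of $(x_0,t_0)$ with $q(x_0,t_0)=q$. Because $\Im\{q\}>0$ is an open condition, the neighborhood may be shrunk so that $\Im\{q(x,t)\}>0$ throughout, and the arc $\mathcal{B}$ may be taken to vary continuously with $(x,t)$ accordingly.

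The genuine work is the second step: carrying out the large-$z$ expansions under Wirtinger differentiation while tracking both the endpoint of $\mathcal{B}$ moving with $q$ and the branch point of $S$ at $z=q$, and checking that the resulting determinant is a nonzero multiple of $|Y(q;q,q^*)|^2$. Everything else is a routine application of the implicit function theorem, structurally parallel to the explicit $t=0$ computation of Proposition~\ref{prop:momentscxtzero}.
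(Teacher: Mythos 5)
Your proposal is essentially the same as the paper's: both apply the implicit function theorem after showing that the Jacobian of $(M_0,M_1)$ with respect to $q$ is a nonzero multiple of $|Y(q;q,q^*)|^2\,\Im\{q\}$. Your version works in real coordinates $(\Re\{q\},\Im\{q\})$, whereas the paper treats $q,q^*$ as independent complex variables; these differ only by the (nonzero) Jacobian of the change of variables, so the nonvanishing statements are equivalent, and your identity $\det J=-4\,\Im\{\overline{\alpha_0}\alpha_1\}$ matches the paper's $\mathscr{J}=\tfrac{i\pi^2}{8}|Y(q;q,q^*)|^2\Im\{q\}$ after that change.

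Where you differ, and where your outline stops short, is in the route to the key Wirtinger derivatives. The paper rewrites $I_p$ as loop integrals around $\mathcal{B}$ (so the only $q$-dependence in the integrand is through $1/S$), computes $\partial(1/S)/\partial q=1/(2S(s)(s-q))$, and immediately recognizes $\partial M_0/\partial q=\tfrac{\pi}{4}Y(q;q,q^*)$ by comparing with the definition \eqref{eq:Yfdefine}; then, crucially, it uses $M_0(q,q^*;x_0,t_0)=0$ to collapse $\partial M_1/\partial q$ to $q\,\partial M_0/\partial q$. That pair of identities, together with the Schwarz symmetry $Y(q^*;q,q^*)=Y(q;q,q^*)^*$, yields the Jacobian in one line. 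Your proposed detour via the $z^0$- and $z^{-1}$-coefficients of $g'(z)$ at infinity would in the end reproduce exactly these identities, but it is noticeably more cumbersome: the large-$z$ expansion of $\tfrac12 S(z)Y(z;q,q^*)$ already uses \eqref{eq:Yexpansion}, which is derived \emph{assuming} $M_0=0$, so you would have to redo that expansion without the moment conditions and then differentiate. In short, the step you label ``the genuine work'' is precisely the step you have not carried out, and it is also where the paper's loop-integral computation is distinctly cleaner; but the overall strategy, the use of the reflection symmetry $\partial M_j/\partial q^*=\overline{\partial M_j/\partial q}$, and the identification of the Jacobian as a positive multiple of $|Y(q;q,q^*)|^2\Im\{q\}$ are all correct and the same as in the paper.
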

\begin{proof}
We construct the Jacobian determinant of $M_0$ and $M_1$ with respect to $q$ and $q^*$ (here viewed as independent complex variables).  We first rewrite $M_0$ and $M_1$ in the form
\begin{equation}
\begin{split}
M_0(q,q^*,x,t) &= \frac{2\pi}{\alpha^2}(2(q+q^*)t+\alpha x - t) -\frac{i}{4}\left[\oint_L\frac{\Phi'(s)\,ds}{S(s)}-\oint_{L^*}\frac{\Phi'(s)\,ds}{S(s)}\right]\\
M_1(q,q^*,x,t) &=\frac{\pi}{\alpha^2}((3q^2 + 2qq^* + 3q^{*2})t+(q+q^*)(\alpha x-t))-\frac{i}{4}
\left[\oint_L\frac{\Phi'(s)s\,ds}{S(s)}-\oint_{L^*}\frac{\Phi'(s)s\,ds}{S(s)}\right],
\end{split}
\end{equation}
where $L$ is a negatively-oriented contour beginning and ending at $z_\mathrm{P}$ and encircling the contour arc $B$ exactly once, and $L^*$ denotes the Schwarz reflection of this contour in the lower half-plane (with orientation induced from $L$ by reflection, that is, $L^*$ is positively-oriented). 
Now holding $L$ fixed, the only dependence in the integrals on $q$ and $q^*$ enters parametrically through the
function $S(s)$.  Since by direct calculation,
\begin{equation}
\frac{\partial}{\partial q}\frac{1}{S(s)} = \frac{1}{2S(s)(s-q)}\quad\text{and}\quad
\frac{\partial}{\partial q^*}\frac{1}{S(s)} = \frac{1}{2S(s)(s-q^*)}
\end{equation}
we find that
\begin{equation}
\begin{split}
\frac{\partial M_0}{\partial q} &= \frac{4\pi t}{\alpha^2} -\frac{i}{8}\left[\oint_L\frac{\Phi'(s)\,ds}{S(s)(s-q)}
-\oint_{L^*}\frac{\Phi'(s)\,ds}{S(s)(s-q)}\right]\\
\frac{\partial M_0}{\partial q^*} &= \frac{4\pi t}{\alpha^2} -\frac{i}{8}\left[\oint_L\frac{\Phi'(s)\,ds}{S(s)(s-q^*)}
-\oint_{L^*}\frac{\Phi'(s)\,ds}{S(s)(s-q^*)}\right].
\end{split}
\end{equation}
Comparing with the original formula \eqref{eq:Yfdefine} for $Y(z;q,q^*)$, we see that
\begin{equation}
\frac{\partial M_0}{\partial q} = \frac{\pi}{4}Y(q;q,q^*)\quad\text{and}\quad\frac{\partial M_0}{\partial q^*}=
\frac{\pi}{4}Y(q^*;q,q^*).
\label{eq:M0cxpartials}
\end{equation}
Similarly, for partial derivatives of $M_1$ we obtain
\begin{equation}
\begin{split}
\frac{\partial M_1}{\partial q}&=\frac{\pi}{\alpha^2}\left[6tq + 2tq^*+\alpha x-t\right]-\frac{i}{8}
\left[\oint_L\frac{\Phi'(s)s\,ds}{S(s)(s-q)}-\oint_{L^*}\frac{\Phi'(s)s\,ds}{S(s)(s-q)}\right]\\
\frac{\partial M_1}{\partial q^*}&=\frac{\pi}{\alpha^2}\left[6tq^* + 2tq+\alpha x-t\right]-\frac{i}{8}
\left[\oint_L\frac{\Phi'(s)s\,ds}{S(s)(s-q^*)}-\oint_{L^*}\frac{\Phi'(s)s\,ds}{S(s)(s-q^*)}\right].
\end{split}
\end{equation}
The partial derivatives of $M_1$ can be rewritten as
\begin{equation}
\begin{split}
\frac{\partial M_1}{\partial q}&=\frac{\pi}{\alpha^2}\left[6tq + 2tq^*+\alpha x-t\right]-
\frac{iq}{8}
\left[\oint_L\frac{\Phi'(s)\,ds}{S(s)(s-q)}-\oint_{L^*}\frac{\Phi'(s)\,ds}{S(s)(s-q)}\right]\\
&\quad\quad\quad{}
-\frac{i}{8}\left[\oint_L\frac{\Phi'(s)\,ds}{S(s)}-\oint_{L^*}\frac{\Phi'(s)\,ds}{S(s)}\right]\\
\frac{\partial M_1}{\partial q^*}&=\frac{\pi}{\alpha^2}\left[6tq^* + 2tq+\alpha x-t\right]-
\frac{iq^*}{8}
\left[\oint_L\frac{\Phi'(s)\,ds}{S(s)(s-q^*)}-\oint_{L^*}\frac{\Phi'(s)\,ds}{S(s)(s-q^*)}\right]\\
&\quad\quad\quad{}
-\frac{i}{8}\left[\oint_L\frac{\Phi'(s)\,ds}{S(s)}-\oint_{L^*}\frac{\Phi'(s)\,ds}{S(s)}\right].
\end{split}
\end{equation}
Setting $x=x_0$ and $t=t_0$, we use the fact that $q$ is a solution of $M_0=0$ to eliminate the terms
on the second line in each case:
\begin{equation}
\begin{split}
\left.\frac{\partial M_1}{\partial q}\right|_{(x_0,t_0)} &= \frac{4\pi t_0}{\alpha^2}q  -
\frac{iq}{8}
\left[\oint_L\frac{\Phi'(s)\,ds}{S(s)(s-q)}-\oint_{L^*}\frac{\Phi'(s)\,ds}{S(s)(s-q)}\right]\\
&= q\left.\frac{\partial M_0}{\partial q}\right|_{(x_0,t_0)}\\
\left.\frac{\partial M_1}{\partial q^*}\right|_{(x_0,t_0)} &= \frac{4\pi t_0}{\alpha^2}q ^* -
\frac{iq^*}{8}
\left[\oint_L\frac{\Phi'(s)\,ds}{S(s)(s-q^*)}-\oint_{L^*}\frac{\Phi'(s)\,ds}{S(s)(s-q^*)}\right]\\
&=q^*\left.\frac{\partial M_0}{\partial q^*}\right|_{(x_0,t_0)}.
\end{split}
\label{eq:M1cxpartials}
\end{equation}
Therefore, the Jacobian determinant is
\begin{equation}
\begin{split}
\mathscr{J}(q,q^*,x_0,t_0):=
\left.\left[\frac{\partial M_1}{\partial q}\frac{\partial M_0}{\partial q^*}-\frac{\partial M_0}{\partial q}
\frac{\partial M_1}{\partial q^*}\right]\right|_{(x_0,t_0)} &= \frac{\pi^2}{16}Y(q;q,q^*)Y(q^*;q,q^*)(q-q^*) \\ &=
\frac{i\pi^2}{8}|Y(q;q,q^*)|^2\Im\{q\}.
\end{split}
\end{equation}
Since this is nonzero by assumption, the solution to the simultaneous equations $M_0=M_1=0$
can be continued from $q$ to $q(x,t)$ by the Implicit Function Theorem.
\end{proof}

\begin{corollary}
Let $x<x_\mathrm{c}$.  Then, for $|t|$ sufficiently small, there exists a unique solution $q=q(x,t)$
of the equations \eqref{eq:momentscx} satisfying $q(x,0)=\mathfrak{z}(x)$, where $\mathfrak{z}(x)$ is the complex part of the
turning point curve.  For $t>0$ the solution can be continued (even for $x\ge x_\mathrm{c}$) until $\Im\{q(x,t)\}$ becomes zero.
\end{corollary}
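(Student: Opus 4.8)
The plan is to assemble the three structural facts just established: Proposition~\ref{prop:momentscxtzero} supplies a base point at $t=0$, Proposition~\ref{prop:cxcontinue} performs the continuation step, and Proposition~\ref{prop:Yzeroscx} furnishes the non-degeneracy hypothesis that the continuation step requires.

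First I would dispose of $|t|$ small. Fix $x<x_\mathrm{c}$. By Proposition~\ref{prop:momentscxtzero} the point $q=\mathfrak{z}(x)$ solves \eqref{eq:momentscx} at this $x$ with $t=0$, and $\Im\{\mathfrak{z}(x)\}>0$ because $\mathfrak{z}(x)$ is the complex branch of the turning point curve. To invoke Proposition~\ref{prop:cxcontinue} at $(x_0,t_0)=(x,0)$ one only needs $Y(\mathfrak{z}(x);\mathfrak{z}(x),\mathfrak{z}(x)^*)\neq 0$, and this is immediate from Proposition~\ref{prop:Yzeroscx}: at $t=0$ the only zero of $Y(\,\cdot\,;\mathfrak{z}(x),\mathfrak{z}(x)^*)$ in the finite part of its domain of analyticity lies on the real interval $(z_\mathrm{L},0)$, hence differs from $\mathfrak{z}(x)$. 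Proposition~\ref{prop:cxcontinue} then produces a unique smooth solution $q(x,t)$ with $q(x,0)=\mathfrak{z}(x)$ on a full $(x,t)$-neighborhood of $(x,0)$, in particular for $|t|$ sufficiently small, which is the first assertion.

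Next I would push this branch forward in $t>0$. Let $[0,T^*)$, $T^*\in(0,+\infty]$, be the maximal interval on which the smooth solution $t\mapsto q(x,t)$ emanating from $\mathfrak{z}(x)$ can be followed with $\Im\{q(x,t)\}>0$ throughout (allowing the spatial variable to be carried into the range $x\ge x_\mathrm{c}$ along the way, since Proposition~\ref{prop:cxcontinue} continues in both variables). The essential observation is that the hypothesis of Proposition~\ref{prop:cxcontinue} is automatically met at every point of this branch with $t>0$: by Proposition~\ref{prop:Yzeroscx} the unique zero of $Y(\,\cdot\,;q,q^*)$ for $t>0$ lies on the real segment $(z_\mathrm{L},0)$ while $\Im\{q\}>0$, so $Y(q;q,q^*)\neq 0$; equivalently the Jacobian $\mathscr{J}=\tfrac{i\pi^2}{8}|Y(q;q,q^*)|^2\Im\{q\}$ from the proof of Proposition~\ref{prop:cxcontinue} vanishes only when $\Im\{q\}=0$. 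Hence the branch extends past any $t'\in(0,T^*)$, and the continuation can fail at $t=T^*<\infty$ only if $q(x,t)$ does not converge, as $t\uparrow T^*$, to a point of the open upper half-plane, i.e. only if $\Im\{q(x,t)\}\to 0$ or $|q(x,t)|\to\infty$.

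The remaining point, which I expect to be the main obstacle, is to exclude the escape $|q(x,t)|\to\infty$ while $t$ stays in a compact subset of $(0,\infty)$, so that the sole termination mechanism is $\Im\{q\}\downarrow 0$. This I would read off from the structure of \eqref{eq:momentscx}. Solving $M_0=0$ gives $4t\,\Re\{q\}=t-\alpha x-\tfrac{\alpha^2}{2\pi}I_0(q,q^*)$; since $\Phi'(s)\to -2\pi i/\alpha$ and $S_+(s)=s+O(1)$ as $s\to\infty$ (cf. \eqref{eq:PhiPrimeExpansion} and \eqref{eq:OneOverSExpansion}), the integral $I_0(q,q^*)$ grows at most like $\log|q|$ along the arc $\mathcal{B}$, so for $t$ bounded below $\Re\{q\}$ can grow at most logarithmically in $|q|$. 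Substituting into $M_1=0$, whose polynomial part equals $(8\Re\{q\}^2-4\Im\{q\}^2)t+2\Re\{q\}(\alpha x-t)$ and whose integral part $I_1(q,q^*)$ is $O(|q|)$ by the same estimates, forces $\Im\{q\}^2=O(|q|)$ and hence $|q|=O(1)$, contradicting $|q|\to\infty$. Thus $q(x,t)$ stays bounded on compact $t$-subintervals of $(0,\infty)$; the implicit-function flow defining $q$ is then Lipschitz up to $t=T^*$ on any subinterval where $\Im\{q\}$ is bounded below, so $\lim_{t\uparrow T^*}q(x,t)$ exists and must have vanishing imaginary part, while if no finite such $T^*$ arises then $T^*=+\infty$. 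I expect only the maximal-continuation bookkeeping and this uniform boundedness of $q$ to be genuinely delicate; the rest is an immediate assembly of the three cited propositions.
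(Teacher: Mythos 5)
Your proof is correct and follows the same skeleton as the paper's: Proposition~\ref{prop:momentscxtzero} supplies the $t=0$ seed solution $q=\mathfrak{z}(x)$, Proposition~\ref{prop:Yzeroscx} verifies $Y(\mathfrak{z}(x);\mathfrak{z}(x),\mathfrak{z}(x)^*)\neq 0$, and Proposition~\ref{prop:cxcontinue} performs the implicit-function continuation, with the observation that for $t>0$ the Jacobian $\mathscr{J}=\tfrac{i\pi^2}{8}|Y(q;q,q^*)|^2\Im\{q\}$ can vanish only when $\Im\{q\}\downarrow 0$. The one place where you go genuinely beyond the paper is the final paragraph: the paper's proof of this Corollary is silent on the possibility that the continuation terminates because $|q|\to\infty$ in finite time rather than because $\Im\{q\}\to 0$, and you supply an explicit a priori bound from the structure of $M_0=M_1=0$ that rules this escape out (the paper only confronts related boundedness issues later, inside the proof of Proposition~\ref{prop:xc-continue}, and even there the focus is on $\Im\{q\}\to 0$). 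Your estimates are stated a bit loosely --- e.g.\ $I_0(q,q^*)$ has an antiderivative of $\mathrm{arccosh}$-type and so is $O(\log(|q|/\Im\{q\}))$ rather than simply $O(\log|q|)$, and $I_1$ is more safely $O(|q|\log|q|)$ --- but the combined conclusion that $|q|^2=O(|q|\log|q|)$ still yields the desired contradiction, so the argument is sound. In short: same route, but with a legitimate small gap in the maximal-continuation step filled in rather than left implicit.
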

\begin{proof}
From Proposition~\ref{prop:momentscxtzero} we have that the equations \eqref{eq:momentscx}
are satisfied by $q=\mathfrak{z}(x)$ when $t=0$ and $x<x_\mathrm{c}$, and we note that $\Im\{\mathfrak{z}(x)\}>0$
for $x<x_\mathrm{c}$.  By Proposition~\ref{prop:Yzeroscx} the only possible zeros of $Y(z;q,q^*)$ in 
the open upper half-plane are far from $q$ when $|t|$ is small, which implies that $Y(\mathfrak{z}(x);\mathfrak{z}(x),\mathfrak{z}(x)^*)\neq 0$.  Moreover, if $t>0$ then $Y(q;q,q^*)$ can only become
zero if $\Im\{q\}=0$.
Therefore, by Proposition~\ref{prop:cxcontinue} the Corollary is proved.
\end{proof}
%\textcolor{red}{Is it possible to prove that the only way $\Im\{q\}$ can become zero is if $q$ tends to the unique real root of $Y(z;q,q^*)$ in $(z_\mathrm{L},0)$, that is, the point $z=z_c(t)$?  Strategy for showing this:  first show that $M_0=0$ is inconsistent unless the point of coalescence lies in the open interval $(z_\mathrm{L},0)$, in which case $M_0=0$ reduces to the condition $M^c_0=0$ (and $M_1=0$ reduces to the same equation as shown much later in the manuscript).  Then, the condition $M_0^c=0$ defines a straight line in the $(x,t)$-plane for each possible value of coalescence in the interval $(z_\mathrm{L},0)$, and one should show that the $t=0$ intercepts of these lines all lie in the interval $x>x_c$, and that the lines form a caustic curve emanating from $x=x_c$ at $t=0$ along which $\partial M_0^c/\partial z_c=0$.  This then means that upon continuation from $x<x_c$ at $t=0$ to positive $t$, when $q$ and $q^*$ first coalesce in $(z_\mathrm{L},0)$ they must do so on the caustic curve as it is the left-most boundary of the field of lines.}

It follows that for $t\ge 0$, the only obstruction to smooth continuation of the endpoint function $q(x,t)$ from its values $q(x,0)=\mathfrak{z}(x)$ for $x<x_\mathrm{c}$ may be a collision of $q$ with the real axis.  We now completely resolve the scope of the continuation.
\begin{proposition}
There exists %$T>0$ and 
a well-defined smooth curve $x=x_\mathrm{c}(t)$ for %$t>-T$ 
$t\ge 0$ satisfying $x_\mathrm{c}(0)=x_\mathrm{c}$,
such that the solution of \eqref{eq:momentscx} with $q(x,0)=\mathfrak{z}(x)$ for $x<x_\mathrm{c}$ can be uniquely continued to the domain %$t>-T$ 
$t>0$ and $x<x_\mathrm{c}(t)$.  Also, $\Im\{q(x,t)\}>0$ whenever $x<x_\mathrm{c}(t)$, but 
$\Im\{q(x,t)\}\downarrow 0$ as $x\uparrow x_\mathrm{c}(t)$.
\label{prop:xc-continue}
\end{proposition}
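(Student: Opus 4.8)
The plan is to run the continuation of Proposition~\ref{prop:cxcontinue} and of the Corollary above forward in $(x,t)$, to show that its sole failure mode is a collision of $q$ with the real axis, that this collision happens along a graph, and that the graph is smooth. By the Corollary and Proposition~\ref{prop:Yzeroscx}, for $t>0$ the solution $q=q(x,t)$ with $q(x,0)=\mathfrak z(x)$, $x<x_\mathrm c$, extends smoothly in $(x,t)$ for as long as $\Im\{q\}>0$: since for $t>0$ the unique zero of $Y(\,\cdot\,;q,q^*)$ is real, one automatically has $Y(q;q,q^*)\ne0$ whenever $\Im\{q\}>0$, so the Jacobian $\mathscr J$ is nonzero. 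The first step is to differentiate the equations \eqref{eq:momentscx} with respect to $x$ at fixed $t$; using \eqref{eq:M0cxpartials}--\eqref{eq:M1cxpartials} together with $\partial M_0/\partial x=2\pi/\alpha$ and $\partial M_1/\partial x=\pi(q+q^*)/\alpha$, the linear system for $q_x$ and $q^*_x$ collapses, after a one-line elimination, to the scalar ODE
\[
\frac{\partial q}{\partial x}=-\frac{4}{\alpha\,Y(q;q,q^*)},
\]
valid wherever $\Im\{q\}>0$; in particular $q$ moves with bounded-below speed away from $z_\mathrm L$, $0$, $z^+$, $z_\mathrm P$, and by \eqref{eq:Yexpansion} one has $\partial q/\partial x\to-\alpha/(4t)$ whenever $\Re\{q\}\to-\infty$.

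The second step is to show $x_\mathrm c(t)<\infty$. Since the only obstruction to smooth continuation with $\Im\{q\}>0$ is $\Im\{q\}\downarrow0$ (blow-up of $q$ at finite $x$ is precluded by the ODE), it suffices to rule out $\Im\{q(x,t)\}>0$ for all $x\in\mathbb R$. If $q$ stayed bounded as $x\to+\infty$, the speed bound would be incompatible with boundedness unless $q$ approached one of the exceptional points $z_\mathrm L$, $0$, $z^+$, $z_\mathrm P$; but then $I_0(q,q^*)$ remains bounded, contradicting $M_0=0$ as $\alpha x\to+\infty$. The only remaining possibility allowed by the ODE is $\Re\{q\}\to-\infty$; then from \eqref{eq:PhiPrimeExpansion} and $1/S_+(s)=O(1/|s|)$ far out along $\mathcal B$ one gets $I_0(q,q^*)=O(\log|\Re\{q\}|)$ and $I_1(q,q^*)=\tfrac{2\pi}{\alpha}\Re\{q\}+O(|\Re\{q\}|^{1/2})$, and eliminating $\alpha x-t$ between $M_0=0$ and $M_1=0$ forces $\Re\{q\}\log|\Re\{q\}|=O(\Re\{q\})$, which is impossible. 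Finally, a parallel look at $x\to-\infty$ (where $\mathcal B$ degenerates to the point $z_\mathrm P$, $Y(q;q,q^*)\to\infty$, and $\partial q/\partial x\to0$, just as at $t=0$) shows that $\Im\{q\}$ does not vanish at any finite $x$ to the left. Hence each $t$-slice of the continuation region is exactly $(-\infty,x_\mathrm c(t))$ with $x_\mathrm c(t)$ finite, and $\Im\{q(x,t)\}\downarrow0$ as $x\uparrow x_\mathrm c(t)$.

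The third step analyzes the boundary. Because $\Im\{q\}\to0$ while $q$ stays bounded near $x_\mathrm c(t)$ (again by the ODE), the limit $q_\mathrm c(t):=\lim_{x\uparrow x_\mathrm c(t)}q(x,t)$ exists and is real, and I would show, using the explicit form \eqref{eq:Ydef}--\eqref{eq:ImYPlus} and the admissibility conditions \eqref{eq:condition-transsonic}--\eqref{eq:condition-no-zeros}, that $q_\mathrm c(t)\in(z_\mathrm L,0)$. On that interval $\Im\{Y_+\}\equiv0$, and since every integration range in \eqref{eq:Ydef} is disjoint from $(z_\mathrm L,0)$, the map $q\mapsto Y(q;q,q^*)$ with $q^*=\overline q$ extends smoothly across $\Im\{q\}=0$ near $q_\mathrm c(t)$ and is real on the real axis there. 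Substituting this into the ODE and integrating locally, the requirement that $\Im\{q\}$ be positive exactly for $x<x_\mathrm c(t)$ \emph{forces} the collision to be a square-root fold, i.e.\ $Y(q_\mathrm c(t);q_\mathrm c(t),q_\mathrm c(t))=0$ and $\frac{d}{dq}Y(q;q,q)\big|_{q=q_\mathrm c(t)}\ne0$, with $\Im\{q(x,t)\}\sim\gamma(t)\sqrt{x_\mathrm c(t)-x}$, $\gamma(t)>0$. Passing to the limit $x\uparrow x_\mathrm c(t)$ in \eqref{eq:momentscx} gives in addition $M_0=M_1=0$ at $(q_\mathrm c(t),q_\mathrm c(t),x_\mathrm c(t),t)$. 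At $t=0$ all of this reduces to facts already in hand: $q_\mathrm c(0)=z_\mathrm c$, $Y(z_\mathrm c;z_\mathrm c,z_\mathrm c)=0$ (this is the vertical tangent of the turning point curve at $(x_\mathrm c,z_\mathrm c)$, cf.\ \S\ref{sec:tpc}), and $M_0=M_1=0$ there by Proposition~\ref{prop:momentscxtzero}.

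The last step gives smoothness of $x_\mathrm c(t)$. The implicit function theorem applied naively to $\{M_0=0,\,M_1=0,\,\Im\{q\}=0\}$ fails, because on $q^*=q$ the second moment condition is a scalar multiple of the first (cf.\ \eqref{eq:M1cxpartials}), so the relevant Jacobian minors vanish. Instead, by the third step the curve $t\mapsto(q_\mathrm c(t),x_\mathrm c(t))$ solves the \emph{decoupled} pair $Y(q;q,q)=0$ and $M_0(q,q,x,t)=0$: the first equation does not involve $x$ and has $q$-derivative $\frac{d}{dq}Y(q;q,q)|_{q_\mathrm c(t)}\ne0$, so it determines a smooth $q_\mathrm c(t)$; then the second, with $x$-derivative $2\pi/\alpha\ne0$, determines a smooth $x_\mathrm c(t)$ satisfying $x_\mathrm c(0)=x_\mathrm c$. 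Patching the $x$-slice continuations together via the joint smoothness of Proposition~\ref{prop:cxcontinue} yields the continuation on the whole region $\{t>0,\ x<x_\mathrm c(t)\}$, with uniqueness inherited from that proposition, and the fold asymptotics record $\Im\{q\}>0$ for $x<x_\mathrm c(t)$ and $\Im\{q\}\downarrow0$ as $x\uparrow x_\mathrm c(t)$. I expect the main obstacle to be the third step: verifying \emph{for all} $t>0$ (not merely for small $t>0$ by continuity from the turning-point geometry) that $q_\mathrm c(t)\in(z_\mathrm L,0)$ and that the collision is a genuine square-root fold, i.e.\ $Y(q_\mathrm c(t);q_\mathrm c(t),q_\mathrm c(t))=0$ with $\frac{d}{dq}Y(q;q,q)$ nonvanishing there; this is where the detailed structure of $Y$ in \eqref{eq:Ydef}--\eqref{eq:ImYPlus} and the restrictions on $\alpha,\delta,\mu$ must be used in an essential way.
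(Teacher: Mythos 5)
Your proposal shares the geometric picture the paper relies on---the continuation fails precisely when $q$ reaches the real axis, the limiting point must lie in $(z_\mathrm{L},0)$, and the sonic curve is characterized by the vanishing of $Y$ there---but the route you take to establish these facts is genuinely different from the paper's, and where you depart you leave real gaps (some of which you flag yourself). You analyze the $t$-slice flow of the characteristic ODE $q_x=-4/(\alpha Y(q;q,q^*))$ and try to read off a square-root fold at the end of the $x$-interval. The paper instead observes that as $\Im\{q\}\downarrow0$ the combination $F=M_1-\tfrac12(q+q^*)M_0$ tends to $-\Im\{\Phi_+(z)\}$ (a dominated-convergence argument), which together with the divergence of $M_0$ at $z_\mathrm{L}$, $0$, and $z_\mathrm{P}$ pins the limit to $z\in(z_\mathrm{L},0)$; in that degenerate limit the two moment conditions collapse to one scalar equation $M_0^\mathrm{c}(z;x,t)=0$, each $z$ gives a straight line in the $(x,t)$-plane, and $x_\mathrm{c}(t)$ is obtained for each $t$ as the envelope (caustic), i.e.\ $x_\mathrm{c}(t)=\min_{z\in(z_\mathrm{L},0)}\bigl[\tfrac{1-4z}{\alpha}t-\tfrac{\alpha}{2\pi}I_0^\mathrm{c}(z)\bigr]$. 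Uniqueness and nondegeneracy of the minimizer (hence smoothness of $x_\mathrm{c}(t)$) follow from a symmetry argument on $Y^\mathrm{c}(w;z)$: by Proposition~\ref{prop:Yzeroscx} it has a unique simple zero $w(z)$ for each $z$, the symmetric graph $w=w(z)$ is an orientation-reversing involution and so crosses the diagonal exactly once, and the \emph{simplicity} of the zero forces $\tfrac{d}{dz}Y^\mathrm{c}(z;z)\ne0$ at the crossing.

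The concrete gaps in your version are exactly the ones you anticipate: (a) you assert that the qualitative requirement ``$\Im\{q\}>0$ exactly for $x<x_\mathrm{c}(t)$'' \emph{forces} a nondegenerate square-root fold, but nothing in the ODE picture alone rules out a higher-order tangency or a transversal crossing of the real axis (indeed the Jacobian $\mathscr J$ vanishes when $\Im\{q\}=0$ regardless of whether $Y$ does, so the implicit-function-theorem breakdown does not by itself imply $Y(q_\mathrm{c};q_\mathrm{c},q_\mathrm{c})=0$); and (b) you have no mechanism to locate the real limit point in $(z_\mathrm{L},0)$ rather than at $z_\mathrm{P}$ or an endpoint, for all $t>0$. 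Without a substitute for the paper's $F$-limit and $Y^\mathrm{c}$-symmetry arguments, your decoupled system $\{Y(q;q,q)=0,\ M_0(q,q,x,t)=0\}$ in step 4 is not justified, because its first equation has not been established. The finiteness argument in your step 2 is also imprecise (the ODE flow alone could in principle have a bounded orbit); what saves it is the conserved constraint $M_0=0$, which you do invoke but not quite cleanly enough to conclude.
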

\begin{proof}
%\textcolor{red}{The proof here does not address uniformity with respect to $x\to -\infty$.  How do we know that we can choose $t>-T$ for \emph{all} $x<x_c(t)$, however negative?}
First, we argue that it is impossible for $\Im\{q\}$ to vanish for bounded $(x,t)$ unless $q$ tends to a real point $z$ in the interval $(z_\mathrm{L},0)$.  Consider the combination of $M_0$ and $M_1$ given by:
\begin{equation}
\begin{split}
F(q,q^*;x,t):=&M_1(q,q^*;x,t)-\frac{1}{2}(q+q^*)M_0(q,q^*;x,t)\\
{}=&\frac{\pi t}{\alpha^2}(q-q^*)^2 + I_1(q,q^*)-\frac{1}{2}(q+q^*)I_0(q,q^*).
\end{split}
\end{equation}
Clearly, we must have $F(q,q^*;x,t)=0$ for any solution of the two equations $M_0=M_1=0$.  Now for $t$ bounded the first term on the second line above obviously converges to zero as $\Im\{q\}\downarrow 0$.  By definition of $I_p(q,q^*)$ we have 
\begin{equation}
I_1(q,q^*)-\frac{1}{2}(q+q^*)I_0(q,q^*)=\Im\left\{\int_\mathcal{B} \Phi'(s)\frac{s-\frac{1}{2}(q+q^*)}{S_+(s)}\,ds\right\}.
\end{equation}
A careful dominated convergence argument (in which the integration from $q$ is replaced locally by a loop around $s=q$ that is deformed to the real axis in a neighborhood of $z\in\mathbb{R}$, a contour on which $(s-\frac{1}{2}(q+q^*))/S(s)$ converges pointwise to $1$ and is uniformly bounded) shows that
\begin{equation}
\mathop{\lim_{q\to z\in\mathbb{R}}}_{\Im\{q\}>0} \left[I_1(q,q^*)-\frac{1}{2}(q+q^*)I_0(q,q^*)\right]=\Im\left\{
\int_z^{z_\mathrm{P}}\Phi_+'(s)\,ds\right\} = -\Im\{\Phi_+(z)\}
\end{equation}
because $\Im\{\Phi_+(z_\mathrm{P})\}=0$ according to \eqref{eq:ImPhiPlus}.  But again according to \eqref{eq:ImPhiPlus} we see that $\Im\{\Phi_+(z)\}=0$ only for $z=z_\mathrm{P}$ and for $z\in [z_\mathrm{L},0]$.  Therefore, for bounded $t$ we can only have $\Im\{q\}\downarrow 0$ if $q\to z\in [z_\mathrm{L},0]$ or if $q\to z_\mathrm{P}$.  But a similar argument also shows that $M_0(q,q^*;x,t)\to\infty$ if $q\to z_\mathrm{L}$, $q\to 0$, or if $q\to z_\mathrm{P}$ with $\Im\{q\}>0$ and $(x,t)$ bounded, so $q$ can only tend to a point $z\in(z_\mathrm{L},0)$ for bounded $(x,t)$.

Now fix $z\in(z_\mathrm{L},0)$.  Then $F(q,q^*;x,t)\to 0$ as $q\to z$ with $\Im\{q\}>0$, so the two conditions $M_0=M_1=0$ degenerate to a single condition:  $M_0^\mathrm{c}(z;x,t)=0$,
where
\begin{equation}
M_0^\mathrm{c}(z;x,t):=\frac{2\pi}{\alpha^2}(4zt+\alpha x-t) + I_0^\mathrm{c}(z),\quad z_\mathrm{L}<z<0,
\end{equation}
where
\begin{equation}
\begin{split}
I_0^\mathrm{c}(z):=&-\frac{\pi}{2}\Phi'(z)+\Im\left\{
\int_0^{z_\mathrm{P}}\frac{\Phi_+'(s)\,ds}{s-z}\right\}\\
{}=&\frac{\pi}{2}\Phi'(z)+\Im\left\{\int_{z_\mathrm{L}}^{z_\mathrm{P}}\frac{\Phi'(s)\,ds}{s-z}\right\},
\end{split}
\end{equation}
where on the second line the path of integration from $z_\mathrm{L}$ to $z_\mathrm{P}$ lies in the upper half $s$-plane.
Since $\Phi'(z)\to +\infty$ as $z\downarrow z_\mathrm{L}$ while $\Phi'(z)\to -\infty$ as $z\uparrow 0$,
the expression on the first line shows that $I_0^\mathrm{c}(z)\to -\infty$ as $z\downarrow z_\mathrm{L}$, while that on the second line shows that $I_0^\mathrm{c}(z)\to -\infty$ as $z\uparrow 0$.  Both expressions are real-valued, finite, and equivalent for $z\in (z_\mathrm{L},0)$.

For each $z\in (z_\mathrm{L},0)$, the condition $M_0^\mathrm{c}(z;x,t)=0$ defines a straight line in the $(x,t)$-plane.  We will now show that the union of these lines is an unbounded region of the form $x\ge x_\mathrm{c}(t)$ for all %$t>-T$ for some $T>0$, 
$t\ge 0$, where $x_\mathrm{c}(t)$ is a smooth function of $t$ with $x_\mathrm{c}(0)=x_\mathrm{c}$.  More precisely, through each point $(x,t)$ with $x>x_\mathrm{c}(t)$ and %$t>-T$ 
$t\ge 0$ there will pass exactly two of these lines corresponding to distinct values of $z\in (z_\mathrm{L},0)$, and the curve $x=x_\mathrm{c}(t)$ will appear as the caustic formed by the intersection of infinitesimally neighboring lines.  Given %$t>-T$ 
$t\ge 0$ fixed, the possible $x$-values lying on the lines are given by
\begin{equation}
x=\frac{1-4z}{\alpha}t-\frac{\alpha}{2\pi}I_0^\mathrm{c}(z),\quad z_\mathrm{L}<z<0.
\label{eq:xline}
\end{equation}
The first term is obviously bounded for $t$ fixed and $z\in (z_\mathrm{L},0)$, but the second term tends to $+\infty$ as $z\downarrow z_\mathrm{L}$ or $z\uparrow 0$.  This shows that for %$t>-T$
$t\ge 0$ fixed the region occupied by the lines is exactly
\begin{equation}
x\ge x_\mathrm{c}(t):=\min_{z_\mathrm{L}<z<0}\left[\frac{1-4z}{\alpha}t-\frac{\alpha}{2\pi}I_0^\mathrm{c}(z)\right]<+\infty.
\end{equation}
To determine $x_\mathrm{c}(t)$ we show that for %$t>-T$ 
each $t\ge 0$, the function to be minimized has exactly one critical point in $(z_\mathrm{L},0)$, which is then (by asymptotics as $z\downarrow z_\mathrm{L}$ and as $z\uparrow 0$) necessarily the unique minimizer.  Indeed,
\begin{equation}
\frac{\partial}{\partial z}\left[\frac{1-4z}{\alpha}t-\frac{\alpha}{2\pi}I_0^\mathrm{c}(z)\right]=-\frac{\alpha}{2\pi}\frac{\partial M_0^\mathrm{c}}{\partial z}(z;x,t) = -\frac{\alpha}{4}Y^\mathrm{c}(z;z),
\end{equation}
where $Y^\mathrm{c}(w;z)$ is by definition for $z_\mathrm{L}<w<0$ the limiting value of the function $Y(w;q,q^*)$ as $q\to z\in (z_\mathrm{L},0)$ with $\Im\{q\}>0$:
\begin{equation}
Y^\mathrm{c}(w;z):=\frac{16t}{\alpha^2}-\frac{\Phi'(w)-\Phi'(z)}{w-z}+\frac{2}{\pi}\Im\left\{
\int_0^{z_\mathrm{P}}\frac{\Phi_+'(s)\,ds}{(s-z)(s-w)}\right\},\quad z_\mathrm{L}<w<0,\quad z_\mathrm{L}<z<0.
\end{equation}
It is obvious that $Y^\mathrm{c}(w;z)$ is symmetric in its arguments:  $Y^\mathrm{c}(w;z)=Y^\mathrm{c}(z;w)$, and for each $z\in (z_\mathrm{L},0)$ there is, by Proposition~\ref{prop:Yzeroscx}, a unique simple zero $w=w(z)\in (z_\mathrm{L},0)$ of $Y^\mathrm{c}(w;z)$. 
%(this holds uniformly for $t>-T$; the fact that for $t<0$ the additional zeros of $Y^c(w;z)$ remain large in magnitude can be seen from an asymptotic analysis of $Y^c(w;z)$ or more generally of $Y(w;q,q^*)$ for large $w$ --- see below).  
The graph $w=w(z)$ must be symmetric with respect to reflection through the diagonal $w=z$ because otherwise there would necessarily be multiple roots $w\in (z_\mathrm{L},0)$ of $Y^\mathrm{c}(w;z)$ for at least some $z$ in the interval $(z_\mathrm{L},0)$.  Therefore, this graph either coincides with the diagonal line $w=z$ or it connects $(z_\mathrm{L},0)$ with $(0,z_\mathrm{L})$ and crosses the diagonal exactly once orthogonally.  But the case $w(z)=z$ can be ruled out easily by symmetry of $Y^\mathrm{c}(w;z)$ and the fact that $Y^\mathrm{c}(w;z)\uparrow +\infty$ as $w\downarrow z_\mathrm{L}$ while $Y^\mathrm{c}(w;z)\downarrow -\infty$ as $w\uparrow 0$.  Therefore, $Y^\mathrm{c}(z;z)$ has exactly one simple root in the interval $(z_\mathrm{L},0)$ for all $t>-T$, and this establishes the uniqueness of the critical point and hence of the minimizer.  Let us denote this critical point by $z=z_\mathrm{c}(t)$, and the corresponding value of $x$ defined for $z=z_c(t)$ by \eqref{eq:xline} as $x=x_\mathrm{c}(t)$.  By consideration of a different type of configuration for the $g$-function in which the branch points of the square root $S(z)$ are both real, and examination of the limit in which these two points coalesce (see \cite{DiFrancoM12b}), it can be shown that 
% In \S\ref{sec:VBV} below it will be shown that 
in fact $z_\mathrm{c}(0)=z_\mathrm{c}\in (z_\mathrm{L},0)$, where $z_\mathrm{c}$ is the specific value 
% defined by the turning point curve as illustrated in Figure~\ref{fig:tpc}.
at which the complex branches $\mathfrak{z}(x)$ and $\mathfrak{z}(x)^*$ of the turning point curve become become real (see Figure~\ref{fig:tpc}).
It then follows from Proposition~\ref{prop:momentscxtzero} that $x_\mathrm{c}(0)=x_\mathrm{c}$.    
\end{proof}
The curve $x=x_\mathrm{c}(t)$ is plotted for the case of $\alpha=\delta=1$ and $\mu=2$ in Figure~\ref{fig:xc-of-t}.
\begin{figure}[h]
\begin{center}
\includegraphics[width=0.6\linewidth]{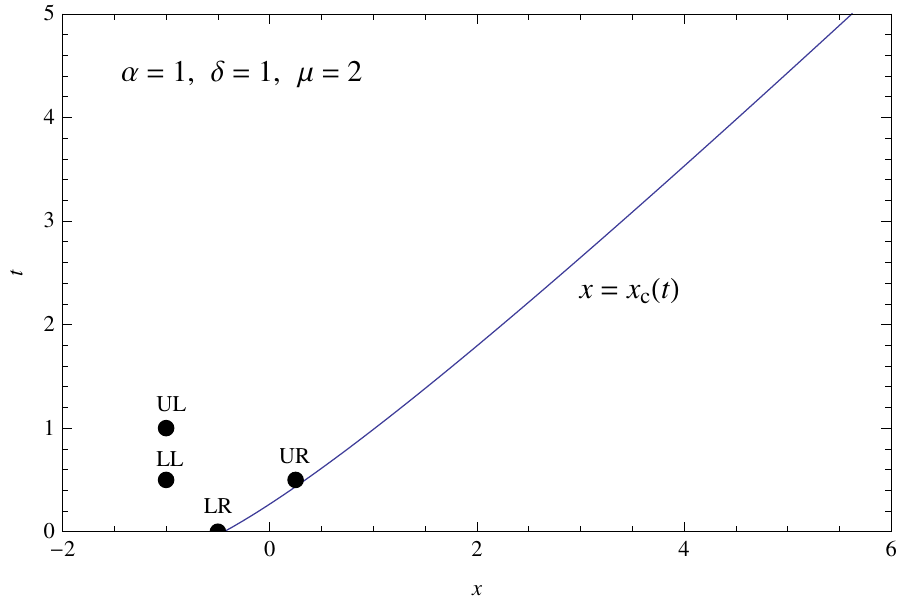}
\end{center}
\caption{The curve $x_\mathrm{c}(t)$ for $\alpha=\delta=1$ and $\mu=2$.  The four marked points in the domain $x<x_\mathrm{c}(t)$ correspond to the $(x,t)$-values in the four indicated panels of Figure~\ref{fig:Im-h-plots} below.}
\label{fig:xc-of-t}
\end{figure}
In Figure~\ref{fig:xc-compare1} we plot the differences between $x_\mathrm{c}(t)$ for various parameter values and the specific function $x_\mathrm{c}(t)$ shown in Figure~\ref{fig:xc-of-t}.
\begin{figure}[h]
\begin{center}
\includegraphics[width=0.475\linewidth]{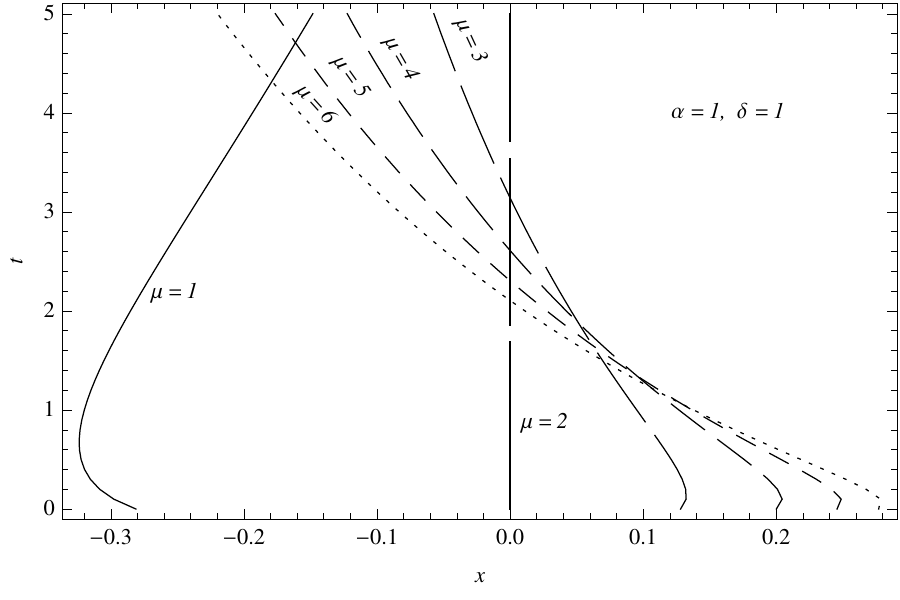}%
\includegraphics[width=0.475\linewidth]{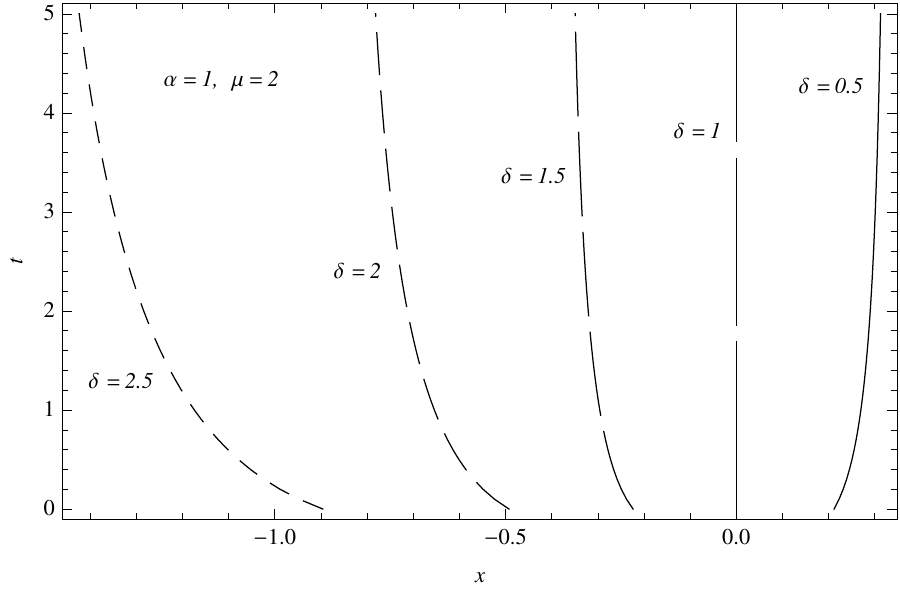}
\end{center}
\caption{Left:  for $\alpha=\delta=1$, curves $x=x_\mathrm{c}(t)$ for various values of $\mu$ compared with the curve in Figure~\ref{fig:xc-of-t}.  Right:  for $\alpha=1$ and $\mu=2$, curves $x=x_\mathrm{c}(t)$ for various values of $\delta$ compared with the curve in Figure~\ref{fig:xc-of-t}.}
\label{fig:xc-compare1}
\end{figure}
In particular, these plots clearly show that $x_\mathrm{c}(t)$ is not (in general at least) a linear function of $t$.

Finally, we show that the complex-valued function $q(x,t)$ defined for all $t\ge 0$ and $x\le x_\mathrm{c}(t)$ satisfies certain canonical partial differential equations (Whitham equations)
in the interior of the domain of definition.
\begin{proposition}
Suppose the equations \eqref{eq:momentscx} admit a solution $q=q(x,t)$ differentiable with respect to $x$ and $t$.  Then
\begin{equation}
\frac{\partial q}{\partial t} +\frac{1}{\alpha}\left[1-3q-q^*\right]\frac{\partial q}{\partial x}=0\quad\text{and}\quad
\frac{\partial q^*}{\partial t} +\frac{1}{\alpha}\left[1-3q^*-q\right]\frac{\partial q^*}{\partial x}=0.
\label{eq:cxWhitham}
\end{equation}
\label{prop:cxWhitham}
\end{proposition}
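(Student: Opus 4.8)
The plan is to differentiate the pair of defining relations $M_0(q,q^*;x,t)=0$ and $M_1(q,q^*;x,t)=0$ implicitly, once with respect to $x$ and once with respect to $t$, regarding $q$ and $q^*$ as independent unknowns, and then to solve the two resulting $2\times 2$ linear systems for $(\partial q/\partial x,\partial q^*/\partial x)$ and $(\partial q/\partial t,\partial q^*/\partial t)$. All of the derivatives that enter are already at our disposal. From \eqref{eq:M0cxpartials} and the identities established in the proof of Proposition~\ref{prop:cxcontinue}, valid along the solution locus where $M_0=0$, one has
\begin{equation}
\frac{\partial M_0}{\partial q}=\frac{\pi}{4}Y(q;q,q^*),\quad \frac{\partial M_0}{\partial q^*}=\frac{\pi}{4}Y(q^*;q,q^*),\quad \frac{\partial M_1}{\partial q}=q\,\frac{\partial M_0}{\partial q},\quad \frac{\partial M_1}{\partial q^*}=q^*\,\frac{\partial M_0}{\partial q^*},
\end{equation}
while differentiating the explicit formulae \eqref{eq:momentscx} directly gives
\begin{equation}
\frac{\partial M_0}{\partial x}=\frac{2\pi}{\alpha},\quad \frac{\partial M_0}{\partial t}=\frac{2\pi}{\alpha^2}(2(q+q^*)-1),\quad \frac{\partial M_1}{\partial x}=\frac{\pi}{\alpha}(q+q^*),\quad \frac{\partial M_1}{\partial t}=\frac{\pi}{\alpha^2}\left(3q^2+2qq^*+3q^{*2}-(q+q^*)\right).
\end{equation}

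I would first treat the $x$-derivatives. Setting $a:=\tfrac{\pi}{4}Y(q;q,q^*)q_x$ and $b:=\tfrac{\pi}{4}Y(q^*;q,q^*)q^*_x$, the chain rule applied to $M_0=0$ and to $M_1=0$ yields the linear system $a+b=-2\pi/\alpha$ and $qa+q^*b=-\pi(q+q^*)/\alpha$. Its coefficient matrix $\left(\begin{smallmatrix}1 & 1\\ q & q^*\end{smallmatrix}\right)$ has determinant $q^*-q\neq 0$ because $\Im\{q\}>0$ in the construction, and solving gives simply $a=b=-\pi/\alpha$. In particular $\tfrac{\pi}{4}Y(q;q,q^*)q_x=-\pi/\alpha$, which shows \emph{en passant} that $Y(q;q,q^*)$ and $q_x$ are both nonzero wherever $q(x,t)$ is differentiable.

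Next I would run the same computation with the $t$-derivatives: writing $c:=\tfrac{\pi}{4}Y(q;q,q^*)q_t$ and $d:=\tfrac{\pi}{4}Y(q^*;q,q^*)q^*_t$, the chain rule gives $c+d=-\tfrac{2\pi}{\alpha^2}(2(q+q^*)-1)$ and $qc+q^*d=-\tfrac{\pi}{\alpha^2}(3q^2+2qq^*+3q^{*2}-(q+q^*))$. Eliminating $c$ (multiply the first equation by $q$ and subtract) produces, after collecting terms, the expression $q^2+2qq^*-3q^{*2}-(q-q^*)$ on the right; recognizing the factorization $q^2+2qq^*-3q^{*2}=(q-q^*)(q+3q^*)$ lets one cancel the factor $q-q^*$ and obtain $d=-\tfrac{\pi}{\alpha^2}(q+3q^*-1)$, and symmetrically $c=-\tfrac{\pi}{\alpha^2}(3q+q^*-1)$. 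Dividing each of these $t$-relations by the corresponding $x$-relation from the previous step cancels the common factor $\tfrac{\pi}{4}Y(\,\cdot\,;q,q^*)$ (legitimate since that factor is nonzero) and leaves $q_t=\tfrac{1}{\alpha}(3q+q^*-1)q_x$ and $q^*_t=\tfrac{1}{\alpha}(3q^*+q-1)q^*_x$, which are precisely the two equations \eqref{eq:cxWhitham}.

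The argument is almost entirely mechanical; the one spot that needs genuine attention is the elimination step in the $t$-system, where the algebraic identity $q^2+2qq^*-3q^{*2}=(q-q^*)(q+3q^*)$ must be spotted so that the factor $q-q^*$ — nonzero precisely because $\Im\{q\}>0$ — can be removed and the right-hand sides take their clean form. Worth flagging as well is that one never needs to assume $Y(q;q,q^*)\neq 0$ as a hypothesis: its nonvanishing drops out of the $x$-system automatically, and that is exactly what makes the final division legitimate.
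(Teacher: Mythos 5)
Your proposal is correct and follows essentially the same route as the paper's own proof: implicit differentiation of $M_0=0$ and $M_1=0$ using the previously established relations $\partial M_1/\partial q=q\,\partial M_0/\partial q$ and $\partial M_1/\partial q^*=q^*\,\partial M_0/\partial q^*$ (valid along the solution locus), then solving the resulting linear systems to reach \eqref{eq:qqstarx} and \eqref{eq:qqstart}. The paper states those solutions directly; you supply the elimination and factorization details, and your observation that $Y(q;q,q^*)\neq 0$ is forced by differentiability rather than needing to be assumed is a nice, correct tightening of the paper's closing remark.
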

\begin{proof}
This follows from the relations $M_0(q(x,t),q^*(x,t),x,t)=M_1(q(x,t),q^*(x,t),x,t)=0$ by implicit differentiation.  Indeed, since
\begin{equation}
\frac{\partial M_0}{\partial x}=\frac{2\pi}{\alpha}\quad\text{and}\quad
\frac{\partial M_0}{\partial t}=\frac{2\pi}{\alpha^2}[2(q+q^*)-1]
\end{equation}
while
\begin{equation}
\frac{\partial M_1}{\partial x}=\frac{\pi}{\alpha}(q+q^*)\quad\text{and}\quad
\frac{\partial M_1}{\partial t}=\frac{\pi}{\alpha^2}[3q^2+2qq^*+3q^{*2}-q-q^*],
\end{equation}
it follows immediately from \eqref{eq:M0cxpartials} and \eqref{eq:M1cxpartials} given in the proof of Proposition~\ref{prop:cxcontinue} that
\begin{equation}
\frac{\partial q}{\partial x}=-\frac{4}{\alpha Y(q;q,q^*)}\quad\text{and}\quad
\frac{\partial q^*}{\partial x}=-\frac{4}{\alpha Y(q^*;q,q^*)},
\label{eq:qqstarx}
\end{equation}
and also that
\begin{equation}
\frac{\partial q}{\partial t}=\frac{4}{\alpha^2Y(q;q,q^*)}\left[1-3q-q^*\right]\quad\text{and}\quad
\frac{\partial q^*}{\partial t}=\frac{4}{\alpha^2Y(q^*;q,q^*)}\left[1-3q^*-q\right].
\label{eq:qqstart}
\end{equation}
It is then obvious that \eqref{eq:cxWhitham} holds true if $Y(q;q,q^*)$ and $Y(q^*;q,q^*)$ are nonzero, a condition for the existence of a smooth solution $q(x,t)$.
\end{proof}
The equations \eqref{eq:cxWhitham} are actually equivalent to the dispersionless MNLS system
\eqref{eq:DispersionlessMNLS} in the locally subsonic case that $Q<0$, as can be seen by means of a straightforward calculation using the substitutions 
\begin{equation}
\rho = \frac{4}{\alpha^2}(\Im\{q^{1/2}\})^2\quad\text{and}\quad u = \frac{1}{\alpha}(1-4|q|).
\end{equation}
This shows that the variables $q$ and $q^*$ are complex \emph{Riemann invariants} for the elliptic dispersionless MNLS system.

\subsection{Choice of the arc $\mathcal{B}$}
Supposing that $t\ge 0$ and $x< x_\mathrm{c}(t)$, we determine the endpoint $q=q(x,t)$ with
$\Im\{q(x,t)\}>0$ as explained in \S\ref{sec:q} and hence obtain as explained in \S\ref{sec:BasicFormulae} the functions $g(z)$ and $h(z)$ (parametrized by $(x,t)$ of course).  
These functions are actually only well-defined once we specify a particular contour arc $\mathcal{B}$
in the upper half-plane connecting $z=q$ with $z=z_\mathrm{P}$.  We now describe how $\mathcal{B}$
is to be chosen.

\begin{proposition}
Let $t\ge 0$ and $x<x_\mathrm{c}(t)$, and let $q=q(x,t)$ be determined as explained in \S\ref{sec:q}.  Then there is a unique choice of the arc $\mathcal{B}$ connecting $z=q$ with $z=z_\mathrm{P}$ for which $\Im\{h(z)\}$ can be extended by continuity to the domain $z\in\mathbb{C}\setminus (-\infty,z_\mathrm{L}]\cup [0,+\infty)$.
\end{proposition}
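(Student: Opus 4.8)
The plan is to show that the continuity requirement forces $\mathcal{B}$ to be a horizontal trajectory of a quadratic differential that does not itself depend on $\mathcal{B}$, and then to prove that exactly one such trajectory joins $q$ to $z_\mathrm{P}$ in the closed upper half-plane. \emph{The reduction.} Because $\mathcal{B}$ is part of the branch cut of $S(z)$ we have $S_+=-S_-$ along the open arc $\mathcal{B}$, whereas $Y(z;q,q^*)$ is analytic across $\mathcal{B}$; hence \eqref{eq:2hprime} gives $h'_++h'_-=0$ and $h'_+-h'_-=-S_+Y=2h'_+$ on $\mathcal{B}$. Integrating the first relation along the connected arc $\mathcal{B}$ shows that $h_++h_-$ is constant there; and since $Y(q)\ne0$ for every $(x,t)$ in the range under consideration (by Proposition~\ref{prop:Yzeroscx}, the only zero of $Y$ in its domain lies in the real interval $(z_\mathrm{L},0)$), the branch point at $q$ is of the mild $(z-q)^{3/2}$ type, so $h$ is continuous there and the constant equals $2h(q)$. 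Therefore $h_+-h_-=2h_+-2h(q)$ on $\mathcal{B}$, and $\Im\{h\}$ extends continuously across $\mathcal{B}$ if and only if $\Im\{h_+\}\equiv\Im\{h(q)\}$ on $\mathcal{B}$, i.e. if and only if $h'_+(z)\,dz\in\mathbb{R}$ along $\mathcal{B}$. Squaring and invoking $h'_+(z)^2=\tfrac14S(z)^2Y(z;q,q^*)^2=\tfrac14(z-q)(z-q^*)Y(z;q,q^*)^2$ (again a consequence of \eqref{eq:2hprime}), this says precisely that $\mathcal{B}$ is a horizontal trajectory of the quadratic differential
\begin{equation}
\varpi:=h'(z)^2\,dz^2=\tfrac14(z-q)(z-q^*)Y(z;q,q^*)^2\,dz^2,
\end{equation}
whose right-hand side is manifestly independent of $\mathcal{B}$; continuity of $\Im\{h\}$ across $\mathcal{B}^*$ then follows automatically from the Schwarz symmetry $h(z^*)=h(z)^*$. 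Thus the proposition is equivalent to the statement that $\varpi$ possesses exactly one horizontal trajectory joining $q$ to $z_\mathrm{P}$ within the closed upper half-plane.

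\emph{Uniqueness and the local picture.} Since $z_\mathrm{P}>z^+$ and $\Im\{Y_+(z_\mathrm{P})\}>0$ by \eqref{eq:ImYPlus}, while $z_\mathrm{P}\ne q$, the point $z_\mathrm{P}$ is a regular (finite, nonzero) point of $\varpi$, so exactly one trajectory passes through it. Any valid $\mathcal{B}$ must be the sub-arc of that trajectory issuing from $z_\mathrm{P}$, and two such sub-arcs coincide until they first meet a zero of $\varpi$; hence they coincide outright, each being required to terminate at the zero $q$. This settles uniqueness. At $q$ itself $\varpi$ has a simple zero (as $Y(q)\ne0$), so $h(z)-h(q)=C(z-q)^{3/2}(1+o(1))$ with $C\ne0$ and precisely three trajectories emanate from $q$, at mutual angles $2\pi/3$; existence thus reduces to showing that the trajectory through $z_\mathrm{P}$, continued into the open upper half-plane, terminates at $q$ — equivalently, that one (hence exactly one) of the three trajectories leaving $q$ reaches $z_\mathrm{P}$.

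\emph{Existence.} Denote by $\gamma$ the trajectory issuing from $z_\mathrm{P}$ into the open upper half-plane and set $c:=\Im\{h_+(z_\mathrm{P})\}$, so that $\Im\{h\}\equiv c$ along $\gamma$; here $c$ is a definite real number, computable from \eqref{eq:hdef}, \eqref{eq:ImPhiPlus} and \eqref{eq:2hprime}. Followed maximally, $\gamma$ must terminate at a zero of $\varpi$ — for $t\ge0$, only $q$, $q^*$, or the real double zero $z_Y\in(z_\mathrm{L},0)$ supplied by Proposition~\ref{prop:Yzeroscx} — or reach the interval $(z_\mathrm{L},0)$, or limit onto a point of the slits $(-\infty,z_\mathrm{L}]\cup[0,+\infty)$, or escape to $z=\infty$, or be recurrent. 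Recurrence is excluded by standard properties of quadratic differentials on the sphere (here $\varpi$ is holomorphic on $\mathbb{C}\setminus((-\infty,z_\mathrm{L}]\cup[0,+\infty))$ with finitely many zeros and a single, higher-order, pole at $z=\infty$), and a closed trajectory is excluded by the maximum principle for the harmonic function $\Im\{h\}$ (it would bound a Jordan domain on whose boundary $\Im\{h\}\equiv c$, forcing $\Im\{h\}\equiv c$, contrary to non-constancy of $h$) together with the positions of the zeros from Proposition~\ref{prop:Yzeroscx}. Since $\Im\{h\}\equiv0$ on $(z_\mathrm{L},0)$ by Schwarz symmetry and $\Im\{h(q^*)\}=-\Im\{h(q)\}$, as long as $c\ne0$ the trajectory $\gamma$ can reach neither $(z_\mathrm{L},0)$ nor $z_Y$ nor $q^*$, so its only remaining options are $q$ or a point of the slits. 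The genuinely delicate point — and the one I expect to be \emph{the main obstacle} — is to rule out escape to $z=\infty$ together with approach to a point of $[0,+\infty)$ other than $z_\mathrm{P}$: from \eqref{eq:PhiPrimeExpansion} and \eqref{eq:Yexpansion} one has $h(z)=-\tfrac{4t}{\alpha^2}z^2+O(z)$ as $z\to\infty$ in the upper half-plane (and $h(z)=O(z)$ when $t=0$), so there really do exist trajectories running to infinity, asymptotic to the imaginary axis or to the real slits, and $\Im\{h_+\}$ generally attains the value $c$ at points of $[0,+\infty)$ besides $z_\mathrm{P}$. I would settle these exclusions by a global description of the horizontal-trajectory partition of the closed upper half-plane determined by $\varpi$, built from the boundary values of $\Im\{h_+\}$ along all of $\mathbb{R}$ (available from \eqref{eq:ImPhiPlus}, \eqref{eq:ImYPlus} and \eqref{eq:2hprime}), the behavior at $z=\infty$, the zero count of Proposition~\ref{prop:Yzeroscx}, and the argument principle; alternatively, and more cleanly, by a continuity/connectedness argument in $(x,t)$ anchored at $t=0$, where Proposition~\ref{prop:momentscxtzero} together with the discussion in \S\ref{sec:tpc} identifies $\gamma$ outright as the complex branch $s=\mathfrak{z}(y)$ of the turning-point curve joining $q=\mathfrak{z}(x)$ to $z_\mathrm{P}$ (cf. Figure~\ref{fig:tpc}). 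Once $\gamma$ is known to end at $q$, the required arc is $\mathcal{B}=\gamma$, whose uniqueness was established above.
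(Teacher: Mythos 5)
Your reduction to a horizontal-trajectory condition for the quadratic differential $\varpi := h'(z)^2\,dz^2 = \tfrac14(z-q)(z-q^*)Y(z;q,q^*)^2\,dz^2$ is correct and equivalent to the paper's starting observation, and the uniqueness argument (a regular point of $\varpi$ has a unique trajectory through it, and any admissible $\mathcal{B}$ is a sub-arc of the one through $z_\mathrm{P}$) is sound. The genuine gap is in the existence step, and it is structural rather than merely a loose end.

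First, for a horizontal trajectory to connect $q$ to $z_\mathrm{P}$ at all one needs $\Im\{h(q)\}=\Im\{h(z_\mathrm{P})\}$; your proposal never establishes this. The paper proves that both quantities vanish. The vanishing of $\Im\{h(z_\mathrm{P})\}$ is a direct integral computation using \eqref{eq:2hprime} and \eqref{eq:ImYPlus}. The vanishing of $\Im\{h(q)\}$ is \emph{precisely where the moment conditions \eqref{eq:momentscx} on $q$ enter}: one shows $\Im\{h(q)\}=\tfrac{i}{4}\oint p(s)\,ds$, a residue at infinity, which vanishes because $M_0=M_1=0$ force $p(s)=O(s^{-2})$. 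This is the mechanism that ties the choice of $q(x,t)$ to the geometry of $\mathcal{B}$, and omitting it leaves the existence claim unsupported: there is no a priori reason the trajectory issuing from $z_\mathrm{P}$ should terminate at $q$ rather than run off to a slit or to $\infty$.

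Second, your exclusion clause ``as long as $c\ne0$ the trajectory can reach neither $(z_\mathrm{L},0)$ nor $z_Y$ nor $q^*$'' is moot, because $c=\Im\{h_+(z_\mathrm{P})\}$ is in fact zero. Since $c=0$, the trajectory from $z_\mathrm{P}$ lives on the \emph{same} level as $[z_\mathrm{L},0]$, $\xi$ (the real zero of $Y$), $q$, and $q^*$, and level-set reasoning alone cannot tell these destinations apart. The paper replaces this with a complete combinatorial description of the zero level set $\mathcal{Z}$: one branch enters the upper half-plane from $\xi\in(z_\mathrm{L},0)$, one from $z_\mathrm{P}$, exactly three from $q$ (at angles $2\pi/3$, since $Y(q;q,q^*)\ne0$), and exactly one unbounded branch (determined from the asymptotics \eqref{eq:Imh-asymp} using \eqref{eq:phi-expansion}); the Maximum Principle for $\Im\{h\}$ rules out closed loops and isolated components; and the absence of zeros of $Y$ in the open upper half-plane (Proposition~\ref{prop:Yzeroscx}) rules out interior saddles, forcing the six branch-ends to match up pairwise without crossings. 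Since no two of the three arcs leaving $q$ can match each other (again by the Maximum Principle), one must go to $\xi$, one to $z_\mathrm{P}$, and one to $\infty$ --- giving the arc $\mathcal{B}$. This is exactly the ``global description of the horizontal-trajectory partition'' you propose as alternative (i), but it must actually be carried out, and it hinges on the two vanishing facts $\Im\{h(z_\mathrm{P})\}=\Im\{h(q)\}=0$ that your proposal leaves aside. Your alternative (ii) --- continuity in $(x,t)$ anchored at $t=0$ --- would also need these facts plus a separate argument that the trajectory structure cannot bifurcate, so it is not an easier route.
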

\begin{proof}
The key observation is that as the imaginary part of an analytic function, $\Im\{h(z)\}$ is a harmonic function in the domain $\mathbb{C}\setminus ((-\infty,z_\mathrm{L}]\cup[0,+\infty)\cup \mathcal{B}\cup \mathcal{B}^*)$.
Generally there is a jump discontinuity across $\mathcal{B}\cup \mathcal{B}^*$ that we wish to avoid by choice of $\mathcal{B}$.  

Firstly, regardless of how $\mathcal{B}$ is chosen, we can easily see that $\Im\{h(z)\}=0$ for $z\in [z_\mathrm{L},0]$, for $z=z_\mathrm{P}$, for $z=q$, and for $z=q^*$.  Indeed, by Schwarz symmetry of $h$ and the fact that $(z_\mathrm{L},0)$ is in the domain of analyticity of $h$, it follows immediately that $\Im\{h(z)\}=0$ for $z_\mathrm{L}<z<0$.  But since $S(z)$ is Schwarz symmetric and analytic on the real axis except at its jump discontinuity point $z=z_\mathrm{P}$, and since
according to \eqref{eq:ImYPlus}, the boundary values $\Im\{Y_\pm(z;q,q^*)\}$ taken by $\Im\{Y(z;q,q^*)\}$ on the real axis are bounded, it follows from \eqref{eq:2hprime} that the corresponding boundary values of $\Im\{h(z)\}$ are real differentiable functions of real $z$ except at the point $z=z_\mathrm{P}$ (which is nonetheless a point of Lipschitz continuity of the boundary values of $\Im\{h(z)\}$ for $z\in\mathbb{R}$).  In particular this implies that the limit points $z=z_\mathrm{L}$ and $z=0$ of the open interval $(z_\mathrm{L},0)$ are also points where $\Im\{h(z)\}=0$.  The fact that the well-defined value $\Im\{h(z_\mathrm{P})\}$ vanishes is a consequence of
the Fundamental Theorem of Calculus, the formula \eqref{eq:2hprime}, and the formula \eqref{eq:ImYPlus}; since $S(z)=-\sqrt{(z-\Re\{q\})^2+\Im\{q\}^2}$ holds
for $z<z_\mathrm{P}$, we integrate along the top edge of the branch cut for $Y(z;q,q^*)$ to
obtain
\begin{equation}
\begin{split}
\Im\{h(z_\mathrm{P})\}&=\Im\left\{h(0) -\frac{1}{2}\int_0^{z_\mathrm{P}}S(s)Y_+(s;q,q^*)\,ds\right\}\\
&=\frac{1}{2}\int_0^{z_\mathrm{P}}\sqrt{(s-\Re\{q\})^2+\Im\{q\}^2}\Im\{Y_+(s;q,q^*)\}\,ds\\
&=- 2\pi\int_0^{z^+}\frac{ds}{\sqrt{\mu^2-16s}} +\frac{\pi}{\alpha}\int_{z^+}^{z_\mathrm{P}}\,ds.
\end{split}
\end{equation}
By direct evaluation of these integrals and the use of the definitions of $z^+$ and $z_\mathrm{P}$ we obtain the claimed result that $\Im\{h(z_\mathrm{P})\}=0$.  By Schwarz symmetry of $Y(z;q,q^*)$ the same result holds had we integrated instead along the lower edge of the branch cut for $Y(z;q,q^*)$.  With this information we can express $\Im\{h(q)\}$ as an integral again using the Fundamental Theorem of Calculus.  Indeed, integrating along the left edge of the branch cut $\mathcal{B}$ in the direction of its orientation from $z=q$ to $z=z_\mathrm{P}$, we get (using $\Im\{h(z_\mathrm{P})\}=0$)
\begin{equation}
\Im\{h(q)\}=\frac{1}{2}\Im\left\{\int_\mathcal{B}S_+(s)Y(s;q,q^*)\,ds\right\}.
\end{equation}
Now using the formula \eqref{eq:gplusminusprime} and the fact that $g(z)$ is Schwarz-symmetric and analytic for $z\in \mathbb{C}\setminus (\mathcal{B}\cup \mathcal{B}^*)$ we have
\begin{equation}
\Im\{h(q)\}=\frac{1}{2}\Im\left\{\int_\mathcal{B} (g_+'(s)-g_-'(s))\,ds\right\}=\frac{i}{4}\oint g'(s)\,ds = \frac{i}{4}\oint p(s)\,ds,
\end{equation}
where the closed contour of integration is a large positively-oriented circle of arbitrarily large radius.  It therefore follows from the fact that $q$ satisfies the conditions \eqref{eq:momentscx}
that $p(s)=O(s^{-2})$ as $s\to\infty$ and hence $\Im\{h(q)\}=0$ as claimed.  The fact that $\Im\{h(q^*)\}=0$ then follows by Schwarz symmetry of $h$.

Still regarding the arc $\mathcal{B}$ connecting $z=q$ with $z=z_\mathrm{P}$ as arbitrary, we note that
the zero level set of $\Im\{h(z)\}$ defined by $\mathcal{Z}:=\{z\in\mathbb{C}\setminus ((-\infty,z_\mathrm{L}]\cup [0,+\infty)\cup \mathcal{B}\cup \mathcal{B}^*), \Im\{h(z)\}=0\}$ actually extends by continuity to the excluded branch cut $\mathcal{B}\cup \mathcal{B}^*$ and moreover is independent of $\mathcal{B}$.  This follows from the fact that given $\Im\{h(q)\}=0$ we may express $\Im\{h(z)\}$ in terms of an integral from $s=q$ to $s=z$, and because $s=q$ is the square-root branch point of $S(s)$, a change of choice of branch for $S$ simply amounts to a change of sign of the integral and hence of $\Im\{h(z)\}$.  This clearly leaves $\mathcal{Z}$ invariant.

We will now characterize $\mathcal{Z}$ completely, and prove in particular that $\mathcal{Z}$ contains a smooth arc connecting $z=q$ with $z=z_\mathrm{P}$.  When the proof is finished we will select this arc to be the arc $\mathcal{B}$ and show that this choice makes $\Im\{h(z)\}$ continuous.  We firstly characterize the real points of $\mathcal{Z}$.  We already know that $\mathcal{Z}$ contains the real interval $[z_\mathrm{L},0]$ and the real point $z_\mathrm{P}$.  Moreover, it is easy to see that these points exhaust $\mathcal{Z}\cap\mathbb{R}$.  Indeed, since $S(z)$ is bounded away from zero for real $z\neq z_\mathrm{P}$ and $\mathrm{sgn}(S(z))=\mathrm{sgn}(z-z_\mathrm{P})$, it follows from \eqref{eq:2hprime} and \eqref{eq:ImYPlus} (see also Figure~\ref{fig:Ycurve}) that $\Im\{h_+(z)\}$ is strictly increasing for
$z^+<z<z_\mathrm{P}$ and strictly decreasing for $z<z_\mathrm{L}$, for $0<z<z^+$, and for $z>z_\mathrm{P}$.  

Next we consider the points of $\mathcal{Z}$ in the upper half-plane near the real axis.  According to Proposition~\ref{prop:Yzeroscx}, for $t\ge 0$ there exists a unique simple zero $z=\xi$ of $Y(z;q,q^*)$ in
the interval $(z_\mathrm{L},0)$, and since $S(z)$ is nonzero near this point, $z=\xi$ is obviously a simple saddle point of $\Im\{h(z)\}$ and therefore there is a unique branch of $\mathcal{Z}$ emanating from this point with a vertical tangent into the upper half-plane.  Also, since the real derivative of the boundary
value $\Im\{h_+(z)\}$ has a jump discontinuity at $z=z_\mathrm{P}$ with opposite nonzero left and right limits, there is a unique branch of $\mathcal{Z}$ emanating transversely to the real axis into the upper half-plane from $z=z_\mathrm{P}$ (here the tangent is not necessarily vertical, however).  The points $z=\xi\in (z_\mathrm{L},0)$ and $z=z_\mathrm{P}>0$ are the only real limit points of the part of $\mathcal{Z}$ in the open upper half-plane.  

Next, since $z=q$ is a simple root of $S(z)^2$, a local analysis of $S(z)Y(z;q,q^*)$ near $z=q$ using the fact (see Proposition~\ref{prop:Yzeroscx}) that $Y(q;q,q^*)\neq 0$ shows that
there exist exactly three arcs of $\mathcal{Z}$ emanating from $z=q\in\mathcal{Z}$ separated by angles of $2\pi/3$.

Let us now analyze $\mathcal{Z}$ assuming that $|z|$ is large with $\Im\{z\}>0$.  Since $S(s)=s[1-\tfrac{1}{2}(q+q^*)s^{-1}+O(s^{-2})]$ as $s\to\infty$,  we recall \eqref{eq:2hprime} and \eqref{eq:Yexpansion} to obtain
\begin{equation}
\begin{split}
\Im\{h(z)\}&=-\frac{1}{2}\Im\left\{\int_q^z S(s)Y(s;q,q^*)\,ds\right\}\\ &=\Im\left\{
-\frac{4t}{\alpha^2}z^2-\left(\frac{2}{\alpha^2}(\alpha x-t)+\pi i\right)z + O(z^{1/2})\right\},\quad z\to\infty,\quad \Im\{z\}>0.
\end{split}
\label{eq:Imh-asymp}
\end{equation}
For $t=0$, the part of the level set $\mathcal{Z}$ in the distant upper half-plane is clearly a smooth curve asymptotic to the straight line through the points $z=0$ and $z=-2x/\alpha + \pi i$ as $z\to\infty$.  
For $t\neq 0$ however,  $\mathcal{Z}$ is asymptotic to leading order to the union of the real and imaginary axes, and the computation of higher order corrections is necessary to determine whether the horizontal asymptotes actually correspond to branches of $\mathcal{Z}$ in the upper half-plane.  
Representing a branch of $\mathcal{Z}$ in the form $z=|z|e^{i\phi}$, $\phi=\phi(|z|)$, we substitute into \eqref{eq:Imh-asymp} and set $\Im\{h(z)\}=0$ to obtain the relation
\begin{equation}
-\frac{4t}{\alpha^2}\sin(2\phi)-\frac{2}{\alpha^2|z|}(\alpha x-t)\sin(\phi)-\frac{\pi}{|z|}\cos(\phi) + O(|z|^{-3/2})=0,\quad |z|\to\infty.
\end{equation}
Setting $|z|=\infty$ gives $\sin(2\phi)=0$, all roots $\phi=\phi_0$ of which are simple.  We may therefore apply the Implicit Function Theorem to continue these roots to finite $|z|$, yielding the asymptotic expansion
\begin{equation}
\phi(|z|)=\phi_0-\frac{(\alpha x-t)}{4t |z|}\frac{\sin(\phi_0)}{\cos(2\phi_0)}-\frac{\pi\alpha^2}{8t|z|}
\frac{\cos(\phi_0)}{\cos(2\phi_0)} + O(|z|^{-3/2}),\quad |z|\to\infty,
\label{eq:phi-expansion}
\end{equation}
where $\phi_0$ is a root of $\sin(2\phi_0)=0$.  Since we are only concerned with the upper half $z$-plane, we need to consider only the angles $\phi_0=0,\tfrac{1}{2}\pi,\pi$.  For $\phi_0=\frac{1}{2}\pi$, the asymptotic formula predicts a slight deformation of the angle in a direction depending on $x$ and $t$, but regardless this branch remains in the upper half-plane due to the dominant constant term in the angle $\phi$.  For $\phi_0=0$ and $\phi_0=\pi$, we have $\sin(\phi_0)=0$, and the surviving term at order $|z|^{-1}$ in the asymptotic expansion \eqref{eq:phi-expansion} indicates that for $t>0$ we have $|\tfrac{1}{2}\pi-\phi(|z|)|>0$ for large $|z|$, showing that these solutions \emph{do not} correspond to branches of $\mathcal{Z}$.  Therefore, we conclude that for $t\ge 0$ the part of $\mathcal{Z}$ in the distant upper half plane consists of exactly one smooth curve tending to infinity asymptotic to a non-horizontal straight line.  (If $t<0$ then the same analysis shows that when $\phi_0=0$ or $\phi_0=\pi$ we have instead $|\tfrac{1}{2}\pi-\phi(|z|)|<0$ for large $|z|$
and there are then three distinct branches of $\mathcal{Z}$ in the distant upper half-plane.)

So, we know that in the open upper half-plane there exist branches of $\mathcal{Z}$ emanating from the real axis only from $z=\xi\in(z_\mathrm{L},0)$ and from $z=z_\mathrm{P}$.  There also exist exactly three branches emanating from $z=q$, and (for $t\ge 0$) there is exactly one unbounded branch in the upper half-plane.  Since $\Im\{h(z)\}$ is harmonic for $\Im\{z\}>0$ except along the branch cut $\mathcal{B}$ across which $\Im\{h(z)\}$ simply changes sign, the Maximum Principle prohibits
the existence of any isolated components of  $\mathcal{Z}$ in the open upper half-plane, with the possible exception of curves enclosing $z=q$.  But the latter are also easily excluded from $\mathcal{Z}$ by local arguments.  Therefore, the part of $\mathcal{Z}$ in the upper half-plane consists exactly of the six branches described above, and these must be matched with each other.  The Maximum Principle also prevents any of the arcs emanating from $z=q$ from coinciding with one another.  Finally, since Proposition~\ref{prop:Yzeroscx} rules out any zeros of $Y(z;q,q^*)$ in the open upper half-plane for $t\ge 0$, there are no saddle points of $\Im\{h(z)\}$ for $\Im\{z\}>0$ and this means that the six arcs must be matched \emph{pairwise}, and no crossings are allowed.  Since none of the three branches emanating from $z=q$ can be matched with each other, we conclude that exactly one of these three branches is connected to each of the three distinct points $z=\xi$, $z=z_\mathrm{P}$, and $z=\infty$.  As the zero level set is Schwarz-symmetric, this concludes the complete description of $\mathcal{Z}$ in the case $t\ge 0$.  

Now suppose that the branch cut $\mathcal{B}$ is taken to coincide with the branch (a smooth arc, actually) of $\mathcal{Z}$ connecting $z=q$ with $z=z_\mathrm{P}$.  It then follows that $\Im\{h(z)\}$ can be continuously extended to $z\in \mathcal{B}\cup \mathcal{B}^*$.  Indeed,  since $\Im\{h(q)\}=0$, from \eqref{eq:2hprime} and the Fundamental Theorem of Calculus, the
mismatch $\Im\{h_+(z)\}-\Im\{h_-(z)\}$ of boundary values taken by $\Im\{h(z)\}$ on $\mathcal{B}$ is
\begin{equation}
\begin{split}
\Im\{h_+(z)\}-\Im\{h_-(z)\}&=-\frac{1}{2}\Im\left\{\mathop{\int_q^z}_{s\in \mathcal{B}} S_+(s)Y(s;q,q^*)\,ds-\mathop{\int_q^z}_{s\in\mathcal{B}} S_-(s)Y(s;q,q^*)\,ds\right\}\\
&=-\Im\left\{\mathop{\int_q^z}_{s\in \mathcal{B}} S_+(s)Y(s;q,q^*)\,ds\right\},\quad z\in \mathcal{B}.
\end{split}
\end{equation}
But this is the same as $2\Im\{h_+(z)\}$, which vanishes because $z\in \mathcal{B}\subset \mathcal{Z}$.
\end{proof}
From now on we assume that for all $t\ge 0$ and $x<x_\mathrm{c}(t)$, the arc $\mathcal{B}$ is chosen exactly as explained above, making $\Im\{h(z)\}$ continuous across $\mathcal{B}\cup \mathcal{B}^*$.  The zero level set $\mathcal{Z}$ clearly divides the open upper half-plane into three disjoint open regions.  We denote by $\mathcal{V}$ (the \emph{valley} of $\Im\{h(z)\}$ in the upper half-plane) the union of the two open regions that are separated by $\mathcal{B}$,  and we denote by $\mathcal{M}$ (the \emph{mountain} of $\Im\{h(z)\}$ in the upper half-plane) the remaining open region.  According to \eqref{eq:Imh-asymp} and the fact that $\mathcal{Z}$ contains all of the points for which $\Im\{h(z)\}=0$, we have the following strict inequalities for $\Im\{z\}>0$:
\begin{equation}
\Im\{h(z)\}>0,\quad z\in\mathcal{M}
\label{eq:Imh-Mountain}
\end{equation}
and 
\begin{equation}
\Im\{h(z)\}<0,\quad z\in\mathcal{V}.
\label{eq:Imh-Valley}
\end{equation}
The regions $\mathcal{M}$ and $\mathcal{V}$ are illustrated for several choices $(x,t)$ in
Figure~\ref{fig:Im-h-plots}.
\begin{figure}[h]
\begin{center}
\includegraphics[width=0.49\linewidth]{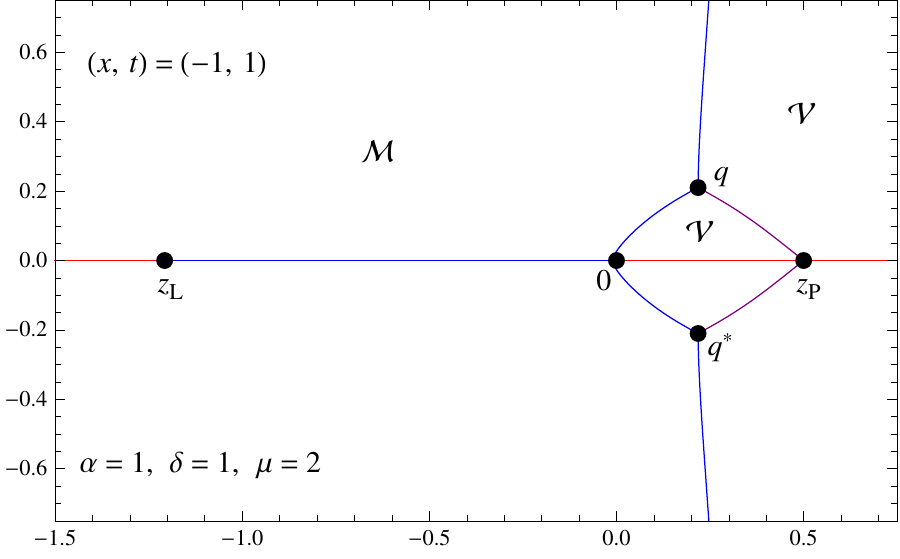}\hspace{0.01\linewidth}%
\includegraphics[width=0.49\linewidth]{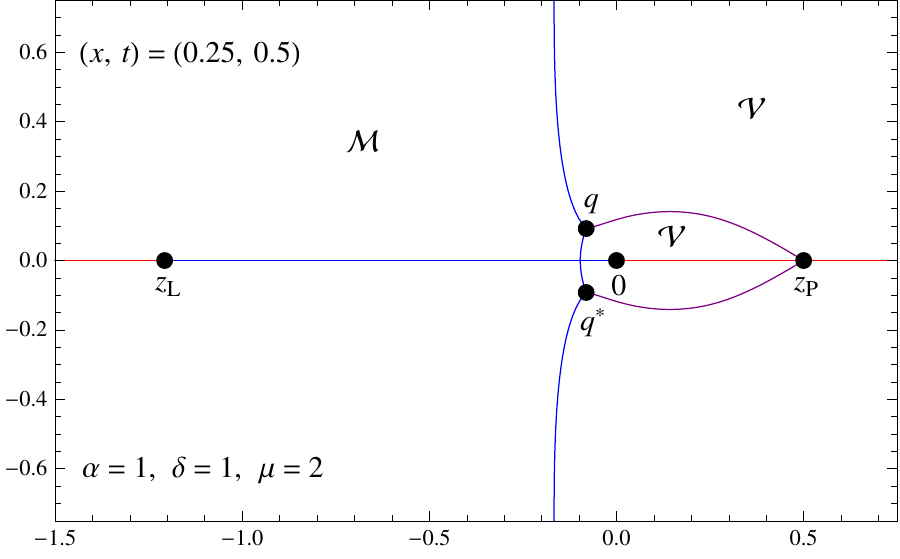}\\\vspace{3 pt}
\includegraphics[width=0.49\linewidth]{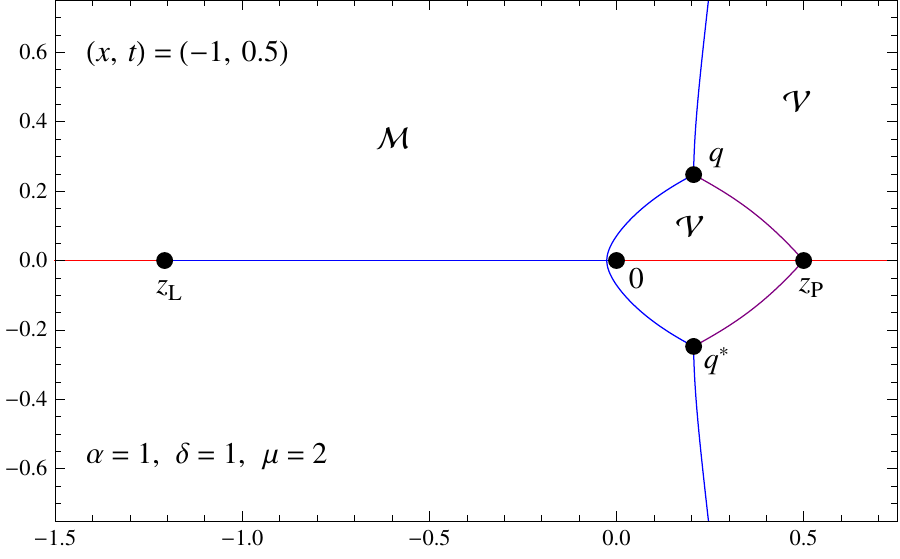}\hspace{0.01\linewidth}%
\includegraphics[width=0.49\linewidth]{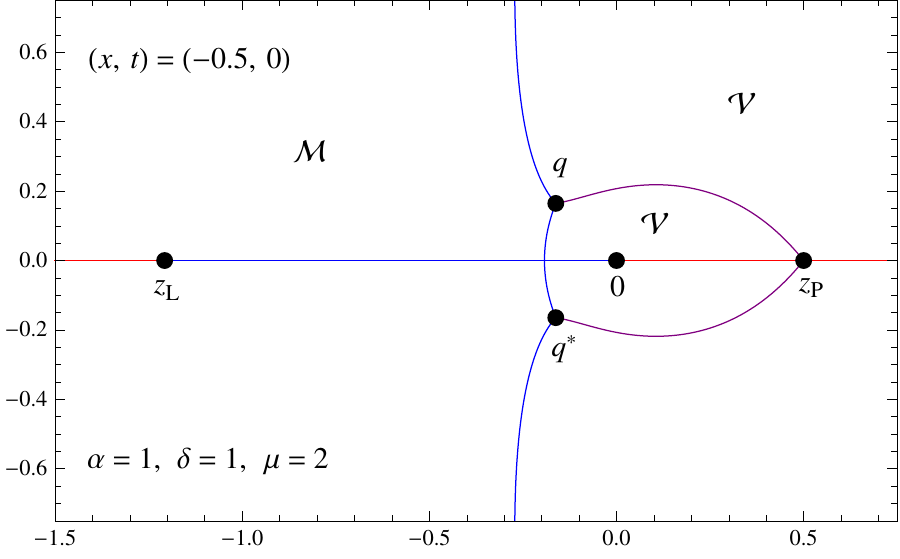}
\end{center}
\caption{Four numerically generated sign charts for $\Im\{h(z)\}$ indicating the regions $\mathcal{M}$ and $\mathcal{V}$, all for parameter values $\alpha=1$, $\delta=1$, and $\mu=2$.  Upper left (UL):  $(x,t)=(-1,1)$.  Upper right (UR):  $(x,t)=(0.25,0.5)$.  Lower left (LL):  $(x,t)=(-1,0.5)$.  Lower right (LR):  $(x,t)=(-0.5,0)$.  The red curves are branch cuts for $h(z)$, the blue curves are zero level curves of $\Im\{h(z)\}$, and the purple curve is $\mathcal{B}\cup\mathcal{B}^*$, which is both a branch cut of $h(z)$ (although $\Im\{h(z)\}$ is continuous there) and a zero level curve of $\Im\{h(z)\}$.}
\label{fig:Im-h-plots}
\end{figure}

%Let the arc of $\mathcal{Z}$ from $z=q$ to $z=\infty$ (respectively to $z=\xi\in (z_\mathrm{L},0)$) with $\Im\{z\}>0$ be denoted $\mathcal{S}_\infty$ (respectively $\mathcal{S}_\mathbb{R}$).  The union $\mathcal{S}_\infty\cup\mathcal{S}_\mathbb{R}$ oriented from $z=\xi$ to $z=\infty$ via $z=q$ forms the boundary of an unbounded open subset of the upper half-plane that we denote $\mathcal{M}$ ($\mathcal{M}$ lies on the left of $\mathcal{S}_\infty\cup\mathcal{S}_\mathbb{R}$).

\subsection{Key properties of $g(z)$ and $h(z)$}
With $g(z)$ and $h(z)$ now completely determined as analytic functions depending smoothly
on parameters $(x,t)$ with $t\ge 0$ and $x<x_\mathrm{c}(t)$, we now explain all of the important consequences of our finished construction.
\begin{proposition}
The function $g$ is analytic for $z\in\mathbb{C}\setminus (\mathcal{B}\cup \mathcal{B}^*)$, satisfies the Schwarz symmetry condition $g(z^*)=g(z)^*$, and also $g(0)=0$ while $g(\infty)$ is well-defined and finite.  The function $h$ is analytic for $z\in\mathbb{C}\setminus ((-\infty,z_\mathrm{L}]\cup [0,+\infty)\cup \mathcal{B}\cup \mathcal{B}^*)$, satisfies the Schwarz symmetry condition $h(z^*)=h(z)^*$, and $\Im\{h(z)\}$ is continuous for $\Im\{z\}\neq 0$.  There is a real constant $\kappa=\kappa(x,t)$ such that 
\begin{equation}
h_+(z)+h_-(z)=\kappa,\quad z\in \mathcal{B}\cup \mathcal{B}^*.
\label{eq:h-equilibrium}
\end{equation}
Also, the strict inequalities \eqref{eq:Imh-Mountain}--\eqref{eq:Imh-Valley} hold for $\Im\{z\}>0$, with the only excluded points in the upper half-plane satisfying $\Im\{h(z)\}=0$.  From \eqref{eq:Imh-asymp} we have the asymptotic condition
\begin{equation}
\Im\{h(z)\}=-\left[\frac{2\pi}{\alpha^2}+\frac{8t}{\alpha^2}\Im\{z\}\right]\Re\{z\} + O(|\Re\{z\}|^{1/2}),\quad\Re\{z\}\to -\infty,\quad 0\le\Im\{z\}=O(1).
\label{eq:Im-h-tail}
\end{equation}
Finally, $h(z)$ is continuous at $z=q$, and
\begin{equation}
h'(z)^2=(z-q)v(z)
\label{eq:hprimeSquared}
\end{equation}
where $v$ is analytic and non-vanishing in a neighborhood of $z=q$.
\end{proposition}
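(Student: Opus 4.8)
The plan is to extract everything from the already-established identity \eqref{eq:2hprime}, namely $2h'(z)=-S(z)Y(z;q,q^*)$, which is valid on $\mathbb{C}\setminus((-\infty,z_\mathrm{L}]\cup[0,+\infty)\cup\mathcal{B}\cup\mathcal{B}^*)$ and in particular on the slit disk $U\setminus\mathcal{B}$ obtained by deleting from a small disk $U$ centered at $z=q$ the portion of the branch cut $\mathcal{B}$ that terminates there. Squaring this identity and using $S(z)^2=(z-q)(z-q^*)$ gives $h'(z)^2=\tfrac{1}{4}(z-q)(z-q^*)Y(z;q,q^*)^2=(z-q)v(z)$ with $v(z):=\tfrac{1}{4}(z-q^*)Y(z;q,q^*)^2$. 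Since $x<x_\mathrm{c}(t)$ we have $\Im\{q\}>0$ by Proposition~\ref{prop:xc-continue}, so $z=q$ lies strictly off the real axis; hence, using the representation \eqref{eq:Ydef} (whose domain of analyticity is $\mathbb{C}\setminus((-\infty,z_\mathrm{L}]\cup[0,+\infty))$), both $z\mapsto z-q^*$ and $z\mapsto Y(z;q,q^*)$ are analytic on $U$ after possibly shrinking $U$, so $v$ is analytic there. To see that $v$ is non-vanishing it suffices to compute $v(q)=\tfrac{1}{4}(q-q^*)Y(q;q,q^*)^2=\tfrac{i}{2}\Im\{q\}\,Y(q;q,q^*)^2$ and to observe $Y(q;q,q^*)\neq 0$: by Proposition~\ref{prop:Yzeroscx}, for $t\ge 0$ every zero of $Y(\cdot;q,q^*)$ in its domain of analyticity is real (the one simple zero lying in $(z_\mathrm{L},0)$, the others at $z=\infty$ when $t=0$), while $q\notin\mathbb{R}$. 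This establishes \eqref{eq:hprimeSquared}.

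For the continuity of $h$ at $z=q$ I would use \eqref{eq:hprimeSquared} directly to pin down the local structure of $h$. Because $U\setminus\mathcal{B}$ is simply connected and $v$ is analytic and non-vanishing on $U$, there is an analytic branch $w$ of $v^{1/2}$ on $U$ and an analytic branch of $(z-q)^{1/2}$ on $U\setminus\mathcal{B}$; since $h'$ is analytic on the connected set $U\setminus\mathcal{B}$, we may write $h'(z)=\pm(z-q)^{1/2}w(z)$ with a single choice of sign. Expanding $w(z)=\sum_{n\ge 0}c_n(z-q)^n$ with $c_0=w(q)\neq 0$ and integrating term by term produces an explicit antiderivative $H(z)=\pm\sum_{n\ge 0}\tfrac{2c_n}{2n+3}(z-q)^{n+3/2}$, which converges on a neighborhood of $q$ (shrinking $U$ if necessary), is continuous at $q$ with $H(q)=0$, and satisfies $H'=h'$ on $U\setminus\mathcal{B}$; hence $h-H$ is analytic on $U\setminus\mathcal{B}$ with vanishing derivative and so is constant there. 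Therefore $h$ extends continuously to $z=q$ with $h(z)-h(q)=\pm\tfrac{2c_0}{3}(z-q)^{3/2}(1+O(z-q))$; in particular the two one-sided limits of $h$ along $\mathcal{B}$ at $q$ coincide, consistent with \eqref{eq:h-equilibrium} and the earlier fact that $\Im\{h(q)\}=0$.

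The only non-routine ingredient is $Y(q;q,q^*)\neq 0$, which is exactly Proposition~\ref{prop:Yzeroscx} restricted to $t\ge 0$; with that in hand the argument is entirely elementary and I anticipate no real obstacle. The one point requiring care is bookkeeping rather than analysis: since $q$ is an endpoint of the branch cut $\mathcal{B}$ it lies on the boundary of the domain of analyticity of $h$, so ``$h$ continuous at $z=q$'' must be interpreted as the existence of $\lim_{z\to q}h(z)$ within the slit neighborhood $U\setminus\mathcal{B}$ (equivalently, as the agreement of the two one-sided limits along $\mathcal{B}$ together with the interior limit), and one must confirm --- as the $(z-q)^{3/2}$ expansion above does --- that this limit is genuinely single-valued in that sense.
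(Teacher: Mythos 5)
Your proof of \eqref{eq:hprimeSquared} is correct and is exactly the approach the paper takes: square \eqref{eq:2hprime}, use $S(z)^2=(z-q)(z-q^*)$ to factor out $(z-q)$, and invoke Proposition~\ref{prop:Yzeroscx} to conclude $Y(q;q,q^*)\neq 0$ (for $t\ge 0$ the only finite zero of $Y(\cdot;q,q^*)$ is real, while $q$ is non-real for $x<x_\mathrm{c}(t)$). Your local $(z-q)^{3/2}$ expansion argument for the continuity of $h$ at $z=q$ is also sound, and indeed gives a self-contained justification of a fact the paper treats as already available from the construction.

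However, there is a genuine gap: you do not prove \eqref{eq:h-equilibrium}, which is one of the two items the paper explicitly singles out as \emph{the} things remaining to be shown in this proposition (the rest having been established in the construction). You only mention \eqref{eq:h-equilibrium} at the end as being ``consistent'' with your continuity result; consistency is not a proof. The needed argument is short and again comes straight out of \eqref{eq:2hprime}: since $S(z)$ changes sign across $\mathcal{B}$ while $Y(z;q,q^*)$ is analytic across $\mathcal{B}$ (the arc $\mathcal{B}$ lies strictly inside the slit domain $\mathbb{C}\setminus((-\infty,z_\mathrm{L}]\cup[0,+\infty))$), the boundary values of $h'$ satisfy $h_+'(z)+h_-'(z)=0$ on $\mathcal{B}$; hence $h_+(z)+h_-(z)$ is constant along $\mathcal{B}$. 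That constant is real because $\mathcal{B}$ is by construction contained in the zero level set $\mathcal{Z}$ of $\Im\{h\}$, so both $\Im\{h_+\}$ and $\Im\{h_-\}$ vanish on $\mathcal{B}$. Schwarz symmetry $h(z^*)=h(z)^*$ then transports the identity to $\mathcal{B}^*$ with the same real constant $\kappa$. Without this step your proposal establishes only about half of what the proposition's proof is actually charged with proving.
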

\begin{proof}
It only remains to prove \eqref{eq:h-equilibrium} and \eqref{eq:hprimeSquared}.  But, these both follow from the representation \eqref{eq:2hprime}.  Indeed, 
%using the definition \eqref{eq:hdef} of $h$ in terms of $g$ and the fact that $g'(z)=p(z)$ as given by \eqref{eq:gprime1} it follows that $h'(z)$ is given explicitly by
%\begin{equation}
%h'(z)=\frac{1}{2}\Phi'(z)-\frac{S(z)}{2\pi i}\left[\int_\mathcal{B}\frac{\Phi'(s)\,ds}{S_+(s)(s-z)}-(*)^*\right]-\frac{8tS(z)}{\alpha^2},\quad z\in\mathbb{C}\setminus (\mathcal{B}\cup\mathcal{B}^*).
%\label{eq:hprime}
%\end{equation}
%Taking 
taking into account that $S(z)$ changes sign across $\mathcal{B}$ while $Y(z;q,q^*)$ is analytic
in a neighborhood of $\mathcal{B}$, 
%an application of the Plemelj formula shows that 
we easily see that the sum of boundary values $h_+'(z)+h_-'(z)$ vanishes identically for $z\in\mathcal{B}$.  Therefore, $h_+(z)+h_-(z)$ is constant along $\mathcal{B}$, and since $\mathcal{B}$ is part of the zero level set $\mathcal{Z}$ of $\Im\{h(z)\}$, it is obvious that the constant value taken by $h_+(z)+h_-(z)$ for $z\in\mathcal{B}$ is real.  It then follows by Schwarz symmetry of $h(z)$ that the same identity holds for $z\in\mathcal{B}^*$.  This proves \eqref{eq:h-equilibrium}.  The formula \eqref{eq:hprimeSquared} also follows directly from \eqref{eq:2hprime}
using Proposition~\ref{prop:Yzeroscx}.
\end{proof}

With the parametric dependence of $h$ on $(x,t)$ completely determined, we can consider
the partial derivatives of $h$ with respect to $x$ and $t$.  Recalling the definition \eqref{eq:hdef}
of $h$ in terms of $g$, $\theta$, and $\Phi$, and the fact that $\Phi(z)$ is a function independent of $(x,t)$, we immediately deduce that these partial derivatives are analytic in a larger domain than
is $h$ itself, namely for $z\in\mathbb{C}\setminus (\mathcal{B}\cup\mathcal{B}^*)$.  We claim that
the partial derivatives are given by the following simple explicit formulae:
\begin{equation}
\frac{\partial h}{\partial x}(z)=\frac{1}{2}\frac{\partial\kappa}{\partial x}-\frac{2}{\alpha}S(z),\quad
z\in\mathbb{C}\setminus (\mathcal{B}\cup\mathcal{B}^*),
\label{eq:hx-def}
\end{equation}
and
\begin{equation}
\frac{\partial h}{\partial t}(z)=\frac{1}{2}\frac{\partial\kappa}{\partial t} -\frac{4}{\alpha^2}\left(z+\Re\{q\}-\frac{1}{2}\right)S(z),\quad z\in\mathbb{C}\setminus (\mathcal{B}\cup\mathcal{B}^*).
\label{eq:ht-def}
\end{equation}
Indeed, these formulae exhibit the correct domain of analyticity and capture the correct boundary conditions on $\mathcal{B}\cup\mathcal{B}^*$
following from \eqref{eq:h-equilibrium}:
\begin{equation}
\frac{\partial h_+}{\partial x}(z)+\frac{\partial h_-}{\partial x}(z)=\frac{\partial\kappa}{\partial x}
\quad\text{and}\quad
\frac{\partial h_+}{\partial t}(z)+\frac{\partial h_-}{\partial t}(z)=\frac{\partial\kappa}{\partial t},
\quad z\in\mathcal{B}\cup\mathcal{B}^*.
\end{equation}
Also, since the partial derivatives of $g$ are bounded for $z=\infty$, the principal parts of the partial derivatives of $h$ at $z=\infty$ must agree with those of $\theta$, and by expansion of $S(z)$ for large $z$ it is easy to confirm that the above formulae indeed satisfy
\begin{equation}
\frac{\partial h}{\partial x}(z)=\frac{\partial \theta}{\partial x}(z)+O(1)=-\frac{2}{\alpha}z+O(1)\quad\text{and}\quad
\frac{\partial h}{\partial t}(z)=\frac{\partial\theta}{\partial t}(z)+O(1)=-\frac{4}{\alpha^2}z^2+\frac{2}{\alpha^2}z+O(1),\quad z\to\infty.
\end{equation}
Since $g(0)=0$ for all $(x,t)$ it follows from \eqref{eq:hdef} that the partial derivatives of $h$ must agree with those of $\theta$ exactly at $z=0$.  Imposing these conditions on \eqref{eq:hx-def}
and \eqref{eq:ht-def} proves that the partial derivatives of $\kappa$ can be expressed explicitly in terms of $q(x,t)$.  Indeed, since $S(0)=-|q|$, 
\begin{equation}
0=\frac{\partial h}{\partial x}(0)-\frac{\partial\theta}{\partial x}(0)=\frac{1}{2}\frac{\partial\kappa}{\partial x}+\frac{2}{\alpha}|q|-\frac{1}{2\alpha}\quad\implies\quad \frac{\partial\kappa}{\partial x}=\frac{1}{\alpha}\left(1-4|q|\right),
\label{eq:kappa-x}
\end{equation}
and
\begin{equation}
0=\frac{\partial h}{\partial t}(0)-\frac{\partial\theta}{\partial t}(0)=\frac{1}{2}\frac{\partial\kappa}{\partial t}+\frac{4}{\alpha^2}\left(\Re\{q\}-\frac{1}{2}\right)|q|+\frac{1}{4\alpha^2}\quad\implies\quad
\frac{\partial\kappa}{\partial t}=\frac{4}{\alpha^2}\left(1-2\Re\{q\}\right)|q|-\frac{1}{2\alpha^2}.
\end{equation}

Similar calculations allow us to establish a direct expression of $\kappa=\kappa(x,t)$ itself.  
Indeed, since $g(z)$ is analytic for $z\in\mathbb{C}\setminus (\mathcal{B}\cup\mathcal{B}^*)$,
is well-defined for $z=\infty$, and satisfies $g_+(z)+g_-(z)=2\theta(z)+\Phi(z)-\kappa$ for $z\in\mathcal{B}\cup\mathcal{B}^*$, it follows that it must be given by
\begin{equation}
\begin{split}
g(z)&=\frac{S(z)}{2\pi i}\int_{\mathcal{B}\cup\mathcal{B}^*}\frac{2\theta(s)+\Phi(s)-\kappa}{S_+(s)(s-z)}\,ds\\ & = \theta(z)-\frac{1}{2}\kappa+\frac{2S(z)}{\alpha^2}(\alpha x-t) + \frac{4tS(z)}{\alpha^2}(z+\Re\{q(x,t)\})+\frac{S(z)}{2\pi i}\int_{\mathcal{B}\cup\mathcal{B}^*}\frac{\Phi(s)\,ds}{S_+(s)(s-z)},\quad
z\in\mathbb{C}\setminus (\mathcal{B}\cup\mathcal{B}^*).
\end{split}
\end{equation}
Enforcing the condition that $g(0)=0$ and using the fact that $S(0)=-|q(x,t)|$ then gives a formula for $\kappa(x,t)$:
\begin{equation}
\kappa(x,t)=\frac{1}{2\alpha^2}(2\alpha x-t) -\frac{4}{\alpha^2}|q(x,t)|(\alpha x-t)-\frac{8t}{\alpha^2}|q(x,t)|\Re\{q(x,t)\}-\frac{|q(x,t)|}{\pi i}\int_{\mathcal{B}\cup\mathcal{B}^*}\frac{\Phi(s)\,ds}{sS_+(s)}.
\end{equation}
In particular, for $t=0$ we have
\begin{equation}
\kappa(x,0)=\frac{x}{\alpha}\left(1-4|q(x,0)|\right)-\frac{|q(x,0)|}{\pi i}\int_{\mathcal{B}\cup\mathcal{B}^*}\frac{\Phi(s)\,ds}{sS_+(s)},\quad x<x_\mathrm{c},
\end{equation}
where in the last integral we use $q=q(x,0)$ and the corresponding arc $\mathcal{B}$.  This formula can be simplified even further by taking (carefully) the limit $x\uparrow x_\mathrm{c}$ in which $q(x,0)\to z_\mathrm{c}<0$:
\begin{equation}
\lim_{x\uparrow x_\mathrm{c}}\kappa(x,0)=\frac{x_\mathrm{c}}{\alpha}(1+4z_\mathrm{c})+2\Phi(0)-
\Phi(z_\mathrm{c})+\frac{2z_\mathrm{c}}{\pi}\int_0^{z_\mathrm{P}}\frac{\Im\{\Phi_+(s)\}\,ds}{s(s-z_\mathrm{c})}.
\label{eq:KappaCrit}
\end{equation}
Note that $\Phi(0)$ is a well-defined real number, and the fact that the integral is convergent at $s=0$ follows from \eqref{eq:ImPhiPlus} which shows that $\Im\{\Phi_+(z)\}$ vanishes linearly as $z\downarrow 0$.

\section{Steepest Descent Analysis of the Riemann-Hilbert Problem}
\label{sec:RHP-deform}
%\subsection{Complex/Elliptic Configurations}
\subsection{Opening of Lenses and Introduction of $g(z)$}
We begin by making an explicit modification of the matrix $\mathbf{N}(z)$ by ``opening lenses'' according to the diagram in Figure~\ref{fig:EllipticContours}. 
\begin{figure}[h]
\begin{center}
\includegraphics{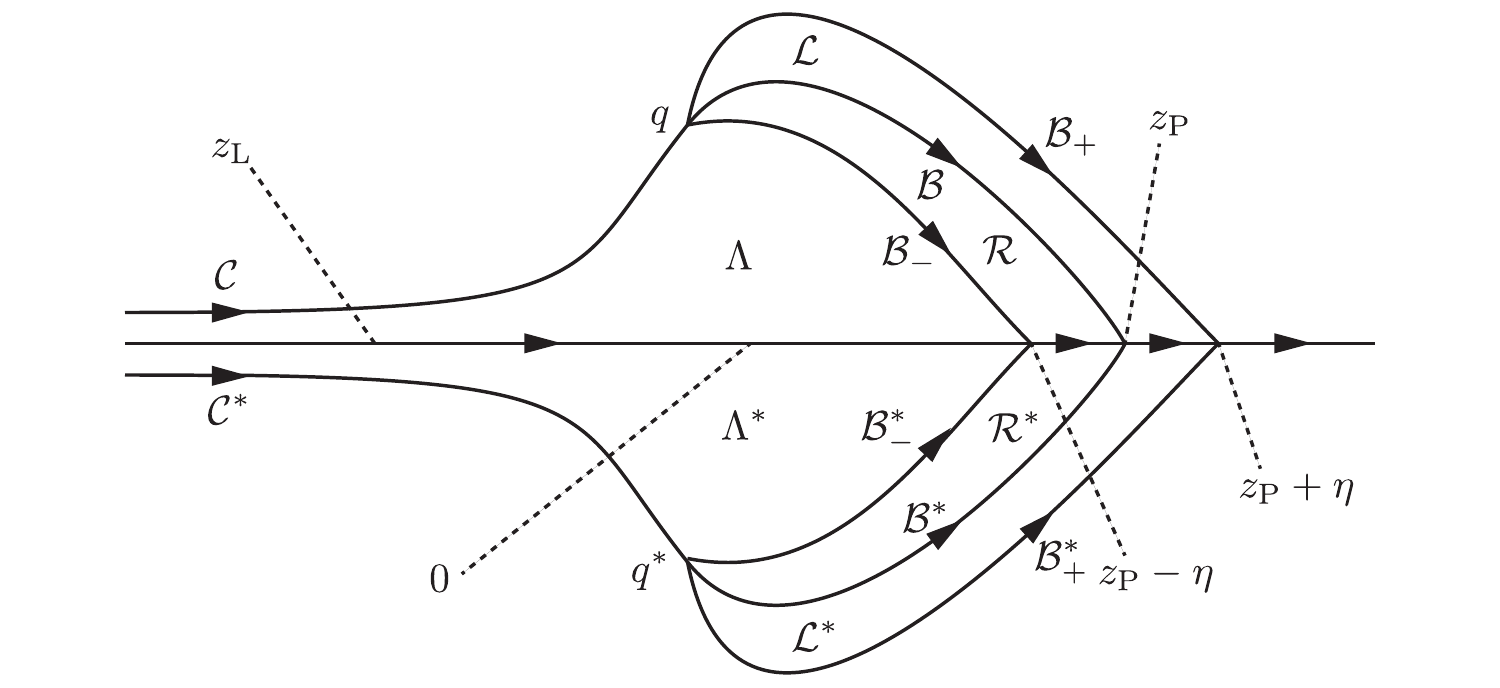}
\end{center}
\caption{The domains $\Lambda$, $\mathcal{L}$, and $\mathcal{R}$, along with their mirror images in the lower half-plane.  Note that $\partial\mathcal{L}\cap\partial\mathcal{R}$ coincides with the contour arc $\mathcal{B}$.  The solid curves comprise a contour denoted $\Sigma$. }
\label{fig:EllipticContours}
\end{figure}
The most important features that we require of the contour in this figure are:
\begin{itemize}
\item The contour arc $\mathcal{C}$ lies entirely within the mountainous region $\mathcal{M}$ for $\Im\{h(z)\}$, and is asymptotically horizontal as $\Re\{z\}\to -\infty$.
\item The contour arc $\mathcal{B}_+$ lies entirely within the unbounded component of the valley region $\mathcal{V}$ for $\Im\{h(z)\}$.
\item The contour arc $\mathcal{B}_-$ lies entirely within the bounded component of the valley region $\mathcal{V}$ for $\Im\{h(z)\}$.
\item The parameter $\eta$ is sufficiently small that $z_\mathrm{P}-\eta>z^+$.
\end{itemize}
The modification is defined as follows:
\begin{equation}
\mathbf{O}(z):=\mathbf{N}(z)e^{i\theta(z;x,t)\sigma_3/\epsilon}\begin{bmatrix}
1 &  %{\color{blue}
-%}
\tilde{E}(z)e^{i\Phi(z)/\epsilon}\\ 0 & 1\end{bmatrix}e^{-i\theta(z;x,t)\sigma_3/\epsilon},\quad
z\in\Lambda,
\end{equation}
\begin{equation}
\mathbf{O}(z):=\mathbf{N}(z)e^{i\theta(z;x,t)\sigma_3/\epsilon}\begin{bmatrix}
1 & 0 \\ %{\color{blue}
-%}
\tilde{E}(z)^{-1}e^{-i\Phi(z)/\epsilon} & 1\end{bmatrix}\tilde{E}(z)^{\sigma_3/2}e^{-i\theta(z;x,t)\sigma_3/\epsilon},\quad z\in \mathcal{L},
\end{equation}
\begin{equation}
\mathbf{O}(z):=\mathbf{N}(z)e^{i\theta(z;x,t)\sigma_3/\epsilon}\begin{bmatrix}
0 & %{\color{blue}
-%}
\tilde{E}(z)e^{i\Phi(z)/\epsilon}\\%{\color{red}-}
\tilde{E}(z)^{-1}e^{-i\Phi(z)/\epsilon} & 1\end{bmatrix}
\tilde{E}(z)^{\sigma_3/2}
e^{-i\theta(z;x,t)\sigma_3/\epsilon},\quad z\in \mathcal{R},
\end{equation}
\begin{equation}
\mathbf{O}(z):=\sigma_1\mathbf{O}(z^*)^*\sigma_1,\quad z\in \Lambda^*\cup \mathcal{L}^*\cup \mathcal{R}^*,
\label{eq:cxOSchwarz}
\end{equation}
and elsewhere we set $\mathbf{O}(z):=\mathbf{N}(z)$.  Recalling that $\tilde{E}(z)$ is an analytic nonvanishing function for $z$ in the open upper half-plane, we see that this defines the matrix $\mathbf{O}(z)$ as an analytic function of $z$ in the complement of the system of contours illustrated in Figure~\ref{fig:EllipticContours}.  Then, we introduce the function $g(z)$ defined in \S\ref{sec:gfunction} by setting
\begin{equation}
\mathbf{P}(z):=\mathbf{O}(z)e^{ig(z)\sigma_3/\epsilon}.
\end{equation}
Since $g(z)$ is analytic for $z\in \mathbb{C}\setminus{\mathcal{B}\cup \mathcal{B}^*}$, and since $\mathcal{B}=\partial\mathcal{L}\cap\partial\mathcal{R}$, $\mathbf{P}(z)$ has the same domain of analyticity as does $\mathbf{O}(z)$.  

The explicit transformation of $\mathbf{N}(z)$ into $\mathbf{P}(z)$ differs from a more standard
choice that is based upon the factorization of the jump matrix for $\mathbf{N}(z)$ (say, for $z<0$)
given by
\begin{equation}
\begin{bmatrix}1 & -s(z)e^{2i\theta(z;x,t)/\epsilon}\\s(z)^*e^{-2i\theta(z;x,t)/\epsilon} & 1-|s(z)|^2
\end{bmatrix}=\begin{bmatrix}1 & 0\\ s(z)^*e^{-2i\theta(z;x,t)/\epsilon}& 1\end{bmatrix}\begin{bmatrix}1 &-s(z)e^{2i\theta(z;x,t)/\epsilon} \\ 0& 1\end{bmatrix}.
\label{eq:two-factor}
\end{equation}
This more standard choice was used, for example, in the analysis of Tovbis, Venakides, and Zhou of the semiclassical limit for the focusing nonlinear Schr\"odinger equation \cite{TovbisVZ04} with special initial data for which there were no discrete eigenvalues (poles of $\mathbf{N}(z)$) and for which the reflection coefficient analogous to $s(z)$ was given in terms of Euler gamma functions.
It has the advantage that after opening lenses there remains no jump across the real axis for $z<z_\mathrm{P}$.  However, opening lenses based on the factorization \eqref{eq:two-factor}
actually sets into motion a chain of unfortunate events leading to more difficult analysis and even some ambiguity about the scope of the resulting asymptotic formulae.  The main problem with using the factorization \eqref{eq:two-factor} to open lenses is the presence in the upper half-plane near $z=z_\mathrm{P}$ of the \emph{phantom poles} of $s(z)$.  These poles both constrain the possible location of the contours near $z=z_\mathrm{P}$ (to avoid introducing new singularities into the Riemann-Hilbert problem for $\mathbf{P}(z)$ jump matrices involving $s(z)$ must not be deformed through the phantom poles) and also lead to nonuniformity in the Stirling asymptotics
of $s(z)$ near $z=z_\mathrm{P}$ that ruins the approximation $E(z)\approx 1$.  The authors of \cite{TovbisVZ04} dealt with the latter issue by installing an unusual local parametrix near the point analogous to $z=z_\mathrm{P}$ in their problem; the parametrix is difficult to analyze because it is not explicit (its existence relies on abstract Fredholm theory) and it depends on $\epsilon$ in a way that must be carefully understood.  By contrast, the reader will see that the approach we use here (which is based on extending not $s(z)$ but rather the product $(1+e^{2f(z)/\epsilon})s(z)$ --- related to the function $\tilde{E}(z)$ by \eqref{eq:tildeEdef} --- into the upper half-plane) \emph{removes all difficulties with the phantom poles near $z=z_\mathrm{P}$ and moreover avoids completely the need for any sort of local parametrix near this point}.  In fact, $\tilde{E}(z)$ is analytic throughout the upper half-plane, and it will turn out that in our approach the contribution to the error from the neighborhood of $z=z_\mathrm{P}$ is exponentially small.

We recall that $g(0)=0$ and also that $\tilde{E}_+(0)=0$ (the latter following from the fact that $s(0)=0$ while $\Phi(0)$ is well-defined and finite) and hence it follows that $\mathbf{P}(z)$ satisfies the normalization condition
\begin{equation}
\mathbf{P}(0)=\mathbb{I}
\end{equation}
(the value is the same whether the limit is taken from the upper or lower half-plane).  Since $g(\infty)$ is well-defined, $\mathbf{P}(z)$ tends to a diagonal limit as $z\to\infty$ for each fixed $\epsilon$.  To describe the jump discontinuities of $\mathbf{P}(z)$ across the arcs of the jump contour pictured in Figure~\ref{fig:EllipticContours}, we suppose that all contour arcs are oriented left-to-right (that is, from $-\infty$ toward $q$, $q^*$, or $z_\mathrm{P}-\eta$, 
from $q$ or $q^*$ toward $z_\mathrm{P}-\eta$, $z_\mathrm{P}$, or $z_\mathrm{P}+\eta$, 
from $z_\mathrm{P}-\eta$ toward $z_\mathrm{P}$, from $z_\mathrm{P}$ toward $z_\mathrm{P}+\eta$, and finally from $z_\mathrm{P}+\eta$ toward $+\infty$).  

Then 
%\textcolor{red}{(with the usual meaning of the subscripts $+$ and $-$)} 
the jump conditions satisfied by $\mathbf{P}(z)$ along the real $z$-axis are as follows:
\begin{equation}
\mathbf{P}_+(z)=\mathbf{P}_-(z)\begin{bmatrix}1 & e^{2f(z)/\epsilon}
s(z)e^{2i(\theta(z;x,t)-g(z))/\epsilon}
%\color{red}{E_+(z)e^{2ih_+(z)/\epsilon}}
\\-e^{2f(z)/\epsilon}s(z)^*e^{-2i(\theta(z;x,t)-g(z))/\epsilon}
%\color{red}{E_+(z)^*e^{-2ih_+(z)^*/\epsilon} }
&
1-e^{4f(z)/\epsilon}|s(z)|^2
%\color{red}{|E_+(z)|^2e^{-4\Im\{h_+(z)\}/\epsilon}}
\end{bmatrix},\quad z<0,
\label{eq:cxPjumpznegative}
\end{equation}
%\begin{equation}
%\mathbf{P}_+(z)=\mathbf{P}_-(z)\begin{bmatrix}1 & e^{4\pi (z-z_\mathrm{P})/(\alpha\epsilon)}
%E_+(z)e^{2ih(z)/\epsilon}\\-e^{4\pi (z-z_\mathrm{P})/(\alpha\epsilon)}E_+(z)^*e^{-2ih(z)/\epsilon} &
%1-e^{8\pi(z-z_\mathrm{P})/(\alpha\epsilon)}|E_+(z)|^2\end{bmatrix},\quad z_\mathrm{L}<z<0
%\end{equation}
%(actually, this is the same as the jump condition for $z<z_\mathrm{L}$ except that for $z_\mathrm{L}<z<0$, $\Phi(z)$ is well-defined and real, and hence so is $h(z)$),
\begin{multline}
\mathbf{P}_+(z)=i^{\sigma_3}\mathbf{P}_-(z)i^{-\sigma_3}\begin{bmatrix}
1 & e^{2f(z)/\epsilon}s(z)e^{2i(\theta(z;x,t)-g(z))/\epsilon}
%E_+(z)e^{2ih_+(z)/\epsilon}
\\
e^{2f(z)/\epsilon}s(z)^*e^{-2i(\theta(z;x,t)-g(z))/\epsilon}
%E_+(z)^*e^{-2ih_+(z)^*/\epsilon} 
&
1+e^{4f(z)/\epsilon}|s(z)|^2
%|E_+(z)|^2e^{-4\Im\{h_+(z)\}/\epsilon}
\end{bmatrix},\\
0<z<z_\mathrm{P}-\eta,
\label{eq:cxPjumpzpositive1}
\end{multline}
\begin{equation}
\mathbf{P}_+(z)=i^{\sigma_3}\mathbf{P}_-(z)i^{-\sigma_3}\begin{bmatrix}
A_{11}(z)
& A_{12}(z)e^{2i(\theta(z;x,t)-g(z))/\epsilon}
\\ A_{21}(z)e^{-2i(\theta(z;x,t)-g(z))/\epsilon} &A_{22}(z)\end{bmatrix},\quad
z_\mathrm{P}-\eta<z<z_\mathrm{P},
\label{eq:cxPjumpzpositive2}
\end{equation}
where
\begin{equation}
A_{11}(z):=\frac{1+|s(z)|^2}{(1+e^{2f(z)/\epsilon})|s(z)|}e^{\Im\{\Phi_+(z)\}/\epsilon},
\end{equation}
\begin{equation}
A_{12}(z)=A_{21}(z)^*:=\frac{%{\color{red}|s(z)|^2e^{2f(z)/\epsilon}-1}{\color{blue}
1-|s(z)|^2e^{2f(z)/\epsilon}%}
}{(1+e^{2f(z)/\epsilon})|s(z)|}e^{i\Re\{\Phi_+(z)\}/\epsilon},
\end{equation}
and where $A_{11}(z)A_{22}(z)-A_{12}(z)A_{21}(z)=1$,
\begin{equation}
\mathbf{P}_+(z)=i^{\sigma_3}\mathbf{P}_-(z)i^{-\sigma_3}\begin{bmatrix}
B_{11}(z) & B_{12}(z)e^{2i(\theta(z;x,t)-g(z))/\epsilon}\\
B_{21}(z)e^{-2i(\theta(z;x,t)-g(z))/\epsilon} & B_{22}(z)
\end{bmatrix},\quad
z_\mathrm{P}<z<z_\mathrm{P}+\eta,
\label{eq:cxPjumpzpositive3}
\end{equation}
where
\begin{equation}
B_{22}(z):=\frac{1+|s(z)|^2}{(1+e^{2f(z)/\epsilon})|s(z)|}e^{-\Im\{\Phi_+(z)\}/\epsilon},
\end{equation}
\begin{equation}
B_{12}(z)=B_{21}(z)^*:=\frac{%{\color{red}1-|s(z)|^2e^{2f(z)/\epsilon}}{\color{blue}
|s(z)|^2e^{2f(z)/\epsilon}-1%}
}{(1+e^{2f(z)/\epsilon})|s(z)|}e^{i\Re\{\Phi_+(z)\}/\epsilon},
\end{equation}
and where $B_{11}(z)B_{22}(z)-B_{12}(z)B_{21}(z)=1$, and finally
\begin{equation}
\mathbf{P}_+(z)=i^{\sigma_3}\mathbf{P}_-(z)i^{-\sigma_3}\begin{bmatrix}1 & -s(z)e^{2i(\theta(z;x,t)-g(z))/\epsilon}\\
-s(z)^*e^{-2i(\theta(z;x,t)-g(z))/\epsilon} & 1+|s(z)|^2\end{bmatrix},\quad z>z_\mathrm{P}+\eta.
\label{eq:cxPjumpzpositive4}
\end{equation}

These jump conditions may appear complicated, and this is the price to be paid for eschewing the standard two-factor factorization \eqref{eq:two-factor} in favor of a more complicated one.  However, we will now prove that all of the jump conditions for $\mathbf{P}(z)$ along $\mathbb{R}$ amount to exponentially small jump discontinuities, given the condition $\eta<z_\mathrm{P}-z^+$.  Indeed, using \eqref{eq:sboundznegative} and \eqref{eq:sboundzpositive} and recalling that $\theta(z;x,t)$ and $g(z)$ are real for $z\in\mathbb{R}$, we easily see that \eqref{eq:cxPjumpznegative} takes the form
\begin{equation}
\mathbf{P}_+(z)=\mathbf{P}_-(z)\left(\mathbb{I} + O(e^{2f(z)/\epsilon})\right),\quad z<0,
\label{eq:cxPjumpznegativerewrite}
\end{equation}
that \eqref{eq:cxPjumpzpositive1} takes the form
\begin{equation}
\mathbf{P}_+(z)= i^{\sigma_3}\mathbf{P}_-(z)i^{-\sigma_3}\left(\mathbb{I}+O(e^{f(z)/\epsilon})\right),\quad 0<z<z_\mathrm{P}-\eta,
\label{eq:cxPjumpzpositive1rewrite}
\end{equation}
and that \eqref{eq:cxPjumpzpositive4} takes the form
\begin{equation}
\mathbf{P}_+(z)=i^{\sigma_3}\mathbf{P}_-(z)i^{-\sigma_3}\left(\mathbb{I}+O(e^{-f(z)/\epsilon})\right),\quad z>z_\mathrm{P}+\eta.
\label{eq:cxPjumpzpositive4rewrite}
\end{equation}
In each case, the jump matrix is  a uniformly exponentially small (in $\epsilon$) perturbation of the identity that also decays exponentially as $z\to \pm\infty$.  Also, due to the estimate \eqref{eq:sboundorigin}, we have that 
\begin{equation}
\mathbf{P}_+(z)=\mathbf{P}_-(z)\left(\mathbb{I}+O(|z|^{1/2}\epsilon^{-1}e^{-M/\epsilon})\right),\quad -\eta<z<0,
\end{equation}
and
\begin{equation}
\mathbf{P}_+(z)=i^{\sigma_3}\mathbf{P}_-(z)i^{-\sigma_3}\left(\mathbb{I}+O(|z|^{1/2}\epsilon^{-1}e^{-M/\epsilon})\right),\quad 0<z<\eta
\end{equation}
both hold uniformly for some $\eta>0$ and $M>0$.  
Next, since according to \eqref{eq:ImPhiPlus} we have $\Im\{\Phi_+(z)\}=-|f(z)|$ for $z>z^+$, we see that for $z_\mathrm{P}-\eta<z<z_\mathrm{P}$ we have
\begin{equation}
\begin{split}
A_{11}(z) &= \frac{1+|s(z)|^2}{(1+e^{2f(z)/\epsilon})|s(z)|}e^{f(z)/\epsilon}\\&=1+\frac{(|s(z)|e^{f(z)/\epsilon}-1)(|s(z)|e^{f(z)/\epsilon}-e^{2f(z)/\epsilon})}{|s(z)|e^{f(z)/\epsilon}(1+e^{2f(z)/\epsilon})}\\
&=1+\text{exponentially small in $\epsilon$},\quad\text{uniformly for $z_\mathrm{P}-\eta<z<z_\mathrm{P}$},
\end{split}
\end{equation}
where we have used \eqref{eq:modssquaredlargez} and the fact that $e^{2f(z)/\epsilon}\le 1$ for $z\le z_\mathrm{P}$.  Similarly, from \eqref{eq:modssquaredlargez},
\begin{equation}
\begin{split}
|A_{12}(z)|=|A_{21}(z)|&=\frac{|%{\color{red}|s(z)|^2e^{2f(z)/\epsilon}-1}{\color{blue}
1-|s(z)|^2e^{2f(z)/\epsilon}%}
|}{2|s(z)|e^{f(z)/\epsilon}\cosh(f(z)/\epsilon)}\\
&\le \frac{|%{\color{red}|s(z)|^2e^{2f(z)/\epsilon}-1}{\color{blue}
1-|s(z)|^2e^{2f(z)/\epsilon}%}
|}{2|s(z)|e^{f(z)/\epsilon}}\\ &=\text{exponentially small in $\epsilon$},\quad \text{uniformly for $z_\mathrm{P}-\eta<z<z_\mathrm{P}$}.
\end{split}
\end{equation}
Since $f(z)$ changes sign at $z_\mathrm{P}$, we have $e^{-\Im\{\Phi_+(z)\}/\epsilon}=e^{f(z)/\epsilon}$ for $z>z_\mathrm{P}$, and therefore
\begin{equation}
\begin{split}
B_{22}(z)&=\frac{1+|s(z)|^2}{(1+e^{2f(z)/\epsilon})|s(z)|}e^{f(z)/\epsilon}\\
&=1+\frac{(|s(z)e^{f(z)/\epsilon}-1)(|s(z)e^{f(z)/\epsilon}e^{-2f(z)/\epsilon}-1)}{|s(z)|e^{f(z)/\epsilon}(1+e^{-2f(z)/\epsilon})}\\
&=1+\text{exponentially small in $\epsilon$},\quad\text{uniformly for $z_\mathrm{P}<z<z_\mathrm{P}+\eta$},
\end{split}
\end{equation}
where we have again used \eqref{eq:modssquaredlargez} and the fact that $e^{-2f(z)/\epsilon}\le 1$ for $z\ge z_\mathrm{P}$.  Similarly,
\begin{equation}
\begin{split}
|B_{12}(z)|=|B_{21}(z)|&=\frac{|%{\color{red}1-|s(z)|^2e^{2f(z)/\epsilon}}{\color{blue}
|s(z)|^2e^{2f(z)/\epsilon}-1%}
|}{2|s(z)|e^{f(z)/\epsilon}\cosh(f(z)/\epsilon)}\\
&\le\frac{|%{\color{red}1-|s(z)|^2e^{2f(z)/\epsilon}}{\color{blue}
|s(z)|^2e^{2f(z)/\epsilon}-1%}
|}{2|s(z)|e^{f(z)/\epsilon}}\\
&=\text{exponentially small in $\epsilon$},\quad\text{uniformly for $z_\mathrm{P}<z<z_\mathrm{P}+\eta$}.
\end{split}
\end{equation}
Therefore \eqref{eq:cxPjumpzpositive2} and \eqref{eq:cxPjumpzpositive3} can be combined to read
\begin{equation}
\mathbf{P}_+(z)=i^{\sigma_3}\mathbf{P}_-(z)i^{-\sigma_3}(\mathbb{I}+\text{uniformly exponentially small in $\epsilon$}),\quad |z-z_\mathrm{P}|<\eta.
\label{eq:cxPjumpzpositive2-3rewrite}
\end{equation}

Now we describe the jump conditions for $\mathbf{P}(z)$ across the various non-real contours pictured in Figure~\ref{fig:EllipticContours} (it suffices to consider those in the open upper half-plane only due to the symmetry 
\begin{equation}
\mathbf{P}(z)=\sigma_1\mathbf{P}(z^*)^*\sigma_1
\end{equation}
inherited from \eqref{eq:NSchwarz} and \eqref{eq:cxOSchwarz} via the fact that $g(z^*)=g(z)^*$).
First consider the arc $\mathcal{B}$ separating the regions $\mathcal{L}$ and $\mathcal{R}$ in Figure~\ref{fig:EllipticContours}, oriented from $z=q$ toward $z=z_\mathrm{P}$.  Then recalling the function $h(z)$ defined in terms of $g(z)$ by \eqref{eq:hdef}, we have
\begin{equation}
\mathbf{P}_+(z)=\mathbf{P}_-(z)\begin{bmatrix}0 & %{\color{red}-}
e^{i(h_+(z)+h_-(z))/\epsilon}\\
%{\color{blue}
-%}
e^{-i(h_+(z)+h_-(z))/\epsilon} & 0\end{bmatrix}=
\mathbf{P}_-(z)\begin{bmatrix}0 &%{\color{red} -}
e^{i\kappa/\epsilon}\\%{\color{blue}
-%}
e^{-i\kappa/\epsilon} & 0\end{bmatrix},\quad z\in \mathcal{B},
\label{eq:cxPjumpB}
\end{equation}
where we have used \eqref{eq:h-equilibrium}. % \textcolor{red}{condition (g4)}. 
 Next, let $\mathcal{B}_+$ denote the oriented arc in Figure~\ref{fig:EllipticContours} from $z=q$ to $z=z_\mathrm{P}+\eta$.  Then
\begin{equation}
\mathbf{P}_+(z)=\mathbf{P}_-(z)\begin{bmatrix}1 & 0 \\ %{\color{red}-}
e^{-2ih(z)/\epsilon} & 1\end{bmatrix}\tilde{E}(z)^{-\sigma_3/2},\quad z\in \mathcal{B}_+.
\label{eq:cxPjumpBplus}
\end{equation}
Similarly, let $\mathcal{B}_-$ denote the oriented arc in Figure~\ref{fig:EllipticContours} from $z=q$ to $z=z_\mathrm{P}-\eta$. Then
\begin{equation}
\mathbf{P}_+(z)=\mathbf{P}_-(z)\tilde{E}(z)^{\sigma_3/2}\begin{bmatrix}1 & 0 \\ %{\color{red}-}
e^{-2ih(z)/\epsilon}&1\end{bmatrix},\quad z\in \mathcal{B}_-.
\label{eq:cxPjumpBminus}
\end{equation}
Finally, consider the unbounded arc of $\Sigma$ in the upper half-plane, which we denote by $\mathcal{C}$.  We assume that $\mathcal{C}$ is oriented from $z=-\infty$ toward $z=q$.  Then
\begin{equation}
\mathbf{P}_+(z)=\mathbf{P}_-(z)\tilde{E}(z)^{\sigma_3/2}\begin{bmatrix}1 & %{\color{red}-}
e^{2ih(z)/\epsilon}\\0 & 1\end{bmatrix}\tilde{E}(z)^{-\sigma_3/2},\quad z\in \mathcal{C}.
\label{eq:cxPjumpCminusB}
\end{equation}
It follows from the inequalities \eqref{eq:Imh-Mountain}--\eqref{eq:Imh-Valley}  and the asymptotic estimate \eqref{eq:tildeEasymp} that all three of the jump conditions \eqref{eq:cxPjumpBplus}--\eqref{eq:cxPjumpCminusB} are of the form $\mathbf{P}_+(z)=\mathbf{P}_-(z)(\mathbb{I}+O(\epsilon))$ uniformly for $z$ on the relevant contours bounded away from the common endpoint $z=q$.  In fact, for such $z$ the dominant contribution to the error comes from the matrix factors $\tilde{E}(z)^{\pm\sigma_3/2}$ as $e^{\pm 2ih(z)/\epsilon}$ is in each case exponentially small.  Moreover, due to the estimate \eqref{eq:Im-h-tail}, along the unbounded contour $\mathcal{C}$, we have that $\mathbf{P}_-(z)^{-1}\mathbf{P}_+(z)-\mathbb{I}$ is exponentially decaying as $z\to\infty$ for each sufficiently small $\epsilon>0$. 

\subsection{Parametrix construction}
Let $D$ denote a small open disk of radius independent of $\epsilon$ centered at the point $z=q$.  We
now define an ad-hoc approximation to $\mathbf{P}(z)$, the \emph{global parametrix} $\dot{\mathbf{P}}(z)$:
\begin{equation}
\dot{\mathbf{P}}(z):=\begin{cases}
\dot{\mathbf{P}}^{\mathrm{in}}(z),&\quad z\in D\setminus\Sigma\\
\sigma_1\dot{\mathbf{P}}^{\mathrm{in}}(z^*)^*\sigma_1,&\quad z\in D^*\setminus\Sigma\\
\dot{\mathbf{P}}^{\mathrm{out}}(z),&\quad z\in\mathbb{C}\setminus (\Sigma\cup\overline{D}\cup\overline{D}^*).
\end{cases}
\end{equation}
We call $\dot{\mathbf{P}}^\mathrm{out}(z)$ the \emph{outer parametrix} and we call
$\dot{\mathbf{P}}^\mathrm{in}(z)$ the \emph{inner parametrix}.  We emphasize that there is no special parametrix needed in any neighborhood of the point $z=z_\mathrm{P}$ from which
the phantom poles of $s(z)$ emerge into the upper half-plane.
\subsubsection{The outer parametrix}
The outer parametrix is easy to write down explicitly:
\begin{equation}
\dot{\mathbf{P}}^{\mathrm{out}}(z):=e^{i\kappa\sigma_3/(2\epsilon)}
\mathbf{U}\beta(i(-z)^{1/2})^{\sigma_3}\mathbf{U}^\dagger e^{-i\kappa\sigma_3/(2\epsilon)},\quad \mathbf{U}:=\frac{1}{\sqrt{2}}\begin{bmatrix}e^{%{\color{red}-}
i\pi/4} & e^{%{\color{blue}
-%}
i\pi/4}\\e^{%{\color{blue}
-%}
i\pi/4} & e^{%{\color{red}-}
i\pi/4}\end{bmatrix},\quad
\mathbf{U}^{-1}=\mathbf{U}^\dagger.
\end{equation}
Here, $i(-z)^{1/2}\in\mathbb{C}_+$ and $\beta(k)$ is the function whose fourth power is given explicitly by
\begin{equation}
\beta(k)^4:=\frac{(k-q^{1/2})(k+q^{*1/2})}{(k-q^{*1/2})(k+q^{1/2})}
\end{equation}
where the principal branch of the square root is meant in all cases,  whose branch cuts lie along the arcs of $\pm(\mathcal{B}\cup \mathcal{B}^*)^{1/2}$, and for which the specific branch of the one-fourth power is selected so that $\beta(0)=1$.  It is easy to see that $\beta(k)$ is analytic along the imaginary $k$-axis, and since $\beta(k)^4>0$ for imaginary $k$, $\beta(k)>0$ for imaginary $k$ as well, leading to the conclusion that $\beta(\infty)=1$.  This allows us to asymptotically expand $\beta(k)$ for large $k$, leading to the corresponding expansion of $\dot{\mathbf{P}}^\mathrm{out}(z)$:
\begin{equation}
\dot{\mathbf{P}}^\mathrm{out}(z)=\mathbb{I} +\begin{bmatrix}0 &%{\color{red}-}
\Im\{q^{1/2}\}e^{i\kappa/\epsilon}\\
%{\color{blue}
-%}
\Im\{q^{1/2}\}e^{-i\kappa/\epsilon} & 0\end{bmatrix}\frac{1}{i(-z)^{1/2}} + O(z^{-1}),\quad z\to\infty.
\label{eq:cxdotPoutasymp}
\end{equation}
It is easy to check that $\dot{\mathbf{P}}^{\mathrm{out}}(z)$ is analytic for $z\in\mathbb{C}\setminus (\mathcal{B}\cup \mathcal{B}^*\cup\mathbb{R}_+)$, and that it is bounded independently of $\epsilon$ uniformly for $z\in\mathbb{C}\setminus (D\cup D^*)$ and has determinant one.  Also, $\dot{\mathbf{P}}^\mathrm{out}(0)=\mathbb{I}$, $\dot{\mathbf{P}}^\mathrm{out}(z^*)^*=\sigma_1\dot{\mathbf{P}}^\mathrm{out}(z)\sigma_1$, and 
\begin{equation}
\dot{\mathbf{P}}^\mathrm{out}_+(z)=\dot{\mathbf{P}}^\mathrm{out}_-(z)\begin{bmatrix}
0 &% {\color{red}-}
e^{i\kappa/\epsilon}\\%{\color{blue}
-%}
e^{-i\kappa/\epsilon} & 0\end{bmatrix},\quad z\in B,
\end{equation}
which should be compared with \eqref{eq:cxPjumpB}, while $\dot{\mathbf{P}}^\mathrm{out}_+(z)=i^{\sigma_3}\dot{\mathbf{P}}^\mathrm{out}_-(z)i^{-\sigma_3}$ for $z>0$, which should be compared with \eqref{eq:cxPjumpzpositive1rewrite}, \eqref{eq:cxPjumpzpositive4rewrite}, and \eqref{eq:cxPjumpzpositive2-3rewrite}.

\subsubsection{The inner parametrix}
We now construct the inner parametrix in terms of Airy functions \cite{DLMF}, following closely the discussion in section 5.2 of \cite{SG1}. Note that as a consequence of the continuity of $h(z)$ at $z=q$ and the condition \eqref{eq:h-equilibrium} we have $2h(q)=\kappa$.  Then, using \eqref{eq:hprimeSquared} shows that the equation
\begin{equation}
y^3=(2ih(z)-i\kappa)^2
\end{equation}
defines three different univalent functions $y$ of $z$ in a neighborhood of $z=q$ each of which maps $z=q$ to $y=0$ and which differ by factors of the cube roots of unity; we choose the 
branch $y=y(z)$ for which the image of $\mathcal{B}\cap D$ is a segment of the negative real $y$-axis that abuts the origin.  We then set $\zeta:=\epsilon^{-2/3}y(z)$.  Then for $z\in \mathcal{B}_+\cup \mathcal{B}_-\cup\mathcal{C}$ we have $2ih(z)-i\kappa=-\epsilon\zeta^{3/2}$ where the principal branch of the $3/2$ power is meant.  Note that in terms of $\zeta$ we may write the outer parametrix in the form
\begin{equation}
\dot{\mathbf{P}}^\mathrm{out}(z)=e^{i\kappa\sigma_3/(2\epsilon)}\mathbf{H}(z)\epsilon^{\sigma_3/6}\zeta^{\sigma_3/4}\mathbf{U}^\dagger e^{-i\kappa\sigma_3/(2\epsilon)},\quad z\in \overline{D},
\end{equation}
where $\mathbf{H}(z)$ is an $\epsilon$-independent analytic matrix function with determinant one, defined for $z\in \overline{D}$ by
\begin{equation}
\mathbf{H}(z):=\mathbf{U}[y(z)^{-1/4}\beta(i(-z)^{1/2})]^{\sigma_3},\quad z\in\overline{D}.
\end{equation}
That $\mathbf{H}(z)$ is analytic follows from the fact that $y(z)^{-1/4}\beta(i(-z)^{1/2})$ extends from $D\setminus \mathcal{B}$ to all of $D$ as a single-valued non-vanishing analytic function.  Now for convenience write
$\xi:=(\frac{3}{4})^{2/3}\zeta$ and define a matrix in the $\zeta$-plane by the formulae:
\begin{equation}
\mathbf{Z}(\zeta):=\sqrt{2\pi}\left(\frac{4}{3}\right)^{\sigma_3/6}\begin{bmatrix}
e^{-3\pi i/4}\mathrm{Ai}'(\xi) & e^{11\pi i/12}\mathrm{Ai}'(\xi e^{-2\pi i/3})\\
e^{-\pi i/4}\mathrm{Ai}(\xi) & e^{\pi i/12}\mathrm{Ai}(\xi e^{-2\pi i/3})\end{bmatrix}
e^{2\xi^{3/2}\sigma_3/3},\quad 0<\arg(\zeta)<\frac{2\pi}{3},
\end{equation}
\begin{equation}
\mathbf{Z}(\zeta):=\sqrt{2\pi}\left(\frac{4}{3}\right)^{\sigma_3/6}\begin{bmatrix}
e^{-5\pi i/12}\mathrm{Ai}'(\xi e^{2\pi i/3}) & e^{11\pi i/12}\mathrm{Ai}'(\xi e^{-2\pi i/3})\\
e^{-7\pi i/12}\mathrm{Ai}(\xi e^{2\pi i/3}) & e^{\pi i/12}\mathrm{Ai}(\xi e^{-2\pi i/3})\end{bmatrix}
e^{2\xi^{3/2}\sigma_3/3},\quad\frac{2\pi}{3}<\arg(\zeta)<\pi,
\end{equation}
\begin{equation}
\mathbf{Z}(\zeta):=\sqrt{2\pi}\left(\frac{4}{3}\right)^{\sigma_3/6}\begin{bmatrix}
e^{11\pi i/12}\mathrm{Ai}'(\xi e^{-2\pi i/3}) & e^{7\pi i/12}\mathrm{Ai}'(\xi e^{2\pi i/3})\\
e^{\pi i/12}\mathrm{Ai}(\xi e^{-2\pi i/3}) & e^{5\pi i/12}\mathrm{Ai}(\xi e^{2\pi i/3})\end{bmatrix}
e^{2\xi^{3/2}\sigma_3/3},\quad -\pi<\arg(\zeta)<-\frac{2\pi}{3},
\end{equation}
\begin{equation}
\mathbf{Z}(\zeta):=\sqrt{2\pi}\left(\frac{4}{3}\right)^{\sigma_3/6}\begin{bmatrix}
e^{-3\pi i/4}\mathrm{Ai}'(\xi) & e^{7\pi i/12}\mathrm{Ai}'(\xi e^{2\pi i/3})\\
e^{-\pi i/4}\mathrm{Ai}(\xi) & e^{5\pi i/12}\mathrm{Ai}(\xi e^{2\pi i/3})\end{bmatrix}
e^{2\xi^{3/2}\sigma_3/3},\quad -\frac{2\pi}{3}<\arg(\zeta)<0.
\end{equation}
From the well-known asymptotic behavior \cite{DLMF} of $\mathrm{Ai}(\xi)$ and $\mathrm{Ai}'(\xi)$, it follows
that
\begin{equation}
\mathbf{Z}(\zeta)\mathbf{U}\zeta^{-\sigma_3/4}=\mathbb{I}+\begin{bmatrix}
O(\zeta^{-3/2}) & O(\zeta^{-1})\\O(\zeta^{-2}) & O(\zeta^{-3/2})\end{bmatrix},
\label{eq:Airyasymp}
\end{equation}
as $\zeta\to\infty$ in all directions of the complex plane.  We define the inner parametrix by setting
\begin{equation}
\dot{\mathbf{P}}^\mathrm{in}(z):=e^{i\kappa\sigma_3/(2\epsilon)}\mathbf{H}(z)\epsilon^{\sigma_3/6}\mathbf{Z}(\zeta)e^{-i\kappa\sigma_3/(2\epsilon)}\mathbf{D}(z),\quad z\in D\setminus\Sigma,
\end{equation}
where $\mathbf{D}(z)$ is a near-identity diagonal matrix factor given by
\begin{equation}
\mathbf{D}(z):=\begin{cases}\mathbb{I},&\quad z\in D\cap(\mathcal{L}\cup\mathcal{R})\\
\tilde{E}(z)^{-\sigma_3/2},&\quad\text{elsewhere in $D\setminus\Sigma$}.
\end{cases}
\end{equation}
It is a calculation to confirm that the following jump conditions hold, which should be compared with
\eqref{eq:cxPjumpB}--\eqref{eq:cxPjumpCminusB}:
\begin{equation}
\dot{\mathbf{P}}^\mathrm{in}_+(z)=\dot{\mathbf{P}}^\mathrm{in}_-(z)\begin{bmatrix}0 & 
%{\color{red}-}
e^{i\kappa/\epsilon}\\%{\color{blue}
-%}
e^{-i\kappa/\epsilon} & 0\end{bmatrix},\quad
z\in \mathcal{B}\cap D,
\end{equation}
\begin{equation}
\dot{\mathbf{P}}^\mathrm{in}_+(z)=\dot{\mathbf{P}}^\mathrm{in}_-(z)\begin{bmatrix}
1 & 0\\
%{\color{red}-}
e^{-2ih(z)/\epsilon} & 1\end{bmatrix}\tilde{E}(z)^{-\sigma_3/2},\quad z\in \mathcal{B}_+\cap D,
\end{equation}
\begin{equation}
\dot{\mathbf{P}}^\mathrm{in}_+(z)=\dot{\mathbf{P}}^\mathrm{in}_-(z)\tilde{E}(z)^{\sigma_3/2}
\begin{bmatrix}1 & 0\\%{\color{red} -}
e^{-2ih(z)/\epsilon} & 1\end{bmatrix},\quad z\in \mathcal{B}_-\cap D,
\end{equation}
and
\begin{equation}
\dot{\mathbf{P}}^\mathrm{in}_+(z)=\dot{\mathbf{P}}^\mathrm{in}_-(z)
\tilde{E}(z)^{\sigma_3/2}\begin{bmatrix}1 & %{\color{red}-}
e^{2ih(z)/\epsilon}\\0 & 1\end{bmatrix}
\tilde{E}(z)^{-\sigma_3/2},\quad z\in \mathcal{C}\cap D.
\end{equation}
That is, the inner parametrix satisfies \emph{exactly} the same jump conditions satisfied by $\mathbf{P}(z)$ in a neighborhood $D$ of $z=q$.
\subsubsection{Mismatch of the global parametrix across $\partial D$}
It is not difficult to see that the inner and outer parametrices match each other quite well on the boundary of the disk $D$.  Indeed, we have
%Note also that when comparing boundary values on the boundary of the disk $D$,
\begin{multline}
\dot{\mathbf{P}}^\mathrm{in}(z)\dot{\mathbf{P}}^\mathrm{out}(z)^{-1}=e^{i\kappa\sigma_3/(2\epsilon)}\mathbf{H}(z)\epsilon^{\sigma_3/6}\mathbf{Z}(\zeta)\mathbf{U}
\zeta^{-\sigma_3/4}\epsilon^{-\sigma_3/6}\mathbf{H}(z)^{-1}e^{-i\kappa\sigma_3/\epsilon}\cdot\dot{\mathbf{P}}^\mathrm{out}(z)\mathbf{D}(z)\dot{\mathbf{P}}^\mathrm{out}(z)^{-1},\\z\in\partial D.
\end{multline}
But, since $\mathbf{D}(z)=\mathbb{I}+O(\epsilon)$ while $\dot{\mathbf{P}}^\mathrm{out}(z)$ and its inverse are uniformly bounded for $z\in \partial D$, the latter factor is obviously $\mathbb{I}+O(\epsilon)$.  Also, since $z\in \partial D$ corresponds to $\zeta^{-1}=O(\epsilon^{2/3})$, from \eqref{eq:Airyasymp} we have that $\epsilon^{\sigma_3/6}\mathbf{Z}(\zeta)\mathbf{U}\zeta^{-\sigma_3/4}\epsilon^{-\sigma_3/6}=\mathbb{I}+O(\epsilon)$ for $z\in\partial D$.  It follows that
\begin{equation}
\dot{\mathbf{P}}^\mathrm{in}(z)\dot{\mathbf{P}}^\mathrm{out}(z)^{-1}=\mathbb{I}+O(\epsilon),\quad
\text{uniformly for $z\in\partial D$.}
\end{equation}
\subsection{Error analysis.}  Let $\mathbf{E}(z):=\mathbf{P}(z)\dot{\mathbf{P}}(z)^{-1}$ denote the
mismatch between the (not explicitly known) matrix function $\mathbf{P}(z)$ and its explicit global parametrix $\dot{\mathbf{P}}(z)$.  Because $\mathbf{P}(z)$ and $\dot{\mathbf{P}}(z)$ satisfy exactly the same jump conditions on the contours $\mathcal{B}$, $\mathcal{B}^*$, and $\Sigma\cap (D\cup D^*)$,  $\mathbf{E}(z)$ may be extended analytically to the latter contours.  On the other hand, $\mathbf{E}(z)$ has jumps across the disk boundaries $\partial D$ and $\partial D^*$ where $\mathbf{P}(z)$ has no jump.
That is, $\mathbf{E}(z)$ may be regarded as an analytic matrix function of $z\in\mathbb{C}\setminus\Sigma_\mathbf{E}$, where $\Sigma_\mathbf{E}$ is the contour shown in Figure~\ref{fig:EllipticError}.
\begin{figure}[h]
\begin{center}
\includegraphics{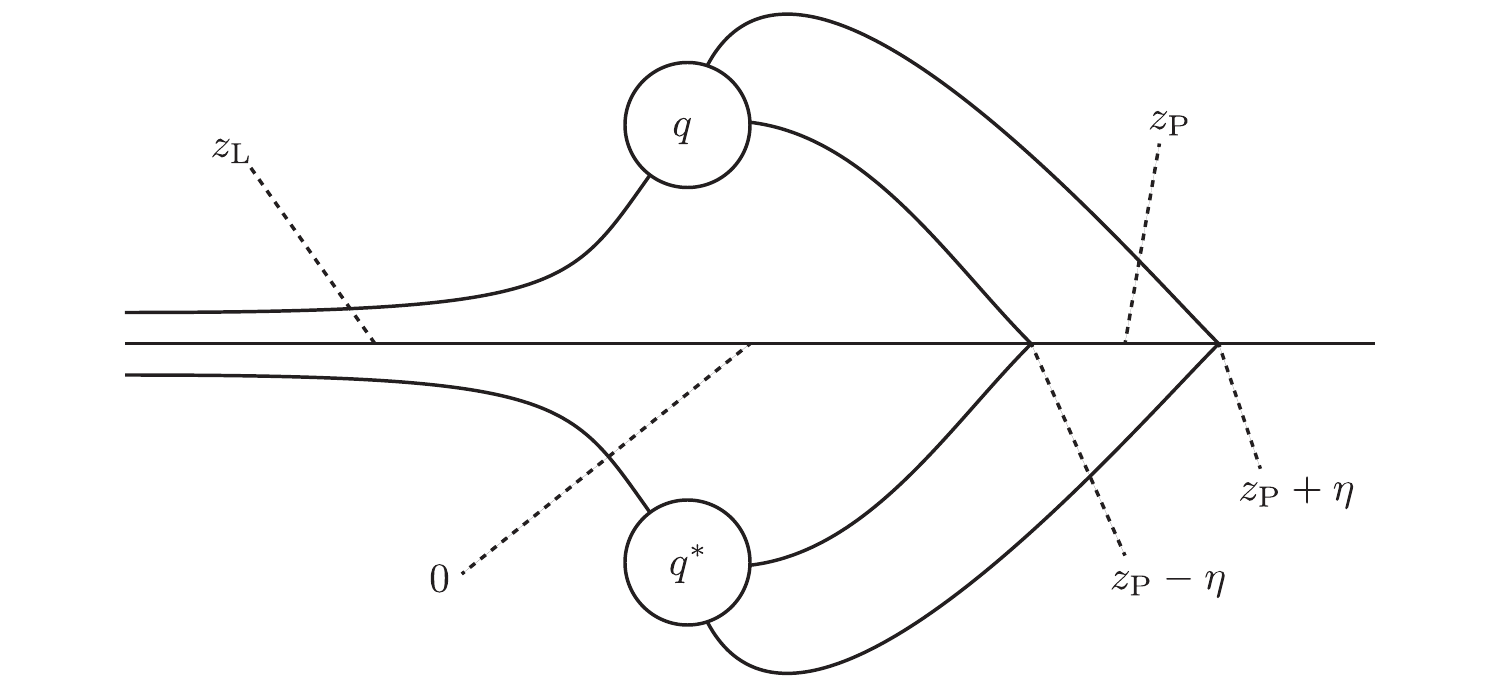}
\end{center}
\caption{The jump contour $\Sigma_\mathbf{E}$ for $\mathbf{E}(z)$.  The disk boundaries are taken to be oriented in the clockwise direction, and all other contour arcs are oriented left-to-right. }
\label{fig:EllipticError}
\end{figure}
By combining the jump conditions satisfied by $\mathbf{P}(z)$ with those known to be satisfied by the parametrix $\dot{\mathbf{P}}(z)$, it is a simple calculation to see that on the complex arcs of $\Sigma_\mathbf{E}$ as well as for $z<0$, we have that $\mathbf{E}_-(z)^{-1}\mathbf{E}_+(z)-\mathbb{I}$ is $O(\epsilon)$ in both the $L^\infty(\Sigma_\mathbf{E}\setminus\mathbb{R}_+)$ and the $L^2(\Sigma_\mathbf{E}\setminus\mathbb{R}_+)$ senses.  For $z>0$ we have that $[i^{\sigma_3}\mathbf{E}_-(z)i^{-\sigma_3}]^{-1}\mathbf{E}_+(z)-\mathbb{I}$ is $O(\epsilon)$ (in fact, exponentially small) in both the $L^\infty(\mathbb{R}_+)$
and $L^2(\mathbb{R}_+)$ senses.  Because both $\mathbf{P}(0)=\mathbb{I}$ and also $\dot{\mathbf{P}}(0)=\dot{\mathbf{P}}^\mathrm{out}(0)=\mathbb{I}$, we have $\mathbf{E}(0)=\mathbb{I}$.

The unusual nature of the jump condition on $\mathbb{R}_+$ can be understood by introducing the
matrix function $\mathbf{F}(w)$ defined by
\begin{equation}
\mathbf{F}(w)=\begin{cases}i^{\sigma_3}\mathbf{E}(w^{-2})i^{-\sigma_3},&\quad\Im\{w\}>0
\\\mathbf{E}(w^{-2}),&\quad\Im\{w\}<0.
\end{cases}
\end{equation}
The jump contour $\Sigma_\mathbf{F}$ for $\mathbf{F}(w)$ consists of reciprocals of the positive and negative square roots of the complex arcs of $\Sigma_\mathbf{E}$ along with the real and imaginary $w$-axes.  It follows directly that $\mathbf{V}(w)-\mathbb{I}=O(\epsilon)$ in both the $L^\infty(\Sigma_\mathbf{F})$ and $L^2(\Sigma_\mathbf{F})$ senses, where $\mathbf{V}(w):=\mathbf{F}_-(w)^{-1}\mathbf{F}_+(w)$ for $w\in\Sigma_\mathbf{F}$.  Also we unambiguously calculate the value $\mathbf{F}(\infty)=\mathbb{I}$.  This problem is solved by writing $\mathbf{F}(w)$ in terms of a Cauchy integral
\begin{equation}
\mathbf{F}(w)=\mathbb{I}+\frac{1}{2\pi i}\int_{\Sigma_\mathbf{F}}\frac{(\mathbb{I}+\mathbf{X}(s))(\mathbf{V}(s)-\mathbb{I})}{s-w}\,ds,\quad w\in\mathbb{C}\setminus\Sigma_\mathbf{F},
\end{equation}
where $\mathbf{X}\in L^2(\Sigma_\mathbf{F})$ solves the singular integral equation
\begin{equation}
\mathbf{X}(w)-\frac{1}{2\pi i}\int_{\Sigma_\mathbf{F}}\frac{\mathbf{X}(s)(\mathbf{V}(s)-\mathbb{I})}{s-w_-}\,ds=\frac{1}{2\pi i}\int_{\Sigma_\mathbf{F}}\frac{\mathbf{V}(s)-\mathbb{I}}{s-w_-}\,ds,\quad
w\in\Sigma_\mathbf{F}.
\end{equation}
Here the the notation $w_-$ indicates that the integrals are to be evaluated for $w\not\in\Sigma_\mathbf{F}$ and the limit of $w$ approaching a point on $\Sigma_\mathbf{F}$ is to be taken from the right side according to the orientation of each arc.  By standard theory \cite{Zhou89}, this equation has a unique solution whose norm is $O(\epsilon)$.  Now since $\mathbf{V}(w)-\mathbb{I}$ vanishes to all orders as $w\to 0$ along $\Sigma_\mathbf{F}$, $\mathbf{F}(w)$ has a nonconvergent but asymptotic power series representation
\begin{equation}
\mathbf{F}(w)-\mathbb{I}\sim\sum_{n=0}^\infty\mathbf{F}_nw^n,\quad w\to 0,
\end{equation}
with coefficients given by
\begin{equation}
\mathbf{F}_n:=\frac{1}{2\pi i}\int_{\Sigma_\mathbf{F}}(\mathbb{I}+\mathbf{X}(s))(\mathbf{V}(s)-\mathbb{I})s^{-n-1}\,ds,\quad n\ge0.
\end{equation}
Now in addition to the estimates $\|\mathbf{X}\|_{L^2(\Sigma_\mathbf{F})}=O(\epsilon)$, 
$\|\mathbf{V}-\mathbb{I}\|_{L^\infty(\Sigma_\mathbf{F})}=O(\epsilon)$, and
$\|\mathbf{V}-\mathbb{I}\|_{L^2(\Sigma_\mathbf{F})}=O(\epsilon)$, we have from \eqref{eq:cxPjumpznegativerewrite} and \eqref{eq:cxPjumpzpositive4rewrite} that
$\mathbf{V}(w)-\mathbb{I}=O(e^{-M/(\epsilon |w|^2)})$ holds for sufficiently small (independent of $\epsilon$) $|w|$, $w\in\Sigma_\mathbf{F}$.  Assuming without loss of generality that $|w|<1$, 
this implies that $(\mathbf{V}(w)-\mathbb{I})w^{-n-1}=O(\epsilon^{n+1})$ uniformly for small $w$.
These estimates combined with the Cauchy-Schwarz inequality yield that $\mathbf{F}_n=O(\epsilon)$ for all $n\ge 0$.  Note that $\mathbf{F}_0$ is necessarily a diagonal matrix (for consistency with the jump conditions of $\mathbf{E}(z)$ for large $z$).

From this information, we can now obtain the asymptotic expansion of $\mathbf{P}(z)=\mathbf{E}(z)\dot{\mathbf{P}}(z)$ as $z\to\infty$:  taking $\Im\{w\}<0$ we have $w=-i(-z)^{-1/2}$, and so $\mathbf{E}(z)=\mathbf{F}(-i(-z)^{-1/2})$.  Therefore, since $\dot{\mathbf{P}}(z)=\dot{\mathbf{P}}^\mathrm{out}(z)$ for large $z$, using \eqref{eq:cxdotPoutasymp} gives
\begin{equation}
\mathbf{P}(z)=\mathbb{I}+\mathbf{F}_0 +\left( (\mathbb{I}+\mathbf{F}_0)\begin{bmatrix}
0 & %{\color{red}-}
\Im\{q^{1/2}\}e^{i\kappa/\epsilon}\\% {\color{blue}
-%}
\Im\{q^{1/2}\}e^{-i\kappa/\epsilon} & 0\end{bmatrix}-\mathbf{F}_1\right)\frac{1}{i(-z)^{1/2}} + O(z^{-1}),\quad z\to\infty.
\end{equation}
Finally, since for large $z\not\in \Lambda\cup\Lambda^*$, we have $\mathbf{M}(i(-z)^{1/2})=\mathbf{N}(z)=\mathbf{O}(z)=\mathbf{P}(z)e^{-ig(z)\sigma_3/\epsilon}$, we see from \eqref{eq:phirecover} that the solution of the MNLS Cauchy problem is
\begin{equation}
\phi(x,t)=\mathop{\lim_{z\to\infty}}_{z\not\in\Lambda\cup\Lambda^*}\frac{2i(-z)^{1/2}}{\alpha}\frac{M_{12}(i(-z)^{1/2})}{M_{22}(i(-z)^{1/2})}=
\mathop{\lim_{z\to\infty}}_{z\not\in\Lambda\cup\Lambda^*}\frac{2i(-z)^{1/2}}{\alpha}\frac{P_{12}(z)}{P_{22}(z)} = %{\color{red}-}
\frac{2}{\alpha}\Im\{q^{1/2}\}e^{i\kappa/\epsilon} + O(\epsilon),\quad\epsilon\to 0.
\end{equation}
This asymptotic formula has the form written in the statement of Theorem~\ref{thm:main} where
\begin{equation}
A(x,t):=\frac{2}{\alpha}\Im\{q(x,t)^{1/2}\}\quad\text{and}\quad S(x,t):=\kappa(x,t).
\label{eq:A-S-xt}
\end{equation}
That the derived fields $\rho(x,t):=A(x,t)^2$ and $u(x,t):=S_x(x,t)$ solve the dispersionless MNLS system \eqref{eq:DispersionlessMNLS} has already been shown, since
according to \eqref{eq:kappa-x} we can express $u(x,t)$ explicitly in terms of $q(x,t)$, and hence
\begin{equation}
\rho(x,t)=\frac{4}{\alpha^2}\left(\Im\{q(x,t)^{1/2}\}\right)^2\quad\text{and}\quad u(x,t)=\frac{1}{\alpha}\left(1-4|q(x,t)|\right).
\label{eq:rho-u-xt}
\end{equation}
We have already seen (see the discussion at the end of \S\ref{sec:q}) that these substitutions reduce the dispersionless MNLS system to diagonal (Riemann invariant) form, reproducing exactly the partial differential equations \eqref{eq:cxWhitham} that were proved in Proposition~\ref{prop:cxWhitham} to be satisfied by $q(x,t)$ and $q(x,t)^*$.  Next, we show that $A(x,0)=A_0(x)$ and that $S(x,0)=S_0(x)$.  To see this, we note that according to Proposition~\ref{prop:momentscxtzero}, we have $q(x,0)=\mathfrak{z}(x)$ for $x<x_\mathrm{c}$, where
$z=\mathfrak{z}(x)$ is the solution in the upper half-plane of the turning point equation 
\eqref{eq:TPCgeneral}; this implies that the quadratic $\chi(x;z)$ is a constant multiple of the factored form $(z-\mathfrak{z}(x))(z-\mathfrak{z}(x)^*)=(z-q(x,0))(z-q(x,0)^*)$, which in turn implies the identities:
\begin{equation}
\alpha^2\rho_0(x)+\frac{1}{2}(\alpha u_0(x)-1)=-\left[q(x,0)+q(x,0)^*\right]\quad\text{and}\quad
\frac{1}{16}\left(\alpha u_0(x)-1\right)^2=|q(x,0)|^2.
\label{eq:two-identities}
\end{equation}
Under the condition that $u_0(x)<1/\alpha$, it is easy to see that \eqref{eq:rho-u-xt} and \eqref{eq:two-identities} imply that 
\begin{equation}
\rho(x,0)=\rho_0(x)\quad\text{and}\quad u(x,0)=u_0(x).
\end{equation}
But because $\rho_0(x)>0$, the condition $u_0(x)<1/\alpha$ follows from the inequality $Q<0$ that holds for $x<x_\mathrm{c}$ at $t=0$.  It is obvious from \eqref{eq:A-S-xt} that $A(x,t)>0$ for all $x<x_\mathrm{c}(t)$ and $t\ge 0$, and hence we can take a positive square root to find that
$A(x,0)=A_0(x)=\mathrm{sech}(x)$.  Since we already have $u(x,0)=u_0(x)$, to prove that $S(x,0)=S_0(x)$ we simply need to establish the latter identity at some point $x<x_\mathrm{c}$.  In fact, we examine the limit $x\uparrow x_\mathrm{c}$; since  according to \eqref{eq:data} we have $S_0(x_\mathrm{c})=\delta x_\mathrm{c}+\mu\log(\cosh(x_\mathrm{c}))$, we simply compare with the explicit expression for the limiting value of $S(x,0)=\kappa(x,0)$ as $x\uparrow x_\mathrm{c}$
as given by \eqref{eq:KappaCrit}; thus the problem is reduced to checking the equality of two explicit functions of the three real parameters $\alpha$, $\delta$, and $\mu$ over the region constrained by the three inequalities \eqref{eq:condition-transsonic}--\eqref{eq:condition-no-zeros}.   We have confirmed this equality numerically, and we therefore conclude that indeed $S(x,0)=S_0(x)=\delta x+\mu\log(\cosh(x))$ holds for all $x<x_\mathrm{c}$.

\end{document}